\documentclass[11pt]{article}
\pdfoutput=1

\usepackage[english]{babel}
\usepackage[utf8x]{inputenc}
\usepackage[T1]{fontenc}
\usepackage[letterpaper,top=1in,bottom=1in,left=1in,right=1in,marginparwidth=1.75cm]{geometry}
\usepackage{dsfont}
\usepackage{amsmath, amssymb, amsthm, verbatim,enumerate,bbm}
\usepackage{indentfirst, bbm}
\usepackage{appendix}
\usepackage{enumerate}
\usepackage{tikz}
\usepackage{verbatim}
\usepackage[colorinlistoftodos]{todonotes}
\usepackage[colorlinks=true, allcolors=blue]{hyperref}
\usepackage[nameinlink,capitalise,noabbrev]{cleveref}
\usepackage{subfig}
\usepackage{algorithm,algpseudocode}
\usepackage{enumitem}
\crefname{appsec}{Appendix}{Appendices}
\creflabelformat{enumi}{#2textup{#1}#3}

\usepackage{thm-restate}
\usepackage{thmtools}
\usepackage{wrapfig}

\allowdisplaybreaks
\makeatletter
\newtheorem{theorem}{Theorem}[subsection]
\renewcommand{\thetheorem}{\ifnum\value{subsection}=0 \thesection\else\thesubsection\fi.\arabic{theorem}}

\newtheorem{lemma}[theorem]{Lemma}

\newtheorem{claim}[theorem]{Claim}

\newtheorem{corollary}[theorem]{Corollary}

\newtheorem*{question*}{Question}

\theoremstyle{definition}
\newtheorem{definition}[theorem]{Definition}

\newtheorem{remark}[theorem]{Remark}

\newtheorem{example}{Example}
\theoremstyle{plain}

\definecolor{jgreen}{RGB}{0, 153, 51}
\makeatother

{\makeatletter
	\gdef\xxxmark{%
		\expandafter\ifx\csname @mpargs\endcsname\relax 
		\expandafter\ifx\csname @captype\endcsname\relax 
		\marginpar{xxx}
		\else
		xxx 
		\fi
		\else
		xxx 
		\fi}
	\gdef\xxx{\@ifnextchar[\xxx@lab\xxx@nolab}
	\long\gdef\xxx@lab[#1]#2{{\bf [\xxxmark #2 ---{\sc #1}]}}
	\long\gdef\xxx@nolab#1{{\bf [\xxxmark #1]}}
}

\makeatletter
\let\orgdescriptionlabel\descriptionlabel
\renewcommand*{\descriptionlabel}[1]{%
  \let\orglabel\label
  \let\label\@gobble
  \phantomsection
 \edef\@currentlabel{#1\unskip}%
  \let\label\orglabel
  \orgdescriptionlabel{#1}%
}
\makeatother

\title{Motif Cut Sparsifiers}
\author{Michael Kapralov\\EPFL \and Mikhail Makarov\\EPFL \and Sandeep Silwal\\MIT \and Christian Sohler\\University of Cologne \and Jakab Tardos\\EPFL}

\newcommand{\tO}{\widetilde{O}}

\begin{document}
\pagenumbering{gobble}
\maketitle

\begin{abstract}

A \emph{motif} is a frequently occurring subgraph of a given directed or undirected graph $G$~\cite{milo2002}. Motifs capture higher order organizational structure of $G$ beyond edge relationships, and, therefore, have found wide applications such as in graph clustering, community detection, and analysis of biological and physical networks to name a few~\cite{benson2016,tsourakakis2017}. In these applications, the cut structure of motifs plays a crucial role as vertices are partitioned into clusters by cuts whose conductance is based on the number of instances of a particular motif, as opposed to just the number of edges, crossing the cuts. 

\if 0 
This mimics the edge setting where much algorithmic progress has been made and thus raises the natural question of whether it is possible to approximate the motif cut structure of a given graph using few edges.
\fi

In this paper, we introduce the concept of a motif cut sparsifier. We show that one can compute in polynomial time a sparse weighted subgraph $G'$ with only $\tO(n/\epsilon^2)$ edges such that for every cut, the weighted number of copies of $M$ crossing the cut in $G'$ is within a $1+\epsilon$ factor of the number of copies of $M$ crossing the cut in $G$, {\em for every constant size motif $M$}.

Our work carefully combines the viewpoints of both graph sparsification and hypergraph sparsification. We sample edges which requires us to extend and strengthen the concept of cut sparsifiers introduced in the seminal work of~\cite{karger1999random} and~\cite{benczur_randomized_2015} to the motif setting. The task of adapting the importance sampling framework common to efficient graph sparsification algorithms to the motif setting turns out to be nontrivial due to the fact that cut sizes in a random subgraph of $G$ depend non-linearly on the sampled edges. To overcome this, we adopt the viewpoint of hypergraph sparsification to define edge sampling probabilities which are derived from the strong connectivity values of a hypergraph whose hyperedges represent motif instances. Finally, an iterative sparsification primitive inspired by both viewpoints is used to reduce the number of edges in $G$ to nearly linear. 

In addition, we present a strong lower bound ruling out a similar result for sparsification with respect to {\em induced} occurrences of motifs.

\if 0 
Our results are based on adapting and extending the techniques of \cite{benczur_randomized_2015} to the motif case. Specifically, we define strengths of motifs, rather than edges, and use motif strengths to further define edge importance. The edges of our sparsifier are sampled across multiple rounds based on our notion of edge importance. Our results also extend to the case where we wish to simultaneously sparsify the cut structure of a set of motifs using a single graph.
\fi
\end{abstract}

\newpage

\global\long\def\R{\mathbb{R}}

\global\long\def\S{\mathbb{S}}

\global\long\def\Z{\mathbb{Z}}

\global\long\def\C{\mathbb{C}}

\global\long\def\Q{\mathbb{Q}}


\global\long\def\P{\mathbb{P}}

\global\long\def\F{\mathbb{F}}

\global\long\def\U{\mathcal{U}}

\global\long\def\V{\mathcal{V}}

\global\long\def\E{\mathbb{E}}

\global\long\def\Ev{\mathscr{Rk}}

\global\long\def\Dg{\mathscr{D}}

\global\long\def\Ndg{\mathscr{ND}}

\global\long\def\Rv{\mathcal{R}}

\global\long\def\Gv{\mathscr{Null}}

\global\long\def\Hv{\mathscr{Orth}}

\global\long\def\Supp{{\bf Supp}}

\global\long\def\Sv{\mathscr{Spt}}

\global\long\def\ring{\mathfrak{R}}

\global\long\def\Bad{{\boldsymbol{B}}}

\global\long\def\supp{{\bf supp}}

\global\long\def\A{\mathcal{A}}

\global\long\def\L{\mathcal{L}}

\newcommand{\event}{\mathcal E}

\newcommand{\p}{\mathbb P}

\newcommand{\eps}{\varepsilon}

\newcommand{\1}{\boldsymbol 1}
\newcommand{\N}{\mathcal{N}}

\newcommand{\e}{\epsilon}
\newcommand{\M}{M}
\newcommand{\FG}{F}

\newcommand{\ws}{\eta}
\newcommand{\wse}{\widehat{\eta}}
\newcommand{\wws}{\mu}
\newcommand{\wwse}{\nu}
\newcommand{\wwsee}{\widehat{\nu}}

\newcommand{\HI}{\mathbf{H}}

\newcommand{\Val}{\textup{Val}}
\newcommand{\IVal}{\widetilde{\textup{Val}}}

\newcommand{\parspar}[0]{\textsc{PartialSparsification}}
\newcommand{\gparspar}[0]{\textsc{GeneralPartialSparsification}}
\newcommand{\fparspar}[0]{\textsc{FastPartialSparsification}}
\newcommand{\motspar}{\textsc{MotifSparsification}}
\newcommand{\MG}[2]{\mathcal{M}(#1, #2)}

\newcommand{\OVal}{\overline{\textup{Val}}}
\newcommand{\OM}[2]{\overline{\mathcal M}(#1, #2)}
\newcommand{\wh}{\widehat}

\renewcommand{\todo}[1]{\xxx{#1}}

\tableofcontents

\newpage

\pagenumbering{arabic}

\section{Introduction}

A motif is a (connected) subgraph of a given directed or undirected graph $G=(V,E)$ that occurs more frequently than one would typically assume in a random graph; it has been observed empirically that motifs exist in many networks~\cite{milo2002, yaverolu2014, benson2016, tsourakakis2017}. These higher order graph structures are crucial to the organization of complex networks as they capture richer structural information about the graph data and therefore
carry important information that can be exploited in network data analysis. Indeed, in many application domains, such as in clustering and social network analysis~\cite{satuluri2011, benson2016, lipan2017,tsourakakis2017,hao2017, li2017, lipan2019}, community detection~\cite{satuluri2011, benson2016, hao2017,tsourakakis2017,paranjape2017, seshadhri2020,sotiropoulos2021}, and analysis of biological or physical networks~\cite{mangan2003, wasserman2007, wong2011, benson2016}, understanding higher order graph structures has become increasingly important. See \cref{sec:related_works} for further details on motif-based applications. 

Graph clustering in particular is a prominent example where clustering algorithms have been developed to exploit the motifs structure of graphs~\cite{benson2016, tsourakakis2017}. These algorithms first compute a motif weighted graph where every edge is weighted by the number of copies of a given motif it
is contained in, and then apply spectral clustering on this motif weighted graph (see \cref{sec:related_works} for more details). Such an approach may be viewed as partitioning the vertex set of a graph into subsets (called clusters) with high internal motif connectivity and low motif connectivity 
between the clusters. 

Graph sparsification is an algorithmic technique for speeding up cut based graph algorithms that was introduced in the seminal work of~\cite{karger1999random} and~\cite{benczur_randomized_2015}, with powerful generalization to spectral sparsifiers obtained in~\cite{DBLP:journals/siamcomp/SpielmanT11}. The main idea behind graph cut sparsification is to design a sparse weighted graph that approximates the cuts in the original graph to within a $1\pm \e$ factor for small $\e\in (0, 1)$. Cut sparsifiers  with $\widetilde{O}(n/\e^2)$ edges that approximate all cuts in $G$ have been constructed, with some constructions achieving an $O(n/\e^2)$ upper bound on the number of edges in nearly linear time~\cite{benczr1996}. The related concept of hypergraph sparsification has received a lot of attention in the literature recently, with nearly optimal size sparsifiers obtained in~\cite{DBLP:journals/corr/abs-2009-04992}. In this paper we ask whether it is possible to sparsify a graph while preserving the motif cut structure:

\begin{center}
 Given an arbitrary input graph $G$, is it possible to compute a sparse weighted graph $G'$ (a motif cut sparsifier) that approximates the motif cut structure of $G$?
\end{center}


Before we discuss  how motif sparsification compares to graph and hypergraph sparsification, we first informally state  our definition of a motif cut sparsifier. The main idea is very intuitive: a motif cut sparsifier approximates the number of motifs that cross a cut for every cut in the graph. In order to utilize sparse graphs, edges need to be weighted and we must define the weighted number of motifs crossing a cut. Here we follow the standard interpretation of integer edge weights as edge multiplicities, and therefore, define the motif weight as the product of its edge weights (which under the previous interpretation is simply the number of distinct unweighted motifs crossing the cut). The definition generalizes to non-integral edge weight in a straightforward manner.

\begin{definition}[Motif cut sparsifier; informal]
For a connected motif $M$ and $\e\in (0, 1)$ we say that  a (possibly directed) weighted graph $G'=(V, E)$ is an $\e$-motif-sparsifier of $G$ with respect to $M$ if for every $\emptyset \neq S \subset V$ the weighted number of copies of $M$ in $G$ crossing the cut $(S, V\setminus S)$ is $(1\pm \e)$-close to the number of copies of $M$ crossing the same cut in $G'$. 
\end{definition}

There is no consensus in the literature on whether these "copies" should be \emph{induced} subgraphs of $G$ or \emph{arbitrary} subgraphs -- both seem to be useful concepts in applications. We consider both cases, and it turns out there is a fundamental difference between them: In the case of non-induced motifs powerful and small motif-cut sparsifiers can be constructed for any graph $G$ (as we'll see below) while in the case of induced motifs this is not possible. Hence, below we focus on the non-induced case, and we state our result for the induced case at the end of the section.

\paragraph{Motif sparsifiers vs hypergraph sparsifiers.} It may seem at first sight that one can easily compute a motif cut sparsifier by first computing a motif hypergraph that contains an edge for every motif, and then by sparsifying this hypergraph. The issue with this approach is that although there exists a corresponding motif hypergraph for every graph and every motif (at least when we allow parallel hyperedges), the converse is not true. Thus, while we can compute a motif hypergraph sparsifier, we do not know how to transform it back into a graph while maintaining the fact that the number of motifs crossing every cut is preserved. Similar issues arise if we first sparsify a motif weighted graph. This is illustrated in \cref{fig:naive_graph} and detailed in \cref{sec:technical_overview} . 

Indeed, motif sparsifiers are quite different from graph and hypergraph cut sparsifiers. For example, graph and hypergraph cut sparsifiers have the property that when $G'=(V,E')$ is a sparsifier of $G(V,E)$ and $H'=(V,F')$ is a sparsifier for $H(V,F)$ then $(V,E' \cup F')$ is a sparsifier for $(V,E\cup F)$. This property can, for example, be used to obtain a semi-streaming algorithm for many cut problems using $O(n \cdot \text{poly}(\log n))$ space~\cite{ahn2009, kapralov2014, rubinstein2018, assadi2019, mukhopadhyay2020, assadi2021}.

Unfortunately, motif sparsifiers in general do not have this property. Furthermore, even for a small motif like a triangle, it is not possible to compute a motif sparsifier in the semi-streaming model. This is because even counting the number of triangles in a stream can require $\Omega(|E|)$ space for $|E| = \Omega(n^2)$~\cite{braverman2013} and computing a motif sparsifier, in particular when the motif is a triangle, easily allows us to recover the global triangle count by querying the sparsifer on the $n$ singleton cuts.

\paragraph{Importance sampling.} A common approach to different graph and hypergraph sparsification algorithms (see ~\cite{benczr1996,benczur_randomized_2015,newman2013,kogan2015sketching,soma2019,kapralov2021, fung2019general} and references within) is to define a sampling probability $p(e)$ and a weight $w(e)$ for each edge $e$ and then sample each edge independently with probability $p(e)$. If $e$ is sampled, it is also assigned weight $w(e)$; for appropriately defined probabilities and weights, the resulting graph is a sparsifier with a near linear number of edges. 

For motif sparsifiers, such an approach \emph{cannot} yield a cut sparsifier of near linear size, as the example of a clique on $n$ vertices with the motif being a triangle shows. Indeed, if we sample every edge with probability $o(1/n^{2/3})$, then the expected number of triangles incident to a given vertex is $o(1)$. Then, it is straightforward to show that the resulting graph is typically not a triangle sparsifier. However, for a sampling probability of 
$\Omega(1/n^{2/3})$, the expected number of sampled edges is 
$\Omega(n^{4/3})$, i.e. the resulting graph does not have near linear size. Since a clique is also completely symmetric, it is unclear how one could assign different probabilities to each edge. However, there is still a simple argument that a sampling probability of roughly $p=\log n / n^{2/3}$ results in a sparsifier such that w.h.p.\,no vertex is incident to more than $\log^{O(1)} n$ distinct triangles. Since every triangle has three edges, this implies that there are only $n \log^{O(1)} n$ edges that are involved in a triangle. Thus, removing the remaining edges yields a triangle sparsifier of near linear size.

While our construction still samples every edge with the same probability, in the special case of a clique, we can only obtain a sparsifier if we remove most of the unused edges in a cleaning step. It is unclear whether such an approach generalizes to other less structured graphs and motifs. Nevertheless, the main result of this paper is that there does exist an algorithm producing a motif sparsifier of nearly linear size from an arbitrary input graph:

\begin{theorem}[follows from \cref{cor-main} and \cref{thm:intro-main-general-2} in  \cref{sec:main-results}]\label{thm:intro-main}
For every graph $G=(V, E)$, $|V|=n$, every constant integer $r\geq 2$, and $\e \in (0, 1)$, there exists an $\e$-motif sparsifier $G'$ of $G$ with respect to {\bf all connected motifs $M$ of with at most $r$ vertices simultaneously}  that contains  $\tO(n/\e^2)$ edges. 

Furthermore, there is an algorithm which outputs a $G'$ which is an $\e$-motif sparsifier with high probability. Its running time is $\tO(\min(T(r), n^{\omega\lceil r / 3 \rceil}))$, where $T(r)$ is the time need to enumerate all of the motif instances and $n^\omega$ is the matrix multiplication time.
\end{theorem}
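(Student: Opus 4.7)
The plan is to combine importance sampling on edges with a hypergraph viewpoint on motif occurrences, and then iterate. First, for a fixed motif $\M$, I form the motif hypergraph $H_\M$ on vertex set $V(G)$ whose hyperedges are the vertex sets of the (non-induced) occurrences of $\M$ in $G$; by construction, the number of motif occurrences crossing a cut $(S, V\setminus S)$ in $G$ equals the hyperedge cut value of $(S, V\setminus S)$ in $H_\M$. Since $r$ is constant, each hyperedge has constant arity, so classical hypergraph cut sparsification would in principle yield a sparsifier on $\tO(n/\e^2)$ hyperedges. The difficulty, as the introduction emphasizes, is that we are not free to keep arbitrary hyperedges: a motif instance survives in the output if and only if every edge of that instance is retained in $G'$, coupling the hyperedges through their shared edges in a highly nonlinear way.

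To handle this coupling I would attach importance scores directly to edges $e\in E(G)$ via a strength notion lifted from $H_\M$. Concretely, for each motif instance $I$ containing $e$, define a strength $k(I)$ as the largest value such that $I$ lies in a sub-hypergraph of $H_\M$ whose minimum motif cut is at least $k(I)$ (the motif analogue of Benczúr--Karger edge strengths). Then set $p(e) = \min\{1, C\log n / (\e^2 \cdot \min_{I\ni e} k(I))\}$ and, if sampled, include $e$ with weight $1/p(e)$. A motif instance $I$ is then retained with probability $\prod_{e\in I} p(e)$ and contributes weight $1/\prod_{e\in I} p(e)$, so in expectation every motif cut value is preserved exactly. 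The main obstacle is uniform concentration across exponentially many cuts. I would therefore need a motif analogue of Karger's cut-counting bound, controlling the number of cuts with at most $\alpha$ times the minimum number of motif crossings, together with a tail bound capable of handling the $r$-wise product indicator variables that arise because a whole instance survives only when all of its edges survive jointly.

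With such a strength-based sampling primitive in hand, the second ingredient is an iterative \parspar\ scheme. Given $G^{(t)}$, a $(1\pm\e_t)$ motif cut sparsifier of $G$, one application of the primitive produces $G^{(t+1)}$ with a constant-factor fewer edges and incremental error $\e_{t+1}$. Starting from $G^{(0)}=G$ and iterating $O(\log n)$ times with per-round budget $\e_t = \Theta(\e/\log n)$, the telescoped error remains within $\e$ while the edge count drops to $\tO(n/\e^2)$. To upgrade from a single motif to all connected motifs on at most $r$ vertices simultaneously, I use that the number of isomorphism classes is a function of $r$ only, hence constant; taking $p(e)$ to dominate the requirement across every motif class in parallel and union-bounding the concentration analysis over all classes preserves every motif cut value at once.

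Finally, the running time follows from how the strengths are computed. The direct route enumerates all motif instances in time $T(r)$ and operates on $H_\M$ explicitly, giving $\tO(T(r))$ overall. When $r$ is small, the relevant instance counts can instead be extracted via $\lceil r/3\rceil$ adjacency-matrix multiplications on $G$, yielding the alternative $n^{\omega\lceil r/3\rceil}$ bound. The step I expect to be hardest, as flagged above, is proving uniform concentration over exponentially many cuts under the nonlinear, hyperedge-coupled sampling; this is precisely where the hybrid graph/hypergraph strength definition together with multi-round partial sparsification should buy enough slack to close the argument.
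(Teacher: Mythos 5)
Your plan is right at the level of ingredients — motif hypergraph, a strength notion lifted to motif instances, cut-counting, a martingale tail bound, and an iterated partial-sparsification loop — but the core sampling rule you write down is exactly the one the paper argues cannot work, and the later iteration does not rescue it.

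You set $p(e) = \min\{1, C\log n/(\e^2\cdot\min_{I\ni e}k(I))\}$ and reweight a surviving edge by $1/p(e)$, so a motif instance $I$ survives with probability $\prod_{e\in I}p(e)$ with weight $\prod_{e\in I}1/p(e)$. Consider $G$ the unweighted clique on $n$ vertices and $\M$ the triangle. Every triangle there has strength $k(I)=\Theta(n^2)$ (the singleton cuts are the bottleneck and have $\Theta(n^2)$ triangles), so $p(e)=\Theta(\log n/(\e^2 n^2))$ for every edge. A fixed triangle then survives with probability $\Theta((\log n/n^2)^3)$ and there are only $\Theta(n^3)$ of them, so with high probability \emph{no} triangle at all survives the sample, and every singleton motif cut is reported as $0$. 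Unbiasedness in expectation does not help; the surviving triangles carry weight $\approx(n^2/\log n)^3$, so the variance is enormous, no Azuma/cut-counting combination can recover concentration, and the procedure already fails in one round, which is precisely why the paper spends its technical overview ruling out this family of ``sample with $p(e)\propto 1/\text{strength}$ and reweight'' schemes for motifs.

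The missing idea is the specific form of each partial-sparsification round in the paper: it does \emph{not} use an edge-dependent $p(e)$ per round. Instead it (a) defines an edge importance $\ws_\M(e)=\sum_{I\ni e}w(I)/\kappa_I$ by \emph{summing} over all instances containing $e$, not taking a minimum; (b) declares edges with $\ws_\M(e)$ above a fixed threshold ``critical'' and keeps all of them deterministically; and (c) samples every remaining edge with the \emph{same} constant probability $p=2^{-1/(2r^*_{\max})}$ (close to $1$), reweighting by $1/p$. This keeps every surviving motif's weight bounded by a constant factor $1/p^{r^*}$, which is what makes Azuma's inequality bite after the cut-counting union bound. Iterating this constant-factor shrinkage $\Theta(r^*\log n)$ times with per-round error $\e'=\Theta(\e/(r^*\log n))$ is what actually gets the edge count down to $\tO(n/\e^2)$. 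Also note that your plan says ``one application of the primitive produces $G^{(t+1)}$ with a constant-factor fewer edges,'' but with the $p(e)$ you defined that would not be a constant-factor reduction; in high-strength components it wipes out almost all edges at once, which is the same failure.

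A final, smaller gap: for the $n^{\omega\lceil r/3\rceil}$ running time you say ``the relevant instance counts can instead be extracted via $\lceil r/3\rceil$ adjacency-matrix multiplications on $G$,'' but the paper's second algorithm is not the strength-based algorithm sped up; it replaces strengths by \emph{connectivities} (minimum motif-cut value crossing an instance), approximates these through the motif-weighted graph and an all-pairs min-cut subroutine, and redoes the concentration analysis à la Fung et al. Achieving the matrix-multiplication running time requires moving to this connectivity-based importance, which your proposal does not anticipate.
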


Note that the resulting graph $G'$ is automatically a cut sparsifier of $G$, as an $M$-sparsifier is exactly a cut sparsifier when $M$ is a single edge. Beyond that, however, $G'$ approximately preserves the sizes of {\em all motif cuts} in $G$ with respect to constant size motifs. Theorem~\ref{thm:intro-main} also applies to directed graphs.

The running time -- $\widetilde{O}(n^{\omega\lceil r/3\rceil})$ in particular -- is sublinear in the number of motif instances in some settings. This shows a clear advantage of motif sparsification over simply sparsifying the motif hypergraph, which would take time at least proportional to the number of hyperedges (ie. motif instances).

\paragraph{Induced Motifs.}
In the final section of the paper we consider the setting where we require motif instances to be \emph{induced} subgraphs of input graph $G$. This is also a natural definition of motifs which likewise has been extensively studied in literature; see~\cite{alon2008,tsourakakis2017,bressan2021, bressan2021_2} and the references within. We show that no analogue of \cref{thm:intro-main} exists in this setting. Even for constant size motifs we can construct an example where any non-trivial sparsification is impossible.

\begin{theorem}[Informal version of Theorem~\ref{thm:graphlet-main}]\label{thm:intro-graphlet-main}
There exists a graph $G=(V, E)$ on $n$ vertices and a motif of constant size such that it is impossible to approximate the induced-motif-cut structure of $G$ to within a multiplicative error of $(1\pm\epsilon)$ for $\epsilon\le1/500$ using a 
(non-negative) weighted graph with $o(n^2)$ edges.
\end{theorem}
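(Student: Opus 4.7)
My plan is to exhibit an explicit pair $(G, M)$ --- a constant-size motif $M$ and an $n$-vertex graph $G$ --- such that no non-negative weighted graph $G'$ with $o(n^2)$ edges can $(1\pm 1/500)$-approximate the induced-$M$-cut function of $G$ on every cut. The key conceptual point I would emphasize at the outset is that for an induced copy of $M$ to occur, the graph must simultaneously contain certain edges and avoid certain non-edges; this renders the induced-$M$-cut function non-monotone in the edge set, in sharp contrast to the non-induced case handled by \cref{thm:intro-main}. This non-monotonicity is what breaks the importance-sampling framework used to obtain the near-linear upper bound, and it is precisely what the lower bound must exploit.

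The first step is to pick $M$ so that induced and non-induced occurrences differ, i.e., $M$ has at least one non-edge among its vertex pairs. Natural candidates are the induced path $P_3$ on three vertices (two edges, one non-edge) or the induced $4$-cycle $C_4$, both of which admit clean counting formulas. The second step is to choose $G$ as a rigid structured graph whose induced-$M$-cut function encodes detailed edge-level information --- for example, $G$ could be $K_n$ minus a carefully chosen subgraph $H$, or $K_{n/2,n/2}$ perturbed by a hidden graph on one of its sides. In either case the induced-$M$ count restricted to any cut becomes a polynomial in the indicator variables of the pairs in $H$, and I would then select $H$ (either probabilistically or by an explicit construction) so that the induced-cut values on a family $\mathcal{C}$ of $\Omega(n^2)$ simple cuts (for instance singleton and pair cuts) are pairwise well-separated by a multiplicative gap much larger than $\epsilon$.

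The third step, where the main work lies, is to prove that any $G'$ with $o(n^2)$ edges and arbitrary non-negative weights must fail to simultaneously approximate every cut value $\{\,\Val_G(S) : S \in \mathcal{C}\,\}$. I would argue by contradiction: if such a $G'$ existed, then the vector of induced-$M$-cut values of $G'$ would lie inside the multiplicative $(1\pm 1/500)$-tube around that of $G$, and the goal is to show that this tube cannot be reached by sparsely supported weighted graphs. Making this rigorous requires a counting or dimension argument adapted to the family $\mathcal{C}$ --- either a direct bound on the number of cut-value vectors attainable by graphs with at most $o(n^2)$ edges, or a reduction to a well-understood incompressibility task such as \textsc{Index} on an $\Omega(n^2)$-bit string, with the bits encoded into the induced-motif cuts via the construction of $G$.

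The hardest aspect of this plan is this last step, because induced-$M$-cut values are nonlinear (degree up to $|M|$) in the adjacency entries of $G'$, so the usual rank argument for ordinary cut sparsifiers does not transfer. I expect the cleanest route through this obstacle is to commit to an explicit construction: choose $M$, $G$, and a particular pair of candidate graphs $G^+$ and $G^-$ so that the induced-$M$-cut profiles over $\mathcal{C}$ separate $G^+$ and $G^-$ by more than $2\epsilon$ while the two candidates agree on all but $\Omega(n^2)$ essential vertex pairs. The explicit threshold $\epsilon \le 1/500$ in the statement is suggestive: it indicates that the underlying multiplicative gap in the construction is a concrete constant comfortably larger than the allowed tolerance, which is much easier to achieve via an explicit two-graph witness than via a delicate information-theoretic bound.
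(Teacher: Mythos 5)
Your first two steps align well with the paper: the motif is exactly the induced $2$-path on three vertices, and the graph is $K_n$ with a small subgraph removed (specifically, the paper removes a single triangle on three ``central'' vertices $a, b, c$, yielding $\Delta^-$). So the construction is in the right family.

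The genuine gap is in your third step, which you yourself flag as the hardest. None of the three tools you float there is what actually closes the argument, and I don't see how any of them would work as stated. A counting/dimension bound on profiles attainable by $o(n^2)$-edge graphs is problematic because edge weights are real-valued, so the family of attainable profiles is a continuum, not a finite set to count. An \textsc{Index}-style reduction gives space lower bounds for sketches, not size lower bounds for weighted graphs; translating it to a ``number of weighted edges'' bound requires a separate argument you haven't supplied. And the two-graph witness idea does not directly yield the theorem: the statement asks you to exhibit one $G$ that no sparse weighted graph approximates, whereas your $G^+/G^-$ argument would show at best that no \emph{single} sparse graph approximates both --- which does not rule out a (possibly different) sparse approximator for each.

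The key combinatorial insight the paper uses, and which is missing from your plan, is a tension between two forces. On one hand, $\Delta^-$ has only $3(n-3)$ induced $2$-paths, each supported on two central vertices and one peripheral vertex; to reproduce their cut contributions, any approximating $\widehat G$ is forced (via singleton and small-set cuts) to give most peripheral vertices a heavy edge (weight $\ge 1/\sqrt{2}$) to at least one central vertex. On the other hand, once some central vertex, say $a$, has $\Omega(n)$ heavy edges, every pair of such edges forms a not-necessarily-induced $2$-path $x$--$a$--$y$ of weight $\ge 1/2$; these $\Omega(n^2)$ potential motifs must almost all be \emph{prevented} from being induced, which requires $\{x,y\} \in \widehat E$ for most pairs, i.e.\ $\Omega(n^2)$ edges. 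The chain of cut-based claims in the paper (total motif weight, concentration in the ``two central, one peripheral'' class, balance among the three central pairs, strong contribution, then strong connection) is exactly what pins down this dichotomy quantitatively, and the constant $1/500$ comes out of that accounting --- not from a separated two-graph witness. Your proposal would need to discover this ``forced heavy hub $\Rightarrow$ accidental motifs $\Rightarrow$ need dense blocking'' mechanism to go through.
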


\if 0

The above theorem also extends to the case where we wish to compute an $\e$-motif cut sparsifier for a set of motifs $\{M_i\}_{i\in [L]}$ \emph{simultaneously}. This has the advantage that we can use a single graph for any downstream algorithmic applications involving the cut-structure of the motifs $M_i$ rather than $L$ different graphs.

\begin{theorem}
For every graph $G=(V, E)$, $|V|=n$, a set of constant size motifs $\{M_i\}_{i\in [L]}$, and every $\e\in (0, 1)$, an $\e$-motif sparsifier $G'$ of $G$ with respect to all $M_i$ with $\tO(Ln/\epsilon^2)$ edges can be computed in polynomial time.
\end{theorem}

We note that it is not sufficient to simply take the union of $L$ different sparsifiers, one for each $M_i$, to obtain an $\e$-motif sparsifier $G'$ for the set $\{M_i\}_{i\in [L]}$.
\fi

\if 0 
It is impossible to define a distribution on the edges that is invariant under permutation of vertex names such that with constant probability the resulting graph is a triangle sparsifier with near linear number of edges. For example, let $G$ be a clique. Since any permutation of vertices is an automorphism of $G$, all sampling probabilities are equal. For a motif cut sparsifier of $G$ to have a near linear number of vertices, the sampling probability should be $O(\log^\alpha n / n)$ for some constant $\alpha > 0$. But then each triangle would still be present in the graph with probability only $O(\log^{3\alpha}/n^3)$ which, considering that there are only $\Theta(n^3)$ triangles in the clique, would mean that there is only $O(\log^{3\alpha})$ triangles in $G'$ in expectation. But that means that for a large enough $n$ with high probability there is a vertex not incident to any triangle, which implies that $G'$ cannot be a motif cut sparsifier.
\fi 

\subsection{Related Work}\label{sec:related_works}

As stated in the introduction, motifs have been widely adopted for study of higher order networks due to their ubiquitous presence~\cite{milo2002, yaverolu2014, benson2016}. Since the network literature concerning motifs is too vast to properly summarize, we mainly focus on algorithms and applications of motifs and higher order structures. Note that a majority of the papers we reference are application oriented papers; relatively few works offer strong theoretical guarantees.

Applications where motif analysis has become impactful include graph clustering (both local and global clustering)~\cite{satuluri2011, benson2016, lipan2017,tsourakakis2017, lipan2019} and community detection~\cite{satuluri2011, benson2016, hao2017,tsourakakis2017,paranjape2017, seshadhri2020,sotiropoulos2021}. These applications are based on exploiting the motif-cut structure of a given graph. For example in works such as~\cite{benson2016, hao2017, tsourakakis2017}, various alternative notions of conductance are introduced which take into account the influence of motifs. In particular, the definition of conductance is redefined in terms of the number of motifs, for example triangles, crossing the cut. Therefore, one direct application of our results is to provide solid theoretical understanding of motif-based cut structure via graph sparsification.

In graph and network data visualization, it has been empirically observed that motif based embeddings provide more meaningful low-dimensional representations over their counterparts which do not employ motifs, such as spectral embeddings~\cite{zhang2018,nassar2020}. Indeed,~\cite{nassar2020} shows that performing spectral emebeddings on adjacency matrices which are motif based, for example using matrices which are weighted sums of higher powers of the adjacency matrix, leads to better inductive bias as these presentations better capture the rich underlying community or cluster structures; see the visualizations given in~\cite{zhang2018,nassar2020}.

In graph classification, motifs have provided more meaningful characterizations for graphs at both micro (local) and macro (global) scales~\cite{ahmed2016}. Motifs have also become popular in the related area of learning on graphs which has further downstream applications such as recommender systems, fraud detection, and protein identification~\cite{rossi2018, eswaran2020, tudisco2021}. Additional applications of motif-based graph learning include link prediction~\cite{benson2018,arrigo2020, rossi2020} and computing network-based node rankings~\cite{benson2019_hypergraph,arrigo2020}. Indeed in the active area of graph neural networks, motif counts are an extremely popular feature augmentation technique as graph neural networks often struggle to identify motifs and higher order structures~\cite{xu2019, zhao2021, liu2021}.

Lastly, there has also been empirical and theoretical work on efficiently counting motifs and summarizing motif statistics. This literature is also quite vast but an excellent reference is the tutorial~\cite{seshadhri_tutorial} given at the WWW 2019 conference.

Note that which motifs are important for a given complex network strongly depends on the underlying network properties~\cite{milo2002, mangan2003, benson2016}. One of the most fundamental and well studied motifs is the triangle and its directed variants~\cite{tsourakakis2011, satuluri2011, benson2016, tsourakakis2017,  seshadhri2020}. Indeed, some of the work closest to ours concerns triangle motifs. 

Objects close to triangle sparsifiers, which we precisely define and give theoretical guarantees in our work, have also been studied~\cite{tsourakakis2011, sotiropoulos2021}. The main difference is that in~\cite{tsourakakis2011}, their goal is to acquire a sparse subgraph which only preserves the \emph{global} triangle count; in contrast, our task is much more difficult as we wish to preserve the triangle counts (and arbitrary motif counts) \emph{for all cuts} simultaneously. Note that preserving motif cut values automatically implies preservation of the global number of triangles by querying $n$ singleton cuts. Furthermore,~\cite{tsourakakis2011} employ a one-shot uniform sampling of the edges whereas we use careful importance-based sampling based on edge importance over multiple rounds. Similarly in~\cite{sotiropoulos2021}, their goal is to get a sparsifier with respect to edges which has better space bounds for graphs containing many triangles. Our work achieves nearly linear space bounds for preserving motifs cuts for arbitrary motifs.

\paragraph{Clique enumeration results.}
Our first algorithm makes use of a primitive that enumerates all of the instances of a given motif. Unfortunately in general, this can take time exponential in the size of the motif, since even deciding if certain motifes are contained in a graph, such as a clique, is NP-complete \cite{karp1972reducibility}. 

The clique enumeration problem is one of the most studied motif enumeration problems. The most notable results here include \cite{chiba1985arboricity}, giving an algorithm working in time $O(r \alpha(G)^{r - 2} m)$, where $\alpha(G)$ is the arboricity of the graph $G$ for enumerating all cliques of size $r$. By utilizing the bound $\alpha(G) \leq m^{1/2}$ for connected graphs from the same paper, this yields an $O(m^{r/2})$ time algorithm for a general graph. 

There are also works which achieve faster runtimes for graph enumeration for subgraphs with special structures, such as planar graphs or bounded tree-width graphs \cite{alon1995color}, and bounded arboricity graphs \cite{chiba1985arboricity}. Lastly, see \cite{ribeiro2021survey} and references within for a survey on applied algorithms for subgraph enumeration.

\section{Technical Overview}\label{sec:technical_overview}

We illustrate our main algorithmic ideas by considering a simple example, namely when $G=(V, E)$ is an undirected unweighted graph and the motif $M$ is a triangle $\Delta=(V_\Delta, E_\Delta)$, i.e. a clique on three vertices. Our approach is inspired by the techniques introduced by Karger~\cite{karger1999random} and Benczur and Karger~\cite{benczur_randomized_2015} in the context of  sparsification of undirected graphs. We recall these techniques now, then show why their immediate extension fails, and finally present our algorithm.

We start by recalling Karger's cut sampling bound and its application to graph sparsification. Karger~\cite{karger1999random} shows that in a graph $G$ with min-cut $k$, the number of cuts of size $\alpha k$ for $\alpha\geq 1$ is bounded by ${n \choose 2\alpha}$. The bound is then applied to show that a sample of edges of $G$ which contains every edge independently with probability $p=\min\left\{\frac{C\log n}{\e^2 k}, 1\right\}$ (with weight $1/p$) is an $\e$-cut sparsifier, i.e. preserves all cuts up to multiplicative precision $1\pm \e$, with high probability. The latter claim follows by noting that the probability that a cut of size $\alpha k$ is not appropriately preserved is exponential in $\e^2 p \alpha k=\Omega(C\alpha \log n)$, which suffices for the union bound.  This uniform sampling approach leads to a reduction in the number of edges when the min-cut $k$ in $G$ is large. In the general case~\cite{benczur_randomized_2015} show that sampling edges with probabilities proportional to the inverse of their strong connectivity and reweighting appropriately leads to a cut sparsifier with high probability. Here  the strong connectivity $k_e$ of an edge $e$ is equal to the maximum $k$ such that there exists a vertex induced subgraph $C$ of $G$ containing $e$ such that the size of its minimum cut in $C$ is at least $k$. 

In what follows we discuss two natural approaches to using these techniques to obtain motif sparsifiers, explain some of the issues with them, and then outline our approach. The first approach is based on a hypergraph version of motifs and the second one is based on graphs.
 In the following discussion we assume for simplicity that the input graph $G$ is undirected and unweighted and the motif $M$ is a triangle. 

\paragraph{Motif sparsification based on hypergraphs?}

As already mentioned in the introduction
one can compute a motif hypergraph by creating a hyperedge for every motif. We could then simply use hypergraph cut sparsification algorithms, such as \cite{DBLP:journals/corr/KoganK14} or \cite{DBLP:journals/corr/abs-2009-04992}. Although in general, we cannot transform a sparsified hypergraph back into a graph, we could still try to adapt some
hypergraph sparsification techniques to our problem. For example, we could sample all edges of a motif whenever its corresponding hyperedge gets picked. 
To give a concrete example, in the case of triangle motifs, we may first find all triangles in the input graph, select some of them and then construct a new graph, containing only the selected triangles with some edge re-weightings. This would be a way to simulate some hypergraph sparsification approaches. However,
it is easy to see that some of the discarded triangles might appear again. For example, consider a case of the graph on \cref{fig:triangle_graph}: if you take only triangles $1$, $2$ and $3$ and reconstruct the graph, the final graph will still contain triangle $4$.
Therefore, we cannot hope to directly transform hypergraph sparsification approaches into motif sparsifiers.

\begin{wrapfigure}{r}{0.4 \textwidth}
    \centering
    \includegraphics[scale=0.7]{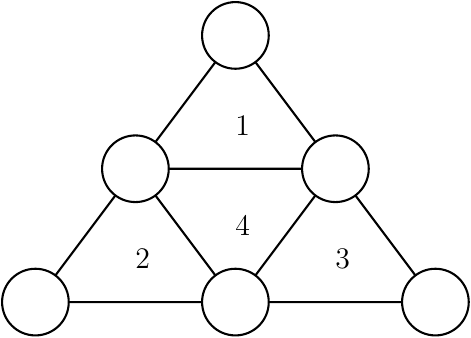}
    \caption{Sampling triangles does not lead to triangle sparsification.}
    \label{fig:triangle_graph}
\end{wrapfigure}

\paragraph{Motif sparsification based on a triangle-weighted graph?}
A natural way to apply Karger's approach to our motif sparsification problem (or triangle sparsification in the following discussion) is to use it on the {\em triangle weighted graph}  $G_\Delta=(V, E, w_\Delta)$, where $w_\Delta(e)$ is the number of triangles containing edge $e$ that has been used in the context of graph clustering~\cite{benson2016,tsourakakis2017}. Indeed, triangle weighted graphs have the useful property that the size of the cut $(S, V \setminus S)$ in $G_\Delta$ is exactly twice the number of triangles that cross this cut in $G$. Therefore, if we were to sparsify $G$ to $G'$ in such a way that $G'_\Delta$ is a cut sparsifier of $G_\Delta$, $G'$ would be a motif cut sparsifier of $G$. It is a seemingly natural approach to try to use triangle weighted graphs to obtain triangle sparsifiers. 
However, we will now show in a series of examples that a number of simple approaches which use the triangle weighted graph fail.

A  naive approach using the triangle weighted graph would be to sparsify the triangle weighted graph, and then construct $G'$ by taking the remaining edges in $G'_\Delta$ with some weights. However, this does not work, as a situation could easily arise where all of the triangles in some cuts are deleted. Consider the example in \cref{fig:naive_graph}.
\begin{figure}[h]
    \centering
    \includegraphics{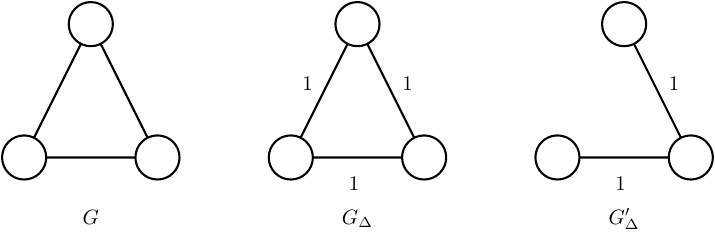}
    \caption{Sparsifying triangle weighted graph does not work for triangle sparsification.}
    \label{fig:naive_graph}
\end{figure}
Here, $G'_\Delta$ is clearly a $1/2$-cut sparsifier of $G_\Delta$, but since it contains no triangles, no motif sparsifier of $G$ can be constructed from it without adding new edges.  

A better approach is  to apply Karger's cut counting bound to the triangle weighted graph $G_\Delta$ and use it to prove that an appropriate random sample of edges of $G$, denoted by $G'$, will satisfy 
\begin{equation}\label{eq:gdgpd}
G_\Delta \approx_{\e} G'_\Delta.
\end{equation}
First, in order to make this approach work, we need to assume that $G_\Delta$ is $k$-connected for some reasonably large $k$. We make this assumption now to illustrate the challenges that arise even in this special case. Following \cite{karger1999random}, we could sample each edge with probability $\approx \frac{\log n}{k}$. That would unfortunately lead to each triangle staying in the graph with probability $\approx \frac{\log^3 n}{k^3}$ only, and in particular some vertices may end up participating in no triangles in the sample with high probability. The latter means that the corresponding singleton cuts in $G'_\Delta$ would be empty, and~\eqref{eq:gdgpd} would certainly not be satisfied.  Naturally, we can also try to sample each edge with a lower probability, say $\approx \frac{\log n}{k^{2/3}}$, but in this case the number of edges in the sparsifier of a $k$-regular graph would be $\approx k^{1/3}n \log n$, which is in superlinear in $n$.

In general, the above attempts point to the fact that edge weights in $G'_\Delta$ are a {\em non-linear function} of the random variables that govern the presence or absence of various edges in $G'$, making `one-shot' sparsification not easy to achieve. 

\paragraph{Essential problem of triangle-weighted graph.}

Although we have already outlined several problems that we encounter in our attempts to construct a sparsifier using the triangle-weighted graph, there is another fundamental problem which arises directly from its structure as the following example demonstrates.

Let the graph $G$ (see \cref{fig:dangling_triangle}) consist of a clique on vertices in $V(G) \setminus \{ v_1, \ldots v_l \}$ where $l=\lfloor \sqrt{n} \rfloor$, and let $h \leq \sqrt{n}/4$ be an integer. For $i \in [l]$, let vertex $v_i$ be connected with vertices $u_{i, 1}, \ldots, u_{i, l}$ such that the sets of neighboring vertices of $v_1, \ldots, v_l$ don't intersect. Notice that the subgraph induced by $V(G) \setminus \{ v_1, \ldots v_l \}$ has connectivity in $G_\Delta$ of at least $n(n - 1)/8$, forming a connectivity component, while vertices $v_1, \ldots, v_l$ are not a part of this component because they are only connected to the clique with at most $O(n)$ triangles each. 

In this situation, the triangles $v_i u_{i,j_1} u_{i, j_2}$ for $i \in [l]$, $j_1, j_2 \in [h]$, and $j_1 \neq j_2$ are ``dangling'', i.e. one of their edges is part of a component with a high connectivity, while there is no such component containing the whole triangle. 

We know from the first part of the introduction that there is a way to get a clique sparsifier with almost linear number of edges, and graph $G$ is a clique with additional $O(\sqrt{n})$ vertices and $O(n)$ edges. Since this is an insignificant part of the whole graph, one might think that it is still easy to get a sparsifier with almost linear number of edges, for example by taking all edges $v_i u_{i, j}$ with probability $1$, and sampling the clique as we did before. But we will now show that additional caution must be taken to handle the ``dangling" triangles.

\begin{figure}[h]
    \centering
    \includegraphics{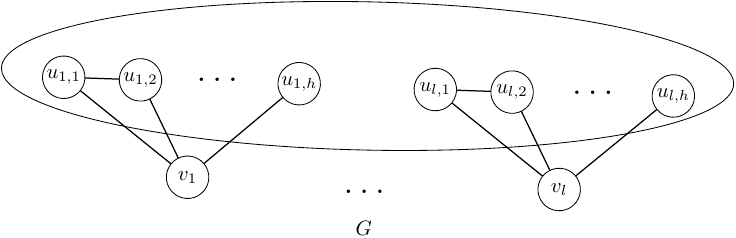}
    \caption{An illustration of ``dangling'' triangles.}
    \label{fig:dangling_triangle}
\end{figure}

First, suppose that we sample all of the edges in the clique with the probability $\leq 1/2$. Consider the case $h=2$. Then for all $i \in [l]$, edge $u_{i, 1}u_{i, 2}$ is contained in the only triangle in the cut $(\{ v_i \}, V(G) \setminus \{ v_i \})$. With high probability, at least one of those cuts will have motif size $0$ and therefore, the resulting graph would not be motif sparsifier.

On the other hand, suppose that we were to take all of the edges $u_{i, j_1} u_{i, j_2}$ with probability $1$. Consider the case of $h = \lfloor \sqrt{n}/4 \rfloor$. Then, the number of those edges would be $O(n^{3/2})$, and the sparsification would not produce any significant results. This shows that to take care of ``dangling'' triangles, we would need to sample the edges in them with different probabilities according to the situation at hand.

Under closer examination, one might discover that this problem stems from the following fact: consider a connectivity component $C$ in $G_\Delta$. If we were to take an induced subgraph of $G$ on vertices of $C$ and then build a triangle-weighted graph for it, the connectivity of this new triangle-weighted graph would most likely be lower than the connectivity of $C$.

As the above examples show, approaching motif sparsification purely from the point of view of sparsification of motif weighted graphs is difficult. Instead, we show, somewhat surprisingly, that a judicious composition of graph and hypergraph sparsification  methods leads to a very clean approach, which we describe next. After that, we demonstrate that motif weighted graph can still be used in the proposed framework to achieve a speed-up in running time for dense graphs.



\subsection{Strength-based sparsification}

As we have discussed, it seems that we can neither use hypergraph nor graph sparsification ideas directly to obtain motif sparsifiers. The reason for this is probably that a motif is an object that --- similarly to a hyperedge --- usually lives on sets of more than 2 vertices, but at the same time is composed of edges, i.e. it is closely related to graphs. As a consequence motif sparsification may be viewed as an intermediate problem between hypergraph and graph sparsification.

Indeed, our main contribution is to properly combine ideas from graph and hypergraph sparsification and to overcome some motif specific technical obstacles. Our starting point will be to extend the notion of strong connectivity that is an important ingredient to many sparsification approaches (see, for example, \cite{benczur_randomized_2015,chekuri2018minimum}) to the realm of motifs. Here we follow the hypergraph view and conceptually treat motifs as hyperedges. This way we can immediately extend the notion of connected components in hypergraphs \cite{chekuri2018minimum} to motifs by saying that a $k$-connected component is a maximal induced subgraph such that every cut is crossed by at least $k$ motif instances. This will allow us to define for each motif its importance as a measure of the amount it contributes to various cuts in the graph. 
The hypergraph view of motifs will also supply us with hypergraph cut counting arguments from \cite{kogan2015sketching} that can be easily transferred to motif cuts and that will be useful for the analysis. 

Once we have the definition of motif importance, it will be beneficial to switch to a graph-based view and think about how to compute the sparsifier. Our approach will be to sample edges but --- similarly to earlier work in graph sparsification --- we now need to identify important edges that we cannot miss for sparsification. In order to do so, we define the importance weight of an edge as the sum of the importance weights of its containing motifs. Edges whose importance weight is above a certain threshold will always be kept as sampling them would result in a variance that is too high.

For the remaining edges, we want to apply a sampling approach. Here, there are two more challenges. First, we need to deal with the non-linear behaviour of motif cut sizes and then we also need to address the fact that a motif is composed of several edges, which means that the events that two intersecting motifs are sampled is not independent, which means that we cannot use Chernoff bounds that are often used in the analysis of other sparsifying constructions. To deal with the non-linearity we observe that sparsifying by a constant factor is still possible and so we iteratively sparsify the graph $O(\log n)$ times by a constant. To deal with the dependencies in the sampling process and prove concentration, we use Azuma's inequality. During the different stages, edges that are no longer contained in any motif will receive a weight of $0$ and will then be dropped.

Finally, we observe that except for the sets of critical edges, all edges are sampled with the same probability and so we can use our approach to compute a sparsifier that works simultaneously for a set of motifs.

\subsection{Connectivity-based sparsification}
A major drawback of the strength-based algorithm is the need to enumerate all instances of a given motif. This task is hard, since enumeration takes time that is at least proportional to the number of motive instances, which in dense graph ($|E| = O(n^2)$) can easily reach $O(n^r)$.

However, the motif cut sparsification task doesn't implicitly require enumerating all of the motifs, and we show that by modifying an algorithm for exact subgraph counting  \cite{williams2013finding}, we can achieve sparsification in time $\tO(n^{\omega \lceil  r/ 3 \rceil })$, which is sublinear to the number of motifs in dense graphs.

The key idea is to move away from using the importances based on motif strengths to importances based on motif connectivities, where the connectivity of a motif instance is the minimal motif size of a cut crossing this instance. This new measure of importance can be approximated without needing to enumerate all motifs, which leads to the faster (in some settings) running time of our second algorithm.

In more detail, we adopt the sparsification approach of \cite{fung2019general} for our setting. A key object here is the motif weighted graph $G_M=(V, E, w_M)$, where, similarly to the triangle weighted graph, each edge $e$ is reweighted to $w_M(e)$ -- the sum of weights of motifs containing $e$. The main challenge is the approximation of motif-connectivity-based edge importance. This is done in two steps. First, we show that the connectivity of a motif instance is multiplicatively approximated by the minimum of motif connectivities of all edges in this instance, where the motif connectivity of an edge is the minimal size of a motif cut crossing this edge. Then, by dividing the graph into several layers, we are able to approximate the minimum-motif-connectivity-of-an-edge-based importance for each layer, which we then combine to get the final approximation.

The rest of the algorithm works in the same way as the first one, however we also use a result by \cite{abboud2021gomory} to compute all-pairs connectivities in $\tO(n^2)$ time. Our algorithm requires the motif connectivities of edges to be computed with multiplicative precision, which existing subquadratic approximation algorithms cannot deliver.

\subsection{Overview of Lower Bound}\label{sec:lower-bound-overview}

In \cref{sec:lower-bound}, we study the feasibility of producing a motif-cut sparsifier, similar to the one guaranteed by \cref{thm:intro-main}, in the setting where motif instances are required to be \emph{induced} subgraphs. The main difficulty in attempting to sparsify induced motifs is that the act of removing edges from $G$ may result in \emph{new} motif instances being created. This is not something we had to worry about in the proof of \cref{thm:intro-main}, and we could simply focus on preserving important motif instances that already existed in the original graph.

In fact, this difference turns out to result in a fundamental barrier, and we are able to show that any non-trivial sparsification may be impossible even for a motif as simple as the undirected $2$-path (see Theorem~\ref{thm:intro-graphlet-main}).

In our lower bound construction, the input graph will be the undirected, unweighted clique with the three edges of a specific triangle $(a,b,c)$ removed. More formally, we define $\Delta^-=(V, E_\Delta^-)$ as an unweighted, undirected graph on $n$ vertices, where
$$E_\Delta^-=\binom{V}{2}\setminus\big\{\{a,b\},\{b,c\},\{c,a\}\big\},$$
for distinct special vertices $a, b, c\in V$.

Note that while our Graph $\Delta^-$ is dense, the number of induced motifs is small, and each motif is of constant size. In the case of non-induced motif-sparsification, this setting would be trivial, as we could simply keep all edges contributing to any motif, thereby sparsifying the graph, and retaining the exact cut structure. In the case of induced motifs, however, this doesn't work, since removing edges may introduce additional motifs -- as it would in this example.

Specifically, the number of induced motif instances of the $2$-path motif is exactly $3(n-3)$, with each motif instance containing $2$ of the special vertices. In \cref{sec:graphlet-proof}, we essentially prove that any graph $\wh G$ that would sparsify $\Delta^-$ should have (at least some of) these same $2$-paths present. As it does in $\Delta^-$, this would result in a very large (quadratic) number of \emph{not-necessarily-induced} $2$-paths in $\wh G$. In order to insure that these aren't induced (and hence don't count as motif instances) $\wh G$ must be dense.

\begin{example}\label{ex:graphlet-sparsifier}
We give a slightly simpler -- but ultimately incorrect -- version of our above lower-bound construction for intuition. Consider the unweighted clique, with a single edge $(u,v)$ removed. More formally, $G=(V, E)$ is an unweighted undirected graph on $n$ vertices with
$$E=\binom{V}2\setminus\big\{\{u, v\}\big\}.$$

Attempting to sparsify this for the induced $2$-path motif, we can observe some of the same things as we do in our lower bound construction: Even though $G$ contains only a small, $\Theta(n)$, number of motifs, one cannot simply sparsify it by removing all edges that contribute to no motifs. The act of removing edges can create new induced $2$-paths, and we end up with a sparse graph whose induced-motif-cut structure doesn't resemble that of $G$ at all.

In fact, one can prove (in a similar manner to the proof of~\cref{thm:graphlet-main}) that no reweighted \emph{subgraph} of $G$ approximates its induced-motif-cut structure, for some small constant $\epsilon$. Surprisingly however, there does exist a weighted graph which achieves an arbitrarily close estimation: Let $\wh G=(V, \wh E, w)$ consist of the edge $\{u,v\}$ with weight $n^2$, and the $n-2$ edges in $u\times (V\setminus\{u,v\})$, each with weight $n^{-2}$. (This specificly gives a $(1\pm n^{-3})$-sparsifier, but the approximation can be arbitrarily improved by reweighting.) We leave the verification of the validity of this sparsifier to the reader.
\end{example}
\section{Preliminaries}\label{sec:prelims}

Let $G=(V,E, w)$ be a directed weighted graph with vertex set $V=\{1,\dots,n\}$ and edge set $E \subseteq V \times V$, $m := |E|$. We will assume that the edge weights are always positive. Denote by $W = \max_{e \in E} w(e) / \min_{e \in E} w(e)$.
In this paper, we study the connectivity
structure of higher order patterns in the graph. More precisely, we consider a given directed graph $\M=(V_\M,E_\M)$ which we assume to be a
frequently occurring subgraph of $G$ and which we
refer to as a \emph{network motif} or \emph{motif} for short \cite{milo2002}. While the idea behind motifs
is that they are more frequently occurring than what one would expect in a random graph \cite{milo2002}, we are not making any formal assumption of this kind during the paper. Still, our motivation is that the motifs are common subgraphs.
We will always assume that motifs are weakly connected, i.e. the undirected version of the motif is connected. We make this assumption since we are interested in graph cuts; there is no convincing definition of a motif cut for motifs that have more than one weakly connected component.
Formally, we define motifs as follows.

\begin{definition}[Motifs and Motif Instances]
Let $ \M=(V_\M,E_\M)$ be a
weakly connected directed graph which we refer to as a \emph{motif}.
A subgraph of $G$ that is isomorphic 
to $\M$ is called an \emph{instance} of motif $M$ in $G$.
The set of all instances of a motif $M$ in $G$ is denoted
$\MG GM$.
\end{definition}

The definition of motifs extends to undirected graphs in a straightforward way by encoding undirected edges as two directed edges\footnote{Note that if the graph is weighted, the weight assigned to the two resulting edges should be equal to the square root of the weight of the original edge. This is because in Definition \ref{def:motif-weight} weight of a motif instance will be defined as the product of its edge weights.}.

We will be interested in weighted graphs and therefore require a definition of weights of motif instances. In order to obtain such a definition, we first consider integer weighted graphs.
A common interpretation of such graphs is that they can be viewed as unweighted multigraphs in which the multiplicity of each edge equals its weight. This 
view can be immediately generalized to define the weight of a motif of integer weighted graphs. We simply think of replacing every weighted edge by a corresponding number of copies and then count the number of distinct motifs. That is, the weight of a motif becomes the product of its edge weights.
 The extension to real non-negative weighted edges is straightforward. 

\begin{definition}[Weight of Motif Instance]\label{def:motif-weight}
Let $G=(V, E, w)$ be a directed weighted graph. The weight $w(I)$ of a motif instance $I=(V_I,E_I)$ in $G$ is defined as
$$
w(I) = \prod_{e\in E_I} w(e).
$$
\end{definition}

Let $(S, V \setminus S)$ be a cut in $G$. We say that motif instance $I=(V_I, E_I)$ crosses this cut if one of its edges crosses this \emph{undirected} cut. Since the motifs are weakly connected by definition, this is equivalent to $V_I \cap S \neq \emptyset$ and
$V_I \cap (S \setminus V) \neq \emptyset$.

\begin{definition}[Motif Size of a Cut]\label{def:motif-cut-size}
Let $G=(V,E, w)$ be a directed weighted graph. For a motif $M$ the \emph{$M$-motif size} of cut $(S, V \setminus V)$ is defined as
\[
    \Val_{M, G}(S, V \setminus S) = \sum_{I \in \mathcal M(G,M): I \text{ crosses } (S, V \setminus S)} w(I).
\]
\end{definition}
Note that the previous definition is directly influenced by applied works such as~\cite{benson2016, hao2017, tsourakakis2017} which also redefine the cut size in terms of the number of motifs crossing a cut.

Our goal is to construct an algorithm for sparsifying a graph in such a way that the motif sizes of all cuts are $(1 \pm \epsilon)$ preserved. We formalize this notion as follows.

\begin{definition}\label{def:motif-cut-sparsifier}
Let $\M=(V_\M, E_\M)$ be a motif and let $G=(V, E, w)$ be a directed weighted graph. A directed weighted graph $\widehat{G}$ is an $(M, \e)$-motif cut sparsifier of $G$, if for every cut  $(S, V \setminus S)$, the following holds:
\[
    (1 - \e) \Val_{M, G}(S, V \setminus S) \leq \Val_{M, \widehat{G}}(S, V \setminus S) \leq (1 + \epsilon) \Val_{M, G}(S, V \setminus S).
\]
\end{definition}

\subsection{Strong Motif Connectivity}

We now extend the notion of strong connectivity used in graph cut sparsification
\cite{benczur_randomized_2015} to motifs. 
For a given motif $M$ 
we will define the concepts of strong $M$-connectivity as well as $M$-connected components, which both follow naturally from the standard notion of strong connectivity.
Our notion is also closely related to strong connectivity in hypergraphs \cite{kogan2015sketching}, if we view a motif as a hyperedge. The main difference is that motifs are composed of simpler objects, i.e. edges. Similarly to the case of graphs and hypergraphs, strong motif connectivity will allow us to get bounds on the number of distinct cuts that we need to consider in the analysis.

In graphs and hypergraphs one can now define sampling probabilities for edges or hyperedges and sample them accordingly. These probabilities are based on a definition of the strength of edges.
It is tempting to follow the same approach for motif instances, however, as already discussed in the technical overview, there is a problem. If we sample a set of motif instances then their union may contain other motif instances that were not contained in the sample. The reason is simply that motifs are composed of edges.
Therefore, later on, we will define an edge-based sampling procedure. It will still be useful for our purposes to define a notion of motif strength. 
We now give the formal definitions.

\begin{definition}[Motif Connectivity]
Let $\M=(V_\M, E_\M)$ be a motif, let $G=(V, E, w)$ be a directed weighted graph.
$G$ is $(k, M)$-connected if 
every cut $(S,V\setminus S$) in $G$ has $M$-motif size at least $k$.
\end{definition}
\begin{definition}[$k$-Strong $M$-Connected Component]
\label{def:kconnectedcomp}
Let $\M=(V_\M, E_\M)$ be a motif, let $G=(V, E, w)$ be a directed weighted graph.
For a value $k\in \R_+$, an induced subgraph $C=(V_C,E_C, w)$ of $G$ is called a $k$-strongly $M$-connected component of $G$,
if 
\begin{enumerate}[label=(\alph*)]
    \item $C$ is $(k, M)$-connected and 
    \item there is no induced subgraph $C'=(V_{C'},E_{C'}, w)$ of $G$ that is $(k, M)$-connected and has $V_C \subsetneq V_{C'}$.
\end{enumerate}
\end{definition}

We will consider two $M$-connected components distinct if their sets of vertices are distinct.

\begin{definition}[Motif Strength]\label{def:motif_strength}
Let $\M=(V_\M, E_\M)$ be a motif, let $G=(V, E, w)$ be a directed weighted graph. Let $I \in \MG GM$ be a motif instance. The motif strength $\kappa_I$ of $I$ is the maximum value $k$ such that there exists a $(k, M)$-connected component that contains $I$ as a subgraph.
\end{definition}


\section{Main Results}\label{sec:main-results}

In this section we present the main results of this paper. We express runtime and size bounds in $\tO$ notation, which hides factors polynomial in $\log n$ and motif size. We start by stating the upper bound results in full generality:

\begin{restatable}{theorem}{maingeneral}\label{thm:intro-main-general}
Let $L>0$ be an integer.
For every directed weighted graph $G=(V, E, w)$, $|V|=n$, every set of motifs $\{\M_i\}_{i \in[L]}$
 and every $\e\in (0, 1)$, a graph $G'$ such that it is an $(M_i, \e)$-motif sparsifier of $G$ for all $i \in [L]$ with $\tO(L n/\epsilon^2)$ edges can be computed in time
 \[
 \tO\left(L|E| + \sum_{i=1}^{L} T(G, M_i) \right),
 \] 
where $T(G, M_i)$ for $i \in [L]$ is the time required to enumerate all instances of $M_i$ in $G$. The algorithm succeeds with probability at least $1 - (L + 1) n^{-c_1}$ for an arbitrarily large global constant $c_1$.
\end{restatable}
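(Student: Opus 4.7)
The plan is to define a partial sparsification primitive \parspar{} that, given a graph $H$ and the motifs $\{\M_i\}_{i\in[L]}$, outputs a graph $H'$ such that (i) with probability $1 - n^{-\Omega(1)}$ every $M_i$-motif cut of $H$ is preserved in $H'$ up to a multiplicative factor of $1 \pm \epsilon/\Theta(\log n)$ simultaneously for all $i$, and (ii) a constant fraction of the ``non-heavy'' edges of $H$ are eliminated in expectation. Iterating \parspar{} for $\Theta(\log n)$ rounds starting from $G$ then yields a graph $G'$ of size $\tO(Ln/\epsilon^2)$ whose $M_i$-motif cut structure is preserved to within $1 \pm \epsilon$ for every $i \in [L]$, because $(1 \pm \epsilon/\Theta(\log n))^{\Theta(\log n)} = 1 \pm O(\epsilon)$; the failure probabilities are absorbed by a union bound over rounds, motifs, and an appropriate collection of cuts.

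\paragraph{A single round.} For each motif $M_i$ the primitive (a) enumerates all instances $\MG H{M_i}$ and computes their motif strengths $\kappa_I$ using the strong $M_i$-connected component decomposition of \cref{def:kconnectedcomp}; (b) defines an edge importance $\ws_i(e) = \sum_{I \in \MG H{M_i}:\, e \in I} w(I)/\kappa_I$; and (c) declares an edge \emph{heavy} for $M_i$ if $\ws_i(e)$ exceeds a threshold $\tau = \Theta(\epsilon^{-2}\log n)$. The union $\bigcup_{i\in[L]} H_i$ of heavy sets is kept deterministically with its current weights, and every remaining edge is included in $H'$ independently with a single uniform probability $p \in (0,1)$ bounded away from $1$ (e.g., $p = 1/2$), reweighted by $1/p$ upon inclusion. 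A uniform $p$ is enough because, once the heavy sets are carved out, every surviving edge is simultaneously ``below threshold'' for all $L$ motifs, so no per-edge importance tuning is necessary; this is the key structural reason why a single sparsifier works for the whole family $\{\M_i\}$ at essentially no extra cost per motif beyond $\tO(n/\epsilon^2)$ heavy edges.

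\paragraph{Analysis and main obstacle.} Fix a cut $(S, V\setminus S)$ and a motif $\M_i$. The quantity $\IVal_{M_i, H'}(S, V\setminus S)$ is a multilinear polynomial in the independent Bernoulli edge indicators; because motif weights are products of edge weights, intersecting motif instances are correlated, so I cannot invoke Chernoff directly. Instead, I would expose the edge indicators one at a time and apply Azuma's (or Freedman's) inequality to the Doob martingale, bounding the increments via the heavy-edge threshold $\tau$ and the lower bound $\kappa_I \ge k$ that holds for every instance $I$ inside a $(k, M_i)$-strong component; this yields a per-cut-per-motif-per-round tail of $n^{-\Omega(1)}$. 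To union-bound over cuts I would lift the hypergraph cut-counting bound of~\cite{kogan2015sketching} to motif cuts by treating the instances in each strong component as hyperedges, obtaining at most $n^{O(\alpha)}$ motif cuts of size at most $\alpha k$ inside any $(k, \M_i)$-strong component, which combined with the Azuma tail suffices. The size bound follows from a standard Markov/Chernoff argument on the $p$-sampling, plus a separate bound of $\tO(n/\epsilon^2)$ on the number of heavy edges per motif (an adaptation of the Benczur-Karger strength bound to motifs via \cref{def:motif_strength}). The running time $\tO(L|E|) + \sum_i T(G, \M_i)$ comes from performing enumeration once on $G$ (dominated by the initial round) and $O(\log n)$ lighter sparsification steps on successively smaller graphs. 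The main obstacle is the Azuma step: because motif contributions to a cut can vary wildly in strength, the martingale increments must be controlled by splitting contributions by $\kappa_I$-level and exploiting the heavy-edge carve-out to kill the high-variance terms; this is precisely why we commit only to a $(1\pm\epsilon/\log n)$ error per round rather than attempting one-shot $(1\pm\epsilon)$ sparsification, and why the iterative constant-factor shrinkage is essential.
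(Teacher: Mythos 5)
Your overall plan is essentially the paper's: define a partial sparsification primitive that (a) identifies a set of ``critical/heavy'' edges by motif strength-based importance weights and keeps them deterministically, (b) samples all remaining edges with a single uniform probability, (c) proves concentration per motif and per cut via an edge-exposure martingale and Azuma, union-bounded using the hypergraph cut-counting bound applied within strong $M$-connected components, and then iterates $\Theta(\text{polylog }n)$ times. That is exactly the structure of \parspar{} and \motspar{} in the paper, including the crucial observation that the non-critical edges can all use the same sampling probability, which is what makes the multi-motif simultaneity free. However, there are three quantitative issues that would break the proof as written.

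First, the critical-edge threshold is inverted. You write $\tau=\Theta(\epsilon^{-2}\log n)$, but it must be a \emph{small} number, on the order of $\Theta\!\left(\epsilon'^2/(r^*\log n)\right)$ with $\epsilon'=\epsilon/\Theta(r^*\log n)$ (the paper uses $d\epsilon'^2/(r^*(\log n + r))$ in \cref{def:critical}). The whole point is that edges with importance weight \emph{above} this small threshold are kept; the Azuma increments for the remaining edges are then all at most $\tau$, giving $\sum_e\M^S(e)^2\le\tau\sum_e\M^S(e)\le\tau r^*\alpha$, which yields the $n^{-\Omega(\alpha)}$ tail only if $\tau = O(\epsilon'^2 p^{2r^*}/(r^*\log n))$. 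Separately, the size of the critical set is bounded by the global sum of importance weights ($O(rr^*n)$ after strength estimation) divided by $\tau$, so we need $\tau$ of that small order to make the critical set $\tO(n/\epsilon^2)$. Your threshold of $\Theta(\epsilon^{-2}\log n)$ would produce a critical set of size $O(n\epsilon^2/\log n)$, far too small, and the remaining edges would have importance weights too large for the martingale concentration to go through. This looks like a confusion between the strength scale ($k_e\le\Theta(\epsilon^{-2}\log n)$ in the Benczur--Karger setup) and the importance scale ($1/k_e$).

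Second, you cannot compute exact motif strengths $\kappa_I$ within the claimed running time $\tO(L|E|+\sum_i T(G,M_i))$. The paper instead uses the Chekuri--Xu strength-\emph{estimation} primitive on the motif hypergraph (\cref{lem:strength_est}), which produces underestimates $\kappa'_I\le\kappa_I$ in near-linear time in $|\MG GM|$, with the key property $\sum_I w(I)/\kappa'_I = O(r(n-1))$ needed both to bound the critical set size and to guarantee the critical set still contains all genuinely critical edges (\cref{lem:crit_correct}). Without the estimation step, the running time claim is unsupported.

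Third, $p=1/2$ introduces an exponential dependence on motif size that the theorem's $\tO$ does not absorb. With a flat $p$ the Azuma increment must be bounded by $\M^S(e)/p^{r^*}$ (all $r^*$ edges of a motif may have been reweighted by $1/p$), so with $p=1/2$ you pay a $2^{r^*}$ factor that propagates into the threshold and hence a $2^{\Omega(r^*)}$ factor in the critical-set size. The paper's choice $p=2^{-1/(2r^*_{\max})}$ makes $1/p^{r^*}=\sqrt2$, at the cost of needing $\Theta(r^*_{\max}\log n)$ rounds instead of $\Theta(\log n)$; since $\tO$ only hides factors \emph{polynomial} in motif size, this is the right trade-off. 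For constant-size motifs your $p=1/2$ is fine, but it does not give the general statement.

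These are quantitative rather than conceptual issues; the decomposition of cut values over the laminar family of strong components, the per-component Azuma + cut counting argument, and the iterate-and-multiply-errors scheme are all the right ideas.
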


The main result of the paper is an immediate corollary:

\begin{corollary}\label{cor-main}
For every graph $G=(V, E)$, $|V|=n$, every constant integer $k\geq 2$, $\e\in (0, 1)$, there exists an $\e$-motif sparsifier $G'$ of $G$ with respect to all motifs $M$ of size at most $k$ simultaneously  that contains  $\tO(n/\e^2)$ edges. The graph $G'$ can be constructed in polynomial time.
\end{corollary}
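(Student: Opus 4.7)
The plan is to derive this corollary as an essentially immediate consequence of the more general \cref{thm:intro-main-general}, reducing everything to counting the relevant structures and bounding the enumeration time.

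First, I would observe that the number of pairwise non-isomorphic weakly connected directed graphs on at most $k$ vertices is a finite constant $L = L(k)$ depending only on $k$. (A crude bound: there are at most $k$ choices for the vertex count, and at most $2^{k(k-1)}$ directed graphs on a labelled vertex set of that size, so $L \leq k \cdot 2^{k(k-1)}$.) Let $\{M_i\}_{i\in[L]}$ be one representative from each isomorphism class of weakly connected directed graphs with between $2$ and $k$ vertices. Since the corollary is stated for an undirected graph $G$, I would first convert $G$ into a directed weighted graph $\vec G$ by replacing each undirected edge with two antiparallel directed edges, following the footnote in \cref{sec:prelims}. Motif instances in $G$ (under the undirected interpretation) then correspond to motif instances in $\vec G$ with respect to appropriate $M_i$'s, so it suffices to produce a simultaneous motif sparsifier of $\vec G$ with respect to $\{M_i\}_{i\in[L]}$.

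Next, I would invoke \cref{thm:intro-main-general} on $\vec G$ and $\{M_i\}_{i\in[L]}$ to produce a graph $G'$ that is an $(M_i, \epsilon)$-motif sparsifier of $\vec G$ for every $i\in[L]$ simultaneously, with $\tO(L n/\epsilon^2) = \tO(n/\epsilon^2)$ edges (since $L$ depends only on $k$ and is therefore absorbed by the $\tO$). For the running time, I would bound each enumeration cost $T(\vec G, M_i)$ trivially by $O(n^{|V_{M_i}|}) = O(n^k)$, since one can enumerate all ordered tuples of at most $k$ vertices and check in $O(k^2)$ time whether the induced edge pattern matches $M_i$. Together with the $\tO(L|E|) = \tO(n^2)$ additive overhead, the total runtime is $\tO(L(n^2 + n^k)) = \poly(n)$ for constant $k$, as required.

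The only remaining item is success probability: \cref{thm:intro-main-general} succeeds with probability $1 - (L+1)n^{-c_1}$ for an arbitrarily large constant $c_1$, so choosing $c_1$ large enough (possible since $L$ is a constant in $k$) gives high probability of success, and hence the existence of such a $G'$. I do not anticipate a genuine obstacle: the content of the corollary is entirely carried by \cref{thm:intro-main-general}, and the step of bounding $L$ by a function of $k$ alone is where one must be slightly careful, since directed graphs on $k$ vertices are a priori more numerous than undirected ones, but the crude bound $L \leq k\cdot 2^{k(k-1)}$ is more than enough to be absorbed by $\tO(\cdot)$ for constant $k$.
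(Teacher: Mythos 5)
Your proposal is correct and matches the (implicit) route the paper takes: the paper simply asserts that \cref{cor-main} is "an immediate corollary" of \cref{thm:intro-main-general}, and your argument supplies exactly the missing details — enumerating the constant number $L(k) \le k\cdot 2^{k(k-1)}$ of isomorphism classes of weakly connected digraphs on at most $k$ vertices, passing to the directed model via antiparallel edges per the footnote in \cref{sec:prelims}, bounding each enumeration time $T(\vec G, M_i)$ by brute force as $O(n^k \cdot k^2)$, and absorbing all constants into the $\tO(\cdot)$ and into the choice of $c_1$.
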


The second algorithm provides the following guarantee:

\begin{restatable}{theorem}{maingeneral2}\label{thm:intro-main-general-2}
Let $L>0$ be an integer.
For every directed weighted graph $G=(V, E, w)$, $|V|=n$, every set of motifs $\{\M_i\}_{i \in[L]}$
 and every $\e\in (0, 1)$, a graph $G'$ such that it is an $(M_i, \e)$-motif sparsifier of $G$ for all $i \in [L]$ with $\tO(L n/\epsilon^2)$ edges can be computed in time
\[
    \tO(L (r^r + n^{\omega \lceil r / 3\rceil }) \log W),
\] 
where $r$ is the maximum number of vertices in $M_i$, $i \in [L]$, $W=\max_{e \in E} w(e) / \min_{e \in E} w(e)$ and $\omega < 2.37286$ is the matrix multiplication constant \cite{alman2021refined}.
The algorithm succeeds with probability at least $1 - (L + 1) n^{-c_1}$ for an arbitrarily large global constant $c_1$.
\end{restatable}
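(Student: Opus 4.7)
\textbf{Proof Proposal for Theorem~\ref{thm:intro-main-general-2}.}

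The plan is to replace the strength-based importance used to prove \cref{thm:intro-main-general} with a \emph{connectivity}-based importance that can be computed without enumerating motif instances, and then to feed this importance into the same iterative edge-sampling scheme. Concretely, for a single motif $M$, define the \emph{$M$-connectivity of an edge} $e$, call it $c_M(e)$, as the minimum $M$-motif size of a cut that separates the endpoints of $e$; and the \emph{$M$-connectivity of an instance} $I\in\MG GM$ as $c_M(I)=\min_{(S,\bar S)\text{ cut crossing }I}\Val_{M,G}(S,\bar S)$. I will then set the sampling importance of an edge $e$ to be (up to logarithmic factors) $\sum_{I\ni e} w(I)/c_M(I)$ and show, in the mold of \cite{fung2019general}, that sampling each edge with probability proportional to this quantity and reweighting by the inverse gives the desired $\e$-sparsifier. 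The cut-counting needed to union-bound over all cuts is inherited from the hypergraph view of motifs from \cref{sec:technical_overview} (equivalently from \cite{kogan2015sketching}), and concentration for a single cut is obtained via Azuma's inequality, exactly as in the strength-based algorithm, to handle the dependencies introduced by motifs sharing edges. This reduces the problem to two algorithmic subtasks: (i) computing good approximations of the edge quantities $c_M(e)$ in time $\tO(n^{\omega\lceil r/3\rceil})$, and (ii) translating these into a multiplicative approximation of $c_M(I)$ for every instance $I$, again without enumerating instances.

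For step (i) the first ingredient is a ``Williams-style'' counting primitive for motifs of size $r$. The algorithm of \cite{williams2013finding} splits the motif into three pieces of size $\lceil r/3\rceil$ and uses fast matrix multiplication to assemble partial instance counts; I will adapt it so that, instead of merely counting instances, it produces, for every edge $e$, the value $w_M(e)=\sum_{I\ni e} w(I)$ --- i.e.\ it outputs the motif-weighted graph $G_M$ --- in time $\tO(n^{\omega\lceil r/3\rceil})$. Once $G_M$ is in hand, the cut value $\Val_{M,G}(S,\bar S)$ equals $\tfrac{1}{|E_M|}$ times the weight of the cut $(S,\bar S)$ in $G_M$, so $c_M(e)$ is (up to the constant $|E_M|$) the minimum $s$-$t$ cut in $G_M$ separating the endpoints of $e$. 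I will compute all these min-cuts at once by running the recent $\tO(n^2)$-time Gomory--Hu tree algorithm of \cite{abboud2021gomory} on $G_M$; because \cite{abboud2021gomory} produces an exact Gomory--Hu tree, the resulting $c_M(e)$ values are exact, which is crucial since subquadratic \emph{approximate} Gomory--Hu algorithms do not provide multiplicative precision.

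For step (ii), I will show that $c_M(I)$ is approximated up to an $O(r^r)$ factor by $\min_{e\in E_I} c_M(e)$, which is immediate from one direction ($c_M(I)\le c_M(e)$ for any $e\in E_I$ since any cut crossing $e$ crosses $I$) and from a simple combinatorial argument in the other direction: a min cut witnessing $c_M(I)$ separates the at most $r$ vertices of $I$ into two nonempty groups, so it separates the endpoints of some $e\in E_I$, and the number of distinct $I'\in\MG GM$ crossing a cut is at most $r^r$ times the number of edges crossing it (each crossing motif instance contains at least one crossing edge and at most $r^r$ motif instances can share the same distinguished crossing edge after fixing a vertex labeling). To prevent this $r^r$ slack from inflating the sparsifier size, I will then follow the ``layering'' step of \cite{fung2019general}: partition the edges by dyadic scales of $c_M(e)$, and inside each layer use the local minimum to define importances; the final importance estimate is the sum of the per-layer contributions, which telescopes correctly. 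Combining with the iterative $O(\log n)$ constant-factor sparsification rounds used for \cref{thm:intro-main-general} and iterating over the $\log W$ weight scales (needed since importances can vary multiplicatively with $W$) gives the claimed runtime $\tO(L(r^r+n^{\omega\lceil r/3\rceil})\log W)$ and the $\tO(Ln/\e^2)$ edge bound; the union bound across the $L$ motifs and $O(\log n)$ iterations gives the stated $1-(L+1)n^{-c_1}$ success probability.

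The main obstacle I expect is step (ii): arguing that $\min_{e\in E_I} c_M(e)$ multiplicatively approximates $c_M(I)$ and, in particular, ensuring that the layering argument only loses a poly-logarithmic factor in the final importance sum despite the $r^r$ slack in the per-edge approximation. Carefully bookkeeping the interaction between the layers and the Gomory--Hu values --- so that the sampling probabilities really do upper-bound the ``true'' connectivity-based importances by at most a poly-logarithmic factor and their sum is $\tO(n)$ --- is where the bulk of the technical work will lie. The matrix-multiplication-based computation of $G_M$ and the call to \cite{abboud2021gomory} are, by comparison, plug-and-play.
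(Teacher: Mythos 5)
Your proposal matches the paper's strategy at the architectural level: compute the motif‑weighted graph $G_M$ via a Williams‑style matrix‑multiplication primitive, run the $\tO(n^2)$ Gomory--Hu algorithm of~\cite{abboud2021gomory} on $G_M$ to get edge connectivities, translate those into an estimate of a connectivity‑based motif importance, and feed that into the iterative sampling scheme with Azuma for concentration. That is exactly what the paper's \fparspar{} does. However, your step~(ii) contains a genuine error, and the scaffolding built to compensate for it misidentifies what the layering in~\cite{fung2019general} is actually for.

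The false claim is that ``the number of distinct $I'\in\MG GM$ crossing a cut is at most $r^r$ times the number of edges crossing it.'' Take $G$ a clique, the motif a triangle, and a cut with a single crossing edge $e$: there are $\Theta(n)$ triangles through $e$, while the number of crossing edges is $1$, so the ratio is unbounded. The comparison you actually want is not to the \emph{number} of crossing edges but to the \emph{weight} of the cut in the motif‑weighted graph $G_M$, i.e.\ $\Val_{G_M}(S,\bar S)=\sum_{e\text{ crossing}} w_M(e)$. A crossing instance $I$ contributes $w(I)\cdot|\{e\in E_I:e\text{ crossing}\}|$ to this sum, and the multiplicity is between $1$ and $r^*=|E_M|$, so one gets the clean two‑sided bound $\Val_{M,G}(S,\bar S)\le\Val_{G_M}(S,\bar S)\le r^*\Val_{M,G}(S,\bar S)$. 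Applied to the minimizing cut, this yields $k_I\le\min_{e\in E_I}k_{M,e}\le r^*k_I$, i.e.\ an $r^*$‑factor (not $r^r$) approximation of the instance connectivity by the minimum edge connectivity taken \emph{in $G_M$} (the paper's Lemma \ref{lem:motif_conn_approx}). Your $c_M(e)$ is also defined differently (min $M$‑motif size of a cut separating $e$, not the $G_M$‑cut value); the Gomory--Hu call gives you the $G_M$ values, so you should define the estimate in those terms, as the paper does with $k_{M,e}$.

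Two further mismatches. First, the dyadic layering over edge‑connectivity scales ($E_j=\{e:k_e\ge 2^j k_{\min}\}$) is not there to ``prevent the $r^r$ slack from inflating the sparsifier size'' --- the $r^*$ slack is constant and simply absorbed into the threshold $\Upsilon'$. The layering exists because computing $\sum_{I\ni e} w(I)/\min_{e'\in E_I}k_{M,e'}$ naively requires enumerating instances; by bucketing edges and calling the motif‑weighting primitive once per bucket, you can write the estimate as a telescoping sum $\sum_j (w_{M,j}(e)-w_{M,j+1}(e))\,r^*/2^j$ and pay only $O(\log W+\log n)$ extra calls, which is where the $\log W$ in the runtime comes from. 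Second, the union bound in this connectivity‑based analysis does not go through the hypergraph cut‑counting bound of~\cite{kogan2015sketching} as in the strength‑based proof; instead the paper uses a motif version of the $k$‑projection counting theorem of~\cite{fung2019general} (its Lemma \ref{lem:proj_num}, proved by mapping motif $k$‑projections injectively into edge $k$‑projections in $G_M$). The projection‑counting bound is weaker than the cut‑counting bound but is the one compatible with connectivities rather than strengths. With those two corrections --- using $G_M$‑cut weight (hence the $r^*$ factor) in place of the $r^r$ claim, and swapping in the projection‑counting union bound --- your proposal aligns with the paper's proof.
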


Although the two algorithms are very similar, there are cases when the first algorithm is faster than the second one. It would still be so even if we were to construct the motif weighted graph through enumeration. This is because computing all-pairs connectivities takes $\tO(n^2)$ time, while the first algorithm can work in nearly-linear time with respect to the number of motifs. This is relevant when, for example, we have only one motif --- triangle --- and $|E|=O(n^{4/3 - \delta})$ for $\delta > 0$. Then enumeration can be done in time $O(|E|^{3/2}) = O(n^{2 - 3 \delta /2})$ producing at most $O(|E|^{3/2})$ motif instances.

Last but not least, we derive a negative result on the possibility of constructing a motif cut sparsifier for {\bf induced} motif instances. The definitions of motif cut size and motif cut sparsifier are straightforwardly adapted from non-induced case by counting only induced motif instances as motif instances. See \Cref{sec:lower-bound} for details.

\begin{restatable}{theorem}{lowerbound}\label{thm:graphlet-main}
Let $f(n)=o(n^2)$ and let 
$\eps,0 < \eps \le 1/500$.
There exists a motif $M=(V_M,E_M)$  such that 
for every sufficiently large integer $n$, there exists a graph $G=(V, E)$ on $n$ vertices, such that it is impossible to construct an $(M,\epsilon)$-induced-motif cut sparsifier for $G$ with $f(n)$ non-negatively weighted edges.

\end{restatable}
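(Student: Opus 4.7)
The plan is to prove the theorem with $M$ the undirected $2$-path (a path on three vertices) and $G=\Delta^-$, as defined in \cref{sec:lower-bound-overview}. A direct inspection shows that the induced $2$-paths of $G$ are precisely the triples $\{s,x,s'\}$ with $\{s,s'\}\subseteq\{a,b,c\}$ distinct and $x\in V_0:=V\setminus\{a,b,c\}$ as center, so $G$ has exactly $3(n-3)$ of them. Starting from the identity
\[
\1[\{v_1,v_2,v_3\}\ \text{crosses}\ (S,V\setminus S)] \;=\; x_{v_1}+x_{v_2}+x_{v_3} - x_{v_1}x_{v_2} - x_{v_2}x_{v_3} - x_{v_1}x_{v_3},
\]
with $x_v:=\1_{v\in S}$, one rewrites the induced-motif cut function of an arbitrary graph $H$ on $V$ as $\Val_{M,H}(S) = \sum_v \alpha_v^H x_v - \sum_{u<v}\beta_{uv}^H x_u x_v$, where $\alpha_v^H$ and $\beta_{uv}^H$ are the total induced-$2$-path weights in $H$ through $v$ and through both $u$ and $v$. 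For $G$ one reads off $\alpha_s^G=2(n-3)$, $\alpha_x^G=3$, $\beta_{ss'}^G=n-3$, $\beta_{sx}^G=2$, $\beta_{xy}^G=0$ for distinct $s,s'\in\{a,b,c\}$ and $x,y\in V_0$.

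Now suppose toward contradiction that some $(M,\eps)$-induced-motif sparsifier $\wh G$ exists with only $f(n)=o(n^2)$ non-negatively weighted edges. Evaluating the sparsifier bound on singleton cuts gives $\alpha_v^{\wh G}\in(1\pm\eps)\alpha_v^G$ for every $v$; evaluating it on the cut $(V_0,\{a,b,c\})$ (whose $G$-value is $3(n-3)$) and subtracting the singleton estimates forces $\sum_{u<v\in V_0}\beta_{uv}^{\wh G}\le O(\eps n)$. Since every induced $2$-path instance of $\wh G$ with at least two vertices in $V_0$ contributes positively to this sum, the total weight of such ``bad'' instances is $O(\eps n)$. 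A short analogous quadratic cleanup on cuts involving $\{a,b,c\}$ lets me assume WLOG that $ab,ac,bc\notin\wh E$, up to $O(\eps n)$ additional slack. Consequently, the ``good'' $2$-path instances -- those of the form $(s,x,s')$ centered at $x\in V_0$ with endpoints $s,s'\in\{a,b,c\}$ -- must carry essentially all of $\wh G$'s $\approx 3(n-3)$ total $2$-path weight. Writing $w_s(x):=w(sx)\cdot\1_{sx\in\wh E}$, this translates into $\sum_{x\in V_0}w_s(x)w_t(x)\approx n-3$ for each pair $s\ne t\in\{a,b,c\}$, together with the pointwise identity $\alpha_x^{\wh G}\approx w_a(x)w_b(x)+w_a(x)w_c(x)+w_b(x)w_c(x)\approx 3$ for all but an $O(\eps)$-fraction of $x\in V_0$.

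To extract $\Omega(n^2)$ edges, I apply the elementary inequality $(w_a+w_b+w_c)^2\ge 3(w_aw_b+w_aw_c+w_bw_c)$, which at every ``good'' $x$ yields $\max_s w_s(x)\ge c_0$ for some absolute constant $c_0>0$ (say $c_0=4/5$). By pigeonhole, at least one $s\in\{a,b,c\}$ -- say $s=a$ -- satisfies $|N_a^*|=\Omega(n)$, where $N_a^*:=\{x\in V_0 : w_a(x)\ge c_0\}$. For every $x\in N_a^*$, the contribution of $a$-centered $2$-paths $(y,a,x)$ that are induced in $\wh G$ (i.e., satisfy $xy\notin\wh E$) to $\alpha_x^{\wh G}$ equals $w_a(x)\sum_{y\in N_a\setminus\{x\},\,xy\notin\wh E}w_a(y)$; since $\alpha_x^{\wh G}\le 3(1+\eps)$ and $w_a(x)\ge c_0$, restricting $y$ to $N_a^*$ (so $w_a(y)\ge c_0$) gives $|\{y\in N_a^*\setminus\{x\}:xy\notin\wh E\}|\le 3(1+\eps)/c_0^2 = O(1)$. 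Thus the subgraph of $\wh G$ induced on $N_a^*$ has at least $\binom{|N_a^*|}{2} - O(|N_a^*|) = \Omega(n^2)$ edges, contradicting $|\wh E|=o(n^2)$.

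The main obstacle is to make all the slacks compatible: the $(1\pm\eps)$-sparsifier error, the $O(\eps n)$ slack in the WLOG reduction on edges among $\{a,b,c\}$, the $O(\eps)$-fraction of exceptional vertices with abnormally large bad contribution (discarded via a Markov-type step from $\sum_x\text{bad}_x \le O(\eps n)$), and the constant-factor losses from the pigeonhole and the final Tur\'an-style edge count must fit together under a single concrete threshold like $c_0=4/5$. The hypothesis $\eps\le 1/500$ in the statement is chosen as a uniform safe bound that absorbs all these constant-factor slacks simultaneously.
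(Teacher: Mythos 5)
Your overall strategy closely mirrors the paper's: same construction ($G=\Delta^-$, $M$ the induced $2$-path), same use of singleton cuts and the cut $(V_0,\{a,b,c\})$ to control motif totals and the instances with $\ge 2$ peripheral vertices, same pigeonhole step to find $\Omega(n)$ peripheral vertices strongly connected to one central vertex $a$, and a similar final count that such a dense set forces $\Omega(n^2)$ edges. The polynomial identity $\Val_{M,H}(S)=\sum_v\alpha_v^H x_v-\sum_{u<v}\beta_{uv}^H x_ux_v$ is a nice reformulation of the paper's bookkeeping via the categories $\mathcal M_0,\dots,\mathcal M_3$.

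The gap is the step ``A short analogous quadratic cleanup on cuts involving $\{a,b,c\}$ lets me assume WLOG that $ab,ac,bc\notin\wh E$.'' This is not a WLOG and is not substantiated. Deleting an edge such as $\{a,b\}$ from the \emph{fixed} sparsifier $\wh G$ destroys $s$-centered $2$-paths ($x{-}a{-}b$, $a{-}b{-}x$) and may create new $x$-centered ones ($a{-}x{-}b$), so the modified graph need not remain an $(M,\epsilon)$-sparsifier. Moreover the assumption is load-bearing in two places: (i) it is the only thing ruling out $\mathcal M_3$, the instance on $\{a,b,c\}$, whose weight is not otherwise controlled -- your cut arguments only constrain products like $w(ab)w(bc)$, not $w(ab)$ itself; and (ii) the identity $\alpha_x^{\wh G}\approx w_a(x)w_b(x)+w_a(x)w_c(x)+w_b(x)w_c(x)$ fails if, say, $ab\in\wh E$, because then $\alpha_x$ also absorbs $s$-centered $\mathcal M_2$ contributions of the form $w(ax)w(ab)$ or $w(bx)w(ab)$, which are not captured by $\sum_{s\ne t}w_s w_t$, and which have only one peripheral vertex, hence are \emph{not} bounded by your estimate $\sum_{u<v\in V_0}\beta_{uv}^{\wh G}=O(\epsilon n)$. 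Without that identity, the AM--GM step $(w_a+w_b+w_c)^2\ge 3(w_aw_b+w_aw_c+w_bw_c)$ does not give $\max_s w_s(x)\ge c_0$. The paper avoids this by bounding $w(\mathcal M_3)$ jointly with $w(\mathcal M_0)$ via the cut $S=\{a,b,c\}$ (Claim~\ref{claim:m-2-graphlets}), and -- crucially -- by replacing your AM--GM step with the local combinatorial deduction of Claim~\ref{claim:graphlet-strong-connection}: if $x$ strongly contributes to all three supports $\{a,b,x\},\{b,c,x\},\{a,c,x\}$, then $x$ is adjacent to two central vertices, say $a,b$, which in turn forces $\{a,b\}\notin\wh E$ (else $\{a,b,x\}$ would be a triangle, not a $2$-path) and hence $w(ax)w(bx)\ge 1/2$. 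You would need a lemma of this kind to make the polynomial approach rigorous.
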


Notice that this also includes graphs that are {\bf not} subgraphs of the original graph.

The paper is organised as follows. In \cref{sec:overview-parspar} we introduce \cref{alg:par_spars} (\parspar{}) for sparsifying an input graph by a constant factor while preserving its motif cut structure; we analyze this algorithm in \cref{sec:overview-parspar,sec:analysis-parspar}. Then, in \cref{analysis_motif} we introduce \cref{alg:motif_spars} (\textsc{MotifSparsification }), and prove that it achieves the guarantees of \cref{thm:intro-main-general}, which we prove at the end of the section.
Finally, in \cref{sec:lower-bound}, we present and prove our main lower bound result.

\section{Overview and analysis of \parspar{}}\label{sec:overview-parspar}

In this section we will develop the main algorithmic tool of this paper --- a procedure we call \parspar{} (\cref{alg:par_spars}) that with high probability sparsifies any graph (with sufficiently many edges) by a constant factor while approximately maintaining the motif cut sizes for a set of motifs. 
Once we have this procedure available, we can iterate it $\Theta(\log n)$ times to obtain our final sparsifier. Details can be found in \cref{analysis_motif}.

 Let $\{\M_1, \ldots, \M_L\}$ be a set of motifs. We aim to obtain a graph $G'$ such that it is a $( \e, M_i)$-motif cut sparsifier of $G$ simultanuously for \emph{all} $M_i$, $i \in [L]$. For $i\in[L]$, denote $r_i = |V_{\M_i}|$ and $r^*_i = |E_{\M_i}|$ as the size of the vertex and edge set of the $i$-th motif respectively; denote $r_{max} = \max_{i \in [L]} r_i$, $r^*_{max} = \max_{i \in [L]} r^*_i$ as the largest $r_i$ and $r^*_i$ value among all $i \in [L]$, respectively. As we will see, the running time and the sparsifier size depends on $r_{max}$ and $r^*_{max}$.

In our proofs, we will use a sufficiently small constant $d>0$, as well a constant $c_1>0$ which will govern the success probability of the algorithm. The value of $d$ depends on $c_1$; this dependency is determined in \cref{lem:layer_concentration}. The value of $c_1$ is arbitrary; we can for example assume that $c_1 = 10$ (this would ultimately lead to the failure probability being at most $n^{-4}$).


Our procedure \parspar{} is very simple. 
It identifies a set of \emph{critical edges} that have to be included in the sparsifier as sampling them would result in too high variance. The remaining edges will be taken with constant probability $p$. One may simply set $p=1/2$. However, our analysis implies that the size of the set of critical edges increases exponentially in $1/p$ and
it turns our that a better choice will be $p=2^{-1/{(2r^*_{max})}}$ as this balances the number of repetitions needed to sparsify the graph and the size of the set of critical edges.

We start by introducing definitions used in the algorithm.

\begin{definition}[Motif Weight of an Edge] \label{def:edge_motif_weight}
Let $\mathcal M=(V_\M,E_\M)$ be a motif and $G=(V,E, w)$ be a directed weighted graph.
Then the $\M$-motif weight of an edge $w_\M(e)$ is defined as \[w_\M(e) = \sum_{I \in \MG GM: e \in E(I)} w(I).\]
\end{definition}

\begin{definition}[Importance Weight]\label{def:edge_importance}
Let $\M=(V_\M,E_\M)$ be a motif and $G=(V,E, w)$ be a directed weighted graph.
Then 
\begin{itemize}
    \item for $I \in \MG GM$ the importance weight in $G$ is $\ws(I) = w(I) / \kappa_I$,
    \item for an edge $e \in E$, the $\M$-importance weight in $G$ is \[\ws_\M(e) = \sum_{I \in \MG GM: e \in E(I)} \ws(I).\]
\end{itemize}
\end{definition}

We now formally define our notion of critical edges.

\begin{definition}[Critical Edge]
\label{def:critical}
Let $\M=(V_\M,E_\M)$ be a motif and $G=(V,E, w)$ be a directed weighted graph.
An edge $e$ is called $\M$-critical if the $\M$-importance weight of $e$ is at least $\frac{d \epsilon'^2}{r^* (\log n + r)}$.
\end{definition}

While it is possible to compute the strengths of all motif instances exactly, it can be computationally expensive. Instead, we will approximate them.

\begin{lemma}[Follows from Theorem 6.1 of \cite{chekuri2018minimum}, Strength Estimation]
\label{lem:strength_est}
There exists algorithm \textsc{StrengthEstimation} which does the following:
it receives as an input a directed weighted graph $G=(V,E, w)$ and a motif instance set $\MG GM$ for a motif $\M=(V_\M,E_\M)$ and outputs strength estimations $\kappa'_I$ for each motif instance $I \in \MG GM$ with the following properties:
\begin{enumerate}
    \item For all $I \in \MG GM$, $\kappa'_I \leq \kappa_I$,
    \item $\sum_{I \in \MG GM} \frac{w(I)}{ \kappa'_I} \leq cr(n - 1)$, for some constant $c > 0$
\end{enumerate}
where $r = |V_M|$.
The running time of the algorithm is $O(r|\MG GM| \log^2 n \log (r |\MG GM|))$.
\end{lemma}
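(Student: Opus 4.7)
The plan is to reduce motif strength estimation directly to hypergraph strength estimation via a natural \emph{motif hypergraph} construction, and then invoke Theorem 6.1 of \cite{chekuri2018minimum} as a black box.

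First, I would construct the motif hypergraph $H = (V, \mathcal{F}, w_H)$ on the same vertex set as $G$: for every motif instance $I \in \MG GM$, include a hyperedge $f_I$ whose vertex set is $V_I$ and whose weight is $w_H(f_I) = w(I)$. Instances that happen to share the same vertex set are kept as parallel hyperedges, and the rank of $H$ (the maximum hyperedge size) is exactly $r = |V_M|$, since every motif instance spans $r$ vertices. The total encoding size of $H$ is therefore $O(r\,|\MG GM|)$.

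Second, I would verify the structural correspondence between $G$ and $H$. For every $U \subseteq V$ and every $\emptyset \neq S \subsetneq U$, the hyperedge cut value $|\partial_{H[U]}(S)|_{w_H}$ in the induced subhypergraph $H[U]$ equals $\Val_{M, G[U]}(S, U \setminus S)$. This is immediate from Definition \ref{def:motif-cut-size} together with the fact that an instance $I$ lies in $G[U]$ iff $V_I \subseteq U$ iff the hyperedge $f_I$ survives in $H[U]$, and $I$ crosses $(S, U\setminus S)$ in $G[U]$ iff $f_I$ does in $H[U]$. Consequently, $G[U]$ is $(k,M)$-connected iff $H[U]$ has (hypergraph) min-cut at least $k$, which — by Definition \ref{def:kconnectedcomp} and \ref{def:motif_strength} — means the motif strength $\kappa_I$ coincides with the hypergraph strength of $f_I$ in $H$ as used in \cite{chekuri2018minimum}.

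Third, I would invoke Theorem 6.1 of \cite{chekuri2018minimum} on $H$. That theorem outputs, for each hyperedge $f$, an estimate $\kappa'_f \leq \kappa_f$ satisfying $\sum_{f \in \mathcal{F}} w_H(f)/\kappa'_f \leq c \cdot r \cdot (n-1)$ for an absolute constant $c$, where the rank factor $r$ appears because $H$ has rank $r$. Setting $\kappa'_I := \kappa'_{f_I}$ then yields property (1) from $\kappa'_{f_I} \leq \kappa_{f_I} = \kappa_I$, and property (2) from
\[
\sum_{I \in \MG GM} \frac{w(I)}{\kappa'_I} \;=\; \sum_{f \in \mathcal{F}} \frac{w_H(f)}{\kappa'_f} \;\leq\; c\, r\, (n-1).
\]
The stated running time is exactly what Theorem 6.1 of \cite{chekuri2018minimum} gives when run on a hypergraph with $n$ vertices, $|\mathcal{F}| = |\MG GM|$ hyperedges, and total size $O(r\,|\MG GM|)$.

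The only non-mechanical step is verifying that the notion of hypergraph strength used in Chekuri–Quanrud matches Definition \ref{def:kconnectedcomp} after the bijection $I \leftrightarrow f_I$ — specifically, that they too define strength via maximal induced subhypergraphs of edge-connectivity at least $k$. This is the main (but minor) obstacle: a careful translation of definitions. Once this is established, the substance of the lemma is entirely contained in \cite{chekuri2018minimum}, and no further probabilistic or combinatorial work is needed on the motif side.
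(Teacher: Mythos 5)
Your proposal is correct and matches the paper's approach: the paper also constructs the motif hypergraph $H_M$, invokes Theorem 6.1 of \cite{chekuri2018minimum} on it, and transfers the strength estimates back by $\kappa'_I := \kappa'_{V(I)}$. The only cosmetic difference is that the paper merges motif instances sharing the same vertex set into a single hyperedge (summing their weights) rather than keeping parallel hyperedges, and the paper also notes a small technicality you omit --- Chekuri--Quanrud's algorithm is stated for integer weights, so one rescales and rounds, losing a factor of $2$ in property (2) --- but neither affects the substance of the argument.
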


We defer the discussion of this algorithm and proof of this lemma to \cref{sec:str_est}. 

Since we don't have access to the motif instance strengths in our algorithm, we need to define a version of importance weight that uses strength approximations.

\begin{definition}
Let $\M=(V_\M,E_\M)$ be a motif and $G=(V,E, w)$ be a directed weighted graph.
Let $\kappa'_I$ be the estimations produced by the algorithm from \cref{lem:strength_est} for the graph $G$ for the motif $M$. 
\begin{itemize}
    \item For $I \in \MG GM$ the estimated importance weight is $\wse(I) = w(I) / \kappa'_I$,
    \item For an edge $e \in E$, the estimated $M$-importance weight is \[\wse_\M(e) = \sum_{I \in \MG GM: e \in E(I)} \wse(I).\]
\end{itemize}
\end{definition}

We can now present the \cref{alg:par_spars}.

\begin{algorithm}[H]
\caption{Partial Sparsification}
\label{alg:par_spars}
\begin{algorithmic}[1]
\Procedure{\textsc{PartialSparsification}}{$V, E, w, \epsilon', \{ \MG G{M_i} \}_{i=1}^{L} $} 
\State Calculate $\MG G{\M_i}$ for all $i \in [L]$ for graph $G=(V, E, w)$
\State $E_+ \gets \emptyset$
\For{$i = 1 \to L$}
\State $ \{ \kappa'_I\}_{I \in \MG{G}{M_i}} \gets \Call{StrengthEstimation}{G, \MG G{M_i}} $
\State $E_+ \gets E_+ \cup \{ e \in E: \wse_{\M_i}(e) \geq \frac{d \epsilon'^2}{ r^*_i (\log n + r_i) } \}$
\label{line:find_crit}
\Comment{Find critical edges}
\EndFor
\State $E_- \gets \emptyset$
\State $w'\gets w$
\For{$e\in E \setminus E_+$}
\If{a probability $p=2^{-1/{(2r^*_{max})}}$ Bernoulli variable is equal to $1$} 
\State $w'(e)\gets w(e)/p$
\Else
\State $w'(e)\gets 0$
\State $E_-\gets E_-\cup \{e\}$
\EndIf
\EndFor
\State $E \gets E \setminus E_-$
\State \Return $(E, w')$
\EndProcedure
\end{algorithmic}
\end{algorithm}

\subsection{Analysis of $E_+$} \label{sec:analysis-parspar}
In this and following subsections, we will show that \parspar{} indeed produces an $(\e', M)$-motif cut sparsifier of the input graph $G=(V,E)$.

Fix one of the motifs among $\M_1, \ldots, \M_L$ as $M$, and denote $r = |V(M)|$, $r^* = |E(M)|$. From here on we will show a number of properties of our algorithm that would eventually allow us to show that the final graph is a $M$-motif cut sparsifier. Since it will generally not involve any other motifs, we will omit mentioning $M$ in subscripts and other places where appropriate until \cref{subsec:multi_motif}.

$E_+$ is a set produced by $\parspar$. Recall that we want to sample each critical edge with probability $1$ and each other edge with probability $p$. In fact, what happens in the algorithm is that all of the edges in $E_+$ are sampled with probability $1$ and all other edges are sampled with probability $p$. Therefore, to show the correctness of the algorithm, it is necessary to show that $E_+$ contains all of the critical edges. Moreover, since on each iteration the graph loses about $1 - p$ fraction of the edges not in $E_+$, it is necessary for us to bound the number of edges in $E_+$ in order to bound the number of edges in the output graph. We do both in this section.

Denote
\[E_{M+} = \left\{ e \in E_1: \wse_{\M}(e) \geq \frac{d \epsilon'^2}{ r^* (\log n + r) } \right\}.\] 
First, we show that $E_{M+}$ (and consequently $E_+$) contains all $M$-critical edges:

\begin{lemma}
\label{lem:crit_correct}
The set $E_+$ in line \ref{line:find_crit} of \cref{alg:par_spars} contains all $M$-critical edges.
\end{lemma}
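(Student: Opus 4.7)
The plan is to leverage the key property of \textsc{StrengthEstimation} from \cref{lem:strength_est}, namely that the estimated strengths underestimate the true strengths: $\kappa'_I \leq \kappa_I$ for every motif instance $I \in \MG{G}{M}$. This underestimation of the denominator in the definition of $\wse(I) = w(I)/\kappa'_I$ translates into an overestimation of the per-instance importance weight compared to the true $\ws(I) = w(I)/\kappa_I$, and this pointwise comparison propagates to the edge-level importance weights by the summation in \cref{def:edge_importance}.

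Concretely, I would first observe that since $w(I) \geq 0$ and $0 < \kappa'_I \leq \kappa_I$, we have $\wse(I) \geq \ws(I)$ for every $I \in \MG{G}{M}$. Summing this inequality over all motif instances containing a fixed edge $e$, I get
\[
\wse_M(e) = \sum_{I \in \MG{G}{M}:\, e \in E(I)} \wse(I) \;\geq\; \sum_{I \in \MG{G}{M}:\, e \in E(I)} \ws(I) = \ws_M(e).
\]
Now suppose $e$ is $M$-critical. By \cref{def:critical} this means $\ws_M(e) \geq \frac{d\epsilon'^2}{r^*(\log n + r)}$, and combining with the above,
\[
\wse_M(e) \;\geq\; \ws_M(e) \;\geq\; \frac{d\epsilon'^2}{r^*(\log n + r)}.
\]
Therefore $e$ satisfies the threshold condition on line~\ref{line:find_crit} of \cref{alg:par_spars} for the motif $M$, so $e$ is included in $E_{M+} \subseteq E_+$.

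There is essentially no obstacle here beyond correctly unpacking the definitions; the monotonicity $\wse(I) \geq \ws(I)$ is an immediate consequence of property (1) of \cref{lem:strength_est}, and the rest is a threshold comparison. The only subtlety worth noting in the write-up is that $E_+$ is formed as a union over all $L$ motifs while the claim concerns criticality with respect to the single fixed motif $M$, so the inclusion $E_{M+}\subseteq E_+$ is what carries the argument across.
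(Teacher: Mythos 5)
Your proof is correct and takes essentially the same approach as the paper: use property (1) of \cref{lem:strength_est} ($\kappa'_I\le\kappa_I$) to conclude $\wse_M(e)\ge\ws_M(e)$, then note any $M$-critical edge therefore meets the threshold in line~\ref{line:find_crit}. Your write-up is in fact slightly cleaner in that it consistently uses the threshold $\frac{d\epsilon'^2}{r^*(\log n + r)}$, whereas the paper's proof drops the $+r$ in one intermediate line.
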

\begin{proof}
By \cref{lem:strength_est},
\[
    \forall I \in \MG GM: \kappa'_I \leq \kappa_I.
\]
By definition of $M$-critical edge and importance weight,
\[
    \sum_{I \in \MG GM: e \in E(I)} \frac{w(I)}{\kappa_I} \geq \frac{d \epsilon'^2}{r^* \log n}.
\]
On the other hand,
\[
    \sum_{I \in \MG GM: e \in E(I)} \frac{w_\M(I)}{\kappa_I} \leq \sum_{I \in \MG GM: e \in E(I)} \frac{w_\M(I)}{\kappa'_I} = \wse_\M(e).
\]
Therefore, the condition in line \ref{line:find_crit} of \cref{alg:par_spars} holds for $e$, which means that $E_+$ contains all $M$-critical edges.
\end{proof}

Furthermore, we have that $E_+ = \bigcup_{i=1}^{L} E_{M_i+}$, hence by bounding $E_{M+}$ we can bound $E_+$.

\begin{lemma}
\label{lem:crit_counting}
The size of $E_{M+}$ is at most 
$\frac{c r (r^*)^2 (n - 1) (\log n + r)}{d \epsilon'^2}$.
\end{lemma}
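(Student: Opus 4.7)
The plan is to bound $|E_{M+}|$ by a double-counting argument: I will sum $\wse_M(e)$ over all edges and use the fact that each edge of $E_{M+}$ contributes at least the threshold value, so the number of such edges is at most the total sum divided by the threshold. The total sum is controlled by the second guarantee of \cref{lem:strength_est}.

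More concretely, I would first swap the order of summation in
\[
    \sum_{e \in E} \wse_M(e) = \sum_{e \in E} \sum_{\substack{I \in \MG GM \\ e \in E(I)}} \wse(I) = \sum_{I \in \MG GM} |E(I)| \cdot \wse(I) = r^* \sum_{I \in \MG GM} \wse(I),
\]
where the last equality uses that every motif instance $I$ has exactly $r^* = |E(M)|$ edges. By definition $\wse(I) = w(I)/\kappa'_I$, so applying the second item of \cref{lem:strength_est} yields
\[
    \sum_{e \in E} \wse_M(e) \le r^* \cdot cr(n-1).
\]

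Next, by the definition of $E_{M+}$, every $e \in E_{M+}$ satisfies $\wse_M(e) \ge \frac{d\epsilon'^2}{r^*(\log n + r)}$, hence
\[
    |E_{M+}| \cdot \frac{d\epsilon'^2}{r^*(\log n + r)} \le \sum_{e \in E_{M+}} \wse_M(e) \le \sum_{e \in E} \wse_M(e) \le cr \, r^* (n-1),
\]
and rearranging gives the claimed bound $|E_{M+}| \le \frac{c r (r^*)^2 (n-1)(\log n + r)}{d \epsilon'^2}$.

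There is no real obstacle here; the statement is essentially the standard averaging step that accompanies any importance-sampling sparsification result. The only thing to be careful about is making sure the factor $r^*$ appears from counting edges per motif instance (and not, say, vertices), and that the bound used from \cref{lem:strength_est} is on $\sum_I w(I)/\kappa'_I$ (the estimated importance) rather than on the true importance, so that it matches the estimated importance weight $\wse_M(e)$ used to define the critical set.
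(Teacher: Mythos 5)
Your proof is correct and follows essentially the same route as the paper: swap the summation order to bound $\sum_e \wse_M(e)$ by $r^*\cdot cr(n-1)$ via the second guarantee of \cref{lem:strength_est}, then divide by the threshold that defines $E_{M+}$. The only cosmetic difference is that you write $|E(I)| = r^*$ as an equality where the paper uses an inequality, but both are valid since each instance is isomorphic to $M$.
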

\begin{proof}
By \cref{lem:strength_est}, the following holds:
\[
    \sum_{I \in \MG GM} \wse(I) \leq c r (n - 1).
\]
We can bound the sum of estimations of importance weight of all edges:
\[
    \sum_{e \in E} \wse_\M(e) = \sum_{e \in E} \sum _{\substack{I \in \MG GM:\\ e \in E(I)}} \wse(I) \leq
    r^* \sum_{I \in \MG GM} \wse(I) \leq c r r^* (n - 1).
\]
On the other hand, since edges in $E_{+M}$ must satisfy inequality in line \ref{line:find_crit} of \cref{alg:par_spars}, we have
\[
    |E_{M+}| \cdot \frac{d \e'^2}{r^* (\log n + r)} \leq \sum_{e \in E_{M+}} \wse_\M(e) \leq \sum_{e \in E} \wse_\M(e).
\]
Combining both inequalities yields:
\[
    |E_{M+}| \leq \frac{c r (r^*)^2 (n - 1) (\log n + r)}{d \epsilon'^2},
\]
as desired.
\end{proof}

\begin{corollary}
\label{cor:total_crit_counting}
$E_+$ satisfies
\[|E_+| \le \frac{c L r_{max} {(r^*_{max})^2} (n - 1) (\log n + r_{max})}{d \epsilon'^2}. \]
\end{corollary}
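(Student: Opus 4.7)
The plan is to derive this as a straightforward union bound over the $L$ motifs, leveraging Lemma~\ref{lem:crit_counting} which handles the single-motif case.

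First, I would observe that by construction in \cref{alg:par_spars}, the set $E_+$ is built iteratively as
\[
E_+ = \bigcup_{i=1}^{L} E_{M_i+}, \quad \text{where} \quad E_{M_i+} = \left\{ e \in E : \wse_{M_i}(e) \geq \frac{d \e'^2}{r^*_i(\log n + r_i)}\right\}.
\]
Hence by the union bound on cardinalities,
\[
|E_+| \leq \sum_{i=1}^{L} |E_{M_i+}|.
\]

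Next, I would apply \cref{lem:crit_counting} separately to each motif $M_i$, which yields
\[
|E_{M_i+}| \leq \frac{c\, r_i\, (r_i^*)^2\, (n-1)(\log n + r_i)}{d \e'^2}.
\]
Substituting the maximum values $r_i \leq r_{max}$ and $r_i^* \leq r^*_{max}$ in the numerator (since the right-hand side is monotone increasing in both), each term is bounded by
\[
\frac{c\, r_{max}\, (r^*_{max})^2 (n-1)(\log n + r_{max})}{d \e'^2}.
\]

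Summing over the $L$ motifs gives the claimed bound. The proof is essentially a one-line calculation once \cref{lem:crit_counting} is in hand; there is no real obstacle, and no concentration or structural argument is required here since the bound of \cref{lem:crit_counting} is deterministic. The only mild subtlety is to apply the monotonicity in $r_i, r_i^*$ carefully before summing, to match the form stated in the corollary.
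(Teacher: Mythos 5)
Your proof is correct and follows essentially the same route as the paper: the paper's proof is the one-line remark that the bound follows from \cref{lem:crit_counting} by summing over all $L$ motifs, and you simply spell out the union bound $|E_+|\le\sum_i|E_{M_i+}|$, the per-motif bound, and the monotone substitution of $r_{max}$, $r^*_{max}$. No gap.
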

\begin{proof}
The proof follows from \cref{lem:crit_counting} by summing across all motifs.
\end{proof}


\subsection{Correctness of \textsc{PartialSparsification}}

As in \cite{benczur_randomized_2015}, to show the correctness of our algorithm, we want to split our graph $G$ into a ``sum'' of several weighted graphs. The decomposition may be viewed as the motif-version of the decomposition of Benczur and Karger \cite{benczur_randomized_2015} and follows rather closely their ideas. 

Let $k_1, \ldots, k_h$ be all of the different strong $M$-connectivity values in $G$ in increasing order, i.e. for each $k_i$ there exists a $k_i$-connected component that is not $k$-connected for any $k>k_i$. Let $k_0 = 0$. 
In order to decompose our graph into a sum of weighted graphs we observe that we can write
\begin{eqnarray*}
\Val_{M, G}(S, V \setminus S) &=& \sum_{\substack{I \in \MG GM: \\ I \text{ crosses }(S, V \setminus S)}} w(I)\\
& = &
\sum_{\substack{I \in \MG GM: \\ I \text{ crosses } (S, V \setminus S)}} \frac{w(I)}{\kappa_I} \cdot \kappa_I \\
& = & \sum_{\substack{I \in \MG GM: \\ I \text{ crosses } (S, V \setminus S)}}
\frac{w(I)}{\kappa_I} \cdot \left(\sum_{i: k_i \le \kappa_I} k_i - k_{i-1}\right) \\
&=& \sum_{i=1}^h (k_i - k_{i-1}) \cdot 
\sum_{\substack{I \in \MG GM: \\
\kappa_I \ge k_i}} \frac{w(I)}{\kappa_I}
\cdot {\mathbbm 1}(I \text{ crosses } (S, V \setminus S))
\end{eqnarray*}
where the third equality follows from $\kappa_I = \sum_{i:\kappa_I \ge k_i} k_i-k_{i-1}$  
and where
${\mathbbm 1}(I \text{ crosses } (S, V \setminus S))$ denotes the indicator function that motif instance $I$ crosses the cut $(S,V\setminus S)$.
The above formula guides us towards our decomposition. The sum 
$$
\sum_{\substack{I \in \MG GM: \\
\kappa_I \ge k_i}} \frac{w(I)}{\kappa_I}\cdot {\mathbbm 1}(I \text{ crosses } (S, V \setminus S))
$$
ranges over all motif instances that are contained in components of $M$-connectivity at least $k_i$. We will now view the graph as a sum of graphs $F_i$ where each $F_i$ is the union of all $(k_i, M)$-connected components of $G$. The motif instances in the graph $F_i$ will be weighted by a factor of $k_i-k_{i-1}$.
In addition, each motif $I$ is reweighted by $1/\kappa_I$ in all graphs $F_i$. This motivates the following definition.



\begin{definition}
Let $\M=(V_\M,E_\M)$ be a motif and $G=(V,E, w')$ be a directed weighted graph. For a weighted graph $H=(V_H,E_H,w)$ with $V_H \subseteq V$ and $E_H \subseteq E$  and a cut $(S,V_H \setminus S)$
, we define the following value:
\[
    \IVal_{M, H}(S, V_H \setminus S) = \sum_{\substack{I \in \MG HM:\\
    I \text{ crosses } (S, V_H \setminus S)}} \frac{w(I)}{\kappa_I(G)},
\]
where $\kappa_I(G)$ is the motif strength of $I$ with respect to $G$.
\end{definition}
In the following we will always use the above definition in a way that $G$ is the input graph of \parspar{}. We will also frequently use $G$ as a subscript when the subgraph $H$ in the above definition equals $G$.
Using the above notation we can now write
\begin{eqnarray}
\Val_{M, G}(S, V \setminus S) 
& =& \sum_{i=1}^h (k_i-k_{i-1}) \cdot 
  \IVal_{M, F_i}(S, V \setminus S)\\
  &=& \sum_{i=1}^h (k_i-k_{i-1}) \cdot
  \sum_{\substack{\text{ $k_i$-connected}\\ \text{component } C}}  \IVal_{M, C}(S, V \setminus S)\label{eqn}
\end{eqnarray}
where the last equality splits $F_i$ into its  $M$-connected components.

Now, our goal is to show the concentration result for a single $M$-connected component.
There are two well-known results for hypergraph cuts that can be adapted for the case of motifs that we need to use to show concentration for all cuts. 
We include their proofs for completeness.

\begin{lemma}[Reformulation of Theorem 6.8 of \cite{chekuri2018minimum}]
\label{lem:weighted_cut_size}
Let $\M=(V_\M,E_\M)$ be a motif and $G=(V,E, w)$ be a directed weighted graph.
 If minimum of $
 \IVal_{M, G}$ of all non-trivial cuts is greater than $0$, it is equal to $1$.
\end{lemma}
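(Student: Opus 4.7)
The plan is to first establish that $\IVal_{M, G}(S, V \setminus S) \geq 1$ for every non-trivial cut, and then to exhibit a cut attaining equality. Set $k^* := \min_S \Val_{M, G}(S, V \setminus S)$ with the minimum taken over non-trivial cuts; by hypothesis $k^* > 0$, so $G$ itself is $(k^*, M)$-connected and, being the entire graph, is a $k^*$-strongly-$M$-connected component of itself. Now fix an arbitrary non-trivial cut $(S, V \setminus S)$ and define $\kappa^*$ as the largest $k$ such that some $k$-strongly-$M$-connected component of $G$ has vertices on both sides of $S$. Since $G$ qualifies with $k = k^*$, $\kappa^*$ is well-defined and $\kappa^* \geq k^* > 0$; let $C$ be a corresponding $\kappa^*$-component that is split by the cut.

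The key claim I would prove is that every motif instance $I \in \MG{C}{M}$ crossing $(S, V \setminus S)$ satisfies $\kappa_I = \kappa^*$. The lower bound $\kappa_I \geq \kappa^*$ is immediate from \cref{def:motif_strength}, since $I \subseteq C$ and $C$ is $(\kappa^*, M)$-connected. For the upper bound, if $\kappa_I > \kappa^*$ then the $\kappa_I$-strongly-$M$-connected component $D$ containing $I$ would itself be split by $(S, V \setminus S)$ (because $I \subseteq D$ crosses the cut), contradicting the maximality of $\kappa^*$. Given the claim, I obtain
\[
\IVal_{M, G}(S, V \setminus S) \geq \IVal_{M, C}(S \cap V_C, V_C \setminus S) = \frac{\Val_{M, C}(S \cap V_C, V_C \setminus S)}{\kappa^*} \geq \frac{\kappa^*}{\kappa^*} = 1,
\]
where the first inequality uses $\MG{C}{M} \subseteq \MG{G}{M}$ and the last uses that $C$ is $(\kappa^*, M)$-connected. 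For the equality case, I would take any min cut $(S^*, V \setminus S^*)$ with $\Val_{M, G}(S^*, V \setminus S^*) = k^*$: the corresponding $\kappa^*$ must equal $k^*$, for otherwise some $\kappa^*$-component $D$ split by $S^*$ would give $\Val_{M, D}(S^* \cap V_D, V_D \setminus S^*) \geq \kappa^* > k^*$, while $\MG{D}{M} \subseteq \MG{G}{M}$ forces this quantity to be at most $\Val_{M, G}(S^*, V \setminus S^*) = k^*$, a contradiction. Taking $C = G$ in the key claim then yields $\IVal_{M, G}(S^*, V \setminus S^*) = \Val_{M, G}(S^*, V \setminus S^*)/k^* = 1$.

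The main obstacle is the key claim, which pins down the strength of every crossing motif inside the chosen component $C$ at exactly $\kappa^*$. Both directions of this identification hinge on the maximality properties that make $\kappa^*$ and $C$ well-defined; once it is secured, the remainder of the argument reduces to routine double counting against the $(\kappa^*, M)$-connectivity of $C$.
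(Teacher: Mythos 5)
Your proof is correct, but it takes a noticeably more elaborate route than the paper's. The paper's argument rests on one small observation applied uniformly to every cut: if an instance $I$ crosses a cut of motif size $k'$, then any $(k,M)$-connected component containing $I$ is itself split by that cut and hence has a cut of motif size at most $k'$, so $\kappa_I \le k'$. This alone gives $\IVal_{M,G}(S,V\setminus S) \ge \Val_{M,G}(S,V\setminus S)/k' = 1$ for every non-trivial cut of motif size $k'$. Combining with $\kappa_I\ge k^*$ (since $G$ is $(k^*,M)$-connected) pins $\kappa_I=k^*$ for instances crossing the minimum cut, yielding equality there. Your proof instead locates, for each cut, the maximal strength $\kappa^*$ among strongly $M$-connected components that are split, pins $\kappa_I=\kappa^*$ exactly for crossing instances inside such a component $C$, and then lower-bounds via $\IVal_{M,G}\ge\IVal_{M,C}=\Val_{M,C}/\kappa^*\ge 1$. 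The extra step of locating $C$ and the argument that $\kappa^*=k^*$ at a minimum cut are valid but not needed; you could have shortened everything by observing directly that $\kappa_I\le k'$ and working with the whole graph rather than restricting to $C$. That said, your decomposition-flavored argument is sound, and it has the minor structural advantage of making explicit the role of the strongly-connected-component laminar family that the paper uses elsewhere (e.g.\ in the proof of \cref{lem:cut_component_preservation}), so it may connect more naturally to those later steps even though it is overkill here.
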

\begin{proof}
Let $(S, V \setminus S)$ be the cut with minimum motif size and let $k$ be the $M$-connectivity of $G$. Then all of the motif instances crossing the cut have connectivity exactly $k$. On the other hand, the size of cut is $k$, which gives us
\[
\IVal_{M, G}(S, V \setminus S) = 
\sum_{I \in \MG GM: I \text{ crosses } (S, V \setminus S) } \frac{w(I)}{\kappa_I} = \frac{\Val_{M, G}(S, V \setminus S)}{k} = 1. 
\]
Consider any other cut $(S, V \setminus S)$ of motif size $k'$. Then the strength of all motif instances crossing this cut is at most $k'$, which means that:
\[
\IVal_{M, G}(S, V \setminus S) = \sum_{I \in \MG GM: I \text{ crosses } (S, V \setminus S) } \frac{w(I)}{\kappa_I} \geq \frac{\Val_{M, G}(S, V \setminus S)}{k'} = 1.
\]
Therefore, the cut with the minimum motif size is the cut with the minimum value of $\IVal_{M, G}$, and the latter is equal to $1$. 
\end{proof}

\begin{lemma}[Motif Cut Counting, Reformulation of Theorem 3.2 of \cite{kogan2015sketching}]
\label{cor:cut_counting}
Let $\M=(V_\M,E_\M)$ be a motif and $G=(V,E, w)$ be a directed weighted graph. Let $c$ be the minimum value of $\IVal_{M, G}$ across all cuts in $G$. There are at most $O(2^{
(\alpha + 1) r} n^{2(\alpha + 1)})$ cuts with value $\IVal_{M, G}$ at most $\alpha c$ for a real $\alpha \geq 1$.
\end{lemma}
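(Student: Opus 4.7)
The plan is to reduce this statement to the hypergraph cut counting theorem of Kogan--Krauthgamer (Theorem 3.2 of \cite{kogan2015sketching}) by identifying each motif instance with a single weighted hyperedge. Under this identification, the function $\IVal_{M,G}$ becomes, literally, the hypergraph cut function, after which the cited theorem gives the bound directly.

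Concretely, I would construct an auxiliary weighted hypergraph $H = (V, \mathcal{E}_H, w_H)$ on the same vertex set $V$: for every motif instance $I \in \MG{G}{M}$ include one hyperedge $e_I$ whose vertex set is $V(I)$ and whose weight is $w_H(e_I) = w(I)/\kappa_I(G)$. Since $|V(I)| \leq r = |V_\M|$ for every instance, the rank of $H$ is at most $r$. Because motifs are weakly connected by assumption, for any nontrivial cut $(S, V\setminus S)$ the instance $I$ crosses the cut in the sense of Definition~\ref{def:motif-cut-size} if and only if $V(I)$ meets both $S$ and $V\setminus S$, which is exactly the condition that the hyperedge $e_I$ is cut by $(S, V\setminus S)$ in $H$. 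Summing the weights $w(I)/\kappa_I(G)$ over the two equivalent index sets yields
\[
    \IVal_{M,G}(S, V\setminus S) \;=\; \sum_{e_I \in \mathcal{E}_H:\, e_I \text{ cut by } (S,V\setminus S)} w_H(e_I),
\]
i.e.\ $\IVal_{M,G}(S,V\setminus S)$ equals the weight of the cut $(S,V\setminus S)$ in $H$.

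With this identity in hand, I would apply the Kogan--Krauthgamer cut-counting theorem to $H$, which asserts that in any weighted hypergraph of rank at most $r$ with minimum cut weight $c$, the number of cuts of weight at most $\alpha c$ is $O(2^{(\alpha+1)r} n^{2(\alpha+1)})$. Cuts of $G$ and cuts of $H$ are in bijection as bipartitions of $V$, and by the identity above their values coincide; applying the bijection term-by-term transfers the bound to the number of cuts of $G$ with $\IVal_{M,G}$-value at most $\alpha c$, which is the desired conclusion.

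The only mild point of care is to make sure the form of Kogan--Krauthgamer we cite applies to hypergraphs with arbitrary real (not necessarily integer) edge weights, since the weights $w(I)/\kappa_I(G)$ need not be integral. This is not a serious obstacle: the underlying Karger-style random-contraction argument is weight-oblivious and goes through verbatim for arbitrary nonnegative weights. The main conceptual step, such as it is, is the construction of $H$ and the observation that the weak connectivity of $M$ is exactly what makes motif-crossings and hyperedge-crossings coincide; once this is set up the cut counting bound is immediate.
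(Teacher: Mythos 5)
Your proposal follows essentially the same route as the paper: construct a rank-$r$ hypergraph whose hyperedges are the motif instances, weighted by $w(I)/\kappa_I(G)$, observe (using weak connectedness of $M$) that hypergraph cut values coincide with $\IVal_{M,G}$, and then invoke the Kogan--Krauthgamer hypergraph cut-counting theorem. The one place you elide a step is in how the exponent $(\alpha+1)$ arises: the theorem as stated (Theorem~3.2 of~\cite{kogan2015sketching}, reproduced as \cref{lem:cut_counting_hyper}) gives a bound of $O(2^{\alpha r}n^{2\alpha})$ but only for \emph{half-integer} $\alpha\geq 1$. Since the lemma you are proving allows arbitrary real $\alpha\geq 1$, the paper first rounds up to the smallest half-integer $\beta\geq\alpha$, applies the theorem at $\beta$, and then uses $\beta<\alpha+1$ to obtain $O(2^{(\alpha+1)r}n^{2(\alpha+1)})$. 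You instead state the theorem directly in the $(\alpha+1)$ form for real $\alpha$, which is the \emph{conclusion} of that rounding argument rather than the theorem itself. This is a minor imprecision rather than a gap, and your side remark on real-valued hyperedge weights is a reasonable clarification that the paper leaves implicit.
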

The proof of the above lemma is given in \cref{sec:str_est}.

In the works on graph and hypergraph sparsification 
the value of a cut is defined by the edges or hyperedges. These are also the objects that are sampled and it suffices to use a Chernoff bound to analyze the concentration of a fixed cut. In our case, we sample edges but we are interested in the number of motif instances that cross the cut. We can write the cut value as a sum of random variables corresponding to the motifs that cross the cut, but these random variables are not independent and so we cannot use Chernoff bounds. To deal with dependencies we will instead use Azuma's inequality.

\begin{lemma}[Azuma's Inequality, \cite{azuma1967weighted}]
\label{lem:azuma}
Let $Z_0, Z_1, \ldots, Z_n$ be a martingale satisfying $|Z_i - Z_{i-1}| \le c_i$ for each $i \in [n]$. For any $\lambda > 0$,
\[ \p(|Z_n - Z_0| \ge \lambda) \le 2 \exp\left( \frac{-\lambda^2}{2(c_1^2 + \cdots + c_n^2)} \right). \]
\end{lemma}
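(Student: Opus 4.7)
The plan is to follow the standard Chernoff/Hoeffding approach applied to the martingale differences $X_i := Z_i - Z_{i-1}$, which satisfy $|X_i|\le c_i$ and $\E[X_i\mid Z_0,\dots,Z_{i-1}]=0$ by the martingale property. For any $t>0$, the exponential Markov inequality gives
\[
\p(Z_n-Z_0 \ge \lambda) \;\le\; e^{-t\lambda}\,\E\!\left[\exp\!\Big(t\sum_{i=1}^n X_i\Big)\right].
\]
I would then peel off the moment generating function one step at a time by conditioning on the filtration: using the tower property,
\[
\E\!\left[\exp\!\Big(t\sum_{i=1}^n X_i\Big)\right]
= \E\!\left[\exp\!\Big(t\sum_{i=1}^{n-1} X_i\Big)\cdot \E\!\left[e^{tX_n}\mid Z_0,\dots,Z_{n-1}\right]\right],
\]
and iterating this identity reduces the problem to bounding each conditional MGF $\E[e^{tX_i}\mid Z_0,\dots,Z_{i-1}]$ uniformly.

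The key inequality needed here is Hoeffding's lemma: if $Y$ is a (possibly conditional) random variable with $\E Y=0$ and $|Y|\le c$, then $\E[e^{tY}]\le e^{t^2c^2/2}$. I would prove this by writing, for $y\in[-c,c]$, the convex-combination upper bound $e^{ty}\le \frac{c-y}{2c}e^{-tc}+\frac{c+y}{2c}e^{tc}$ (from convexity of $e^{tx}$), taking expectations so that the linear term in $y$ vanishes, and then showing $\phi(u):=\log(\tfrac{1}{2}e^{-u}+\tfrac{1}{2}e^{u})\le u^2/2$ via $\phi(0)=\phi'(0)=0$ and $\phi''(u)\le 1$ on $\mathbb{R}$. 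Applying this with $Y=X_i$ conditional on the past gives $\E[e^{tX_i}\mid Z_0,\dots,Z_{i-1}]\le e^{t^2c_i^2/2}$, and the iteration above then yields
\[
\E\!\left[\exp\!\Big(t\sum_{i=1}^n X_i\Big)\right]\le \exp\!\Big(\tfrac{t^2}{2}\sum_{i=1}^n c_i^2\Big).
\]

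Combining with the Markov step and optimizing $t=\lambda/\sum_i c_i^2$ yields
\[
\p(Z_n-Z_0\ge \lambda)\le \exp\!\Big(-\tfrac{\lambda^2}{2\sum_i c_i^2}\Big).
\]
An identical argument applied to the martingale $(-Z_i)$ gives the symmetric lower-tail bound, and a union bound over the two tails produces the factor of $2$ in the stated inequality. The only genuinely non-mechanical step is Hoeffding's lemma itself; the rest is bookkeeping with the tower property and an optimization over $t$. Since the martingale differences here are not independent, it is essential that Hoeffding's bound be applied conditionally on the past so that the factorization under the tower property goes through — this is the one place where carelessness would break the argument.
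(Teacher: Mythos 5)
The paper does not prove this lemma; it cites it directly from the literature (\cite{azuma1967weighted}) as a black-box concentration inequality, so there is no in-paper proof to compare against. Your argument is the standard textbook derivation of Azuma--Hoeffding --- exponential Markov inequality, iterated conditioning via the tower property, Hoeffding's lemma applied to each martingale increment conditionally on the past, optimization over $t$, and a union bound over the two tails --- and it is correct; in particular you correctly flag the one subtle point, namely that Hoeffding's lemma must be applied \emph{conditionally} so the factorization step survives the dependence among the increments.
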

We will use this lemma with a special `edge-exposure' martingale, which will be defined in the proof of \cref{lem:layer_concentration}.


\begin{lemma}
\label{lem:layer_concentration}
Let $C$ be a $M$-connected component before the application of \textsc{PartialSparsification} and $C'$ be that subgraph after the application. The following holds with probability at least $1 - n^{-c_1 }$:

Let $V_C = V(C)$. For all cuts $(S, V_C \setminus S)$ of $C$,
$$
    (1 - \epsilon') \IVal_{M, C}(S, V_C \setminus S) \leq \IVal_{M, C'}(S, V_C \setminus S) \leq (1 + \epsilon') \IVal_{M, C}(S, V_C \setminus S).
$$
\end{lemma}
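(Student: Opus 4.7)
The plan is to prove \cref{lem:layer_concentration} via an edge-exposure Doob martingale combined with Azuma's inequality (\cref{lem:azuma}), followed by a union bound over the cuts of $C$ using the motif cut counting estimate (\cref{cor:cut_counting}). Fix a non-trivial cut $(S, V_C \setminus S)$ of $C$, enumerate the non-critical edges of $C$ as $e_1,\ldots,e_m$ (critical edges are deterministic and can be ignored), let $X_i \in \{0,1\}$ indicate whether \parspar{} sampled $e_i$, and define
\[
Z_i = \E\bigl[\IVal_{M, C'}(S, V_C \setminus S) \,\big|\, X_1,\ldots,X_i\bigr] \quad \text{for } i = 0,1,\ldots,m.
\]
Since non-critical edges are sampled independently with $\E[w'(e)] = w(e)$ and motif weights are products of independent edge weights, $\E[w'(I)] = w(I)$ for every $I \in \MG{C}{M}$. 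Thus $Z_0 = \IVal_{M, C}(S, V_C \setminus S)$ while $Z_m = \IVal_{M, C'}(S, V_C \setminus S)$, so the lemma reduces to showing $|Z_m - Z_0| \le \epsilon' Z_0$ with probability at least $1 - n^{-c_1}$.

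The key step is bounding the martingale difference $|Z_i - Z_{i-1}|$. Only motif instances containing $e_i$ contribute. For any edge $e'$, whether already revealed or not, the conditional expectation $\E[w'(e') \mid X_1,\ldots,X_i]$ is at most $w(e')/p$, and flipping $X_i$ alters $w'(e_i)$ by at most $w(e_i)/p$. Multiplying the $r^*$ edge factors in $w'(I)$ and summing over motifs crossing the cut gives
\[
|Z_i - Z_{i-1}| \;\le\; \frac{1}{p^{r^*}} \sum_{\substack{I \in \MG{C}{M}:\\ e_i \in E(I),\ I \text{ crosses } (S, V_C \setminus S)}} \frac{w(I)}{\kappa_I(G)} \;\le\; \frac{\ws_M(e_i)}{p^{r^*}} \;\le\; \frac{\wse_M(e_i)}{p^{r^*}} \;=:\; c_i,
\]
where $\ws_M(e_i) \le \wse_M(e_i)$ follows from $\kappa'_I \le \kappa_I$ (\cref{lem:strength_est}). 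Non-criticality of $e_i$ gives $\max_i c_i \le d\epsilon'^2/(r^* p^{r^*}(\log n + r))$, while swapping the order of summation gives $\sum_i c_i \le r^* \IVal_{M, C}(S, V_C \setminus S)/p^{r^*}$, since each motif crossing the cut is counted at most $r^*$ times. Therefore $\sum_i c_i^2 \le (\max_i c_i)(\sum_i c_i) \le d\epsilon'^2 \IVal_{M, C}(S, V_C \setminus S)/(p^{2r^*}(\log n + r))$, and Azuma with $\lambda = \epsilon' \IVal_{M, C}(S, V_C \setminus S)$ yields
\[
\Pr\!\bigl[|Z_m - Z_0| > \epsilon' \IVal_{M, C}(S, V_C \setminus S)\bigr] \;\le\; 2 \exp\!\left(-\frac{p^{2r^*}(\log n + r)}{2d} \IVal_{M, C}(S, V_C \setminus S)\right).
\]

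Finally, I would union bound over the non-trivial cuts of $C$. By \cref{lem:weighted_cut_size} the minimum value of $\IVal_{M, C}$ over non-trivial cuts is $1$, and by \cref{cor:cut_counting} there are $O(2^{(\alpha+1) r} n^{2(\alpha+1)})$ cuts with $\IVal_{M, C} \le \alpha$. Partitioning cuts into dyadic layers $\IVal_{M, C} \in [2^j, 2^{j+1})$, each layer contributes at most $n^{O(2^j)}$ cuts, each failing with probability at most $\exp(-\Omega(2^j(\log n + r)/d))$. Since $p = 2^{-1/(2r^*_{\max})}$ guarantees $p^{2r^*} \ge 1/2$, the constant $p^{-2r^*}$ is absorbed into the $\Omega$, and choosing the constant $d$ small enough (in terms of $c_1$) makes Azuma's exponent dominate the cut count in every layer, so the overall failure probability telescopes to at most $n^{-c_1}$.

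The main obstacle I anticipate is the martingale-difference bound: because motif weights are products of edge weights rather than sums, flipping a single edge propagates multiplicatively through the $r^* - 1$ other edges of each containing motif, producing the factor $p^{-r^*}$ in $c_i$. The sampling probability is chosen precisely as $p = 2^{-1/(2r^*_{\max})}$ so that $p^{-r^*}$ stays bounded by a universal constant, which can then be absorbed into $d$ without compromising the $\log n$-scale concentration strength required by the union bound.
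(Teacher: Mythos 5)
Your proposal takes essentially the same route as the paper's proof: an edge-exposure martingale (your Doob martingale $Z_i$ is exactly the paper's process $X_i = \IVal_{M, I_i}$, since $\IVal_{M,C'}$ is multilinear in the independent edge weights), the same $p^{-r^*}$ bound on martingale differences, Azuma's inequality, and a union bound via the cut-counting lemma. One small inconsistency to fix: you set $c_i := \wse_M(e_i)/p^{r^*}$ (a sum over \emph{all} motifs containing $e_i$), but then claim $\sum_i c_i \le r^* \IVal_{M,C}(S,V_C\setminus S)/p^{r^*}$; swapping the order of summation for your $c_i$ actually gives $\frac{r^*}{p^{r^*}}\sum_{I\in\MG{C}{M}}\wse(I)$, which is not controlled by $\IVal_{M,C}$ of the fixed cut. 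You should instead take $c_i$ to be the cut-restricted quantity $\frac{1}{p^{r^*}}\sum_{I:\,e_i\in E(I),\,I\text{ crosses }(S,V_C\setminus S)} w(I)/\kappa_I(G)$ — your first displayed inequality already shows $|Z_i-Z_{i-1}|$ is bounded by this — and then both $\max_i c_i$ (via non-criticality, since this quantity is at most $\wse_M(e_i)/p^{r^*}$) and $\sum_i c_i \le r^*\IVal_{M,C}/p^{r^*}$ hold, which is precisely the paper's $\M^S(e)$. With that adjustment the argument is correct and matches the paper.
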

\begin{proof}
By applying \cref{lem:weighted_cut_size} to $C$, we get that the minimum of values $\IVal_{M, C}$ across all cuts in $C$ is $1$.

Note that all of the critical edges in $C$ are critical in $G$, since their $M$-importance weight in $C$ is not larger than in $G$.
Since, by \cref{lem:crit_correct}, $E_+$ contains all critical edges in $G$, it also contains all of the critical edges in $C$, and, therefore, no critical edges are being sampled afterwards.
Fix a cut $(S,V_C\setminus S)$ with $\IVal_{M, C}(S,V_C\setminus S)=\alpha \geq 1$ in $C$. Let $E_2$, $t:=|E_2|$, be the set of all edges that are being sampled in \textsc{PartialSparsification} with probability $p=2^{-1/{(2r^*_{max})}} \geq 2^{-1/{(2r^*)}}$ and that are part of at least one motif cut by $(S, V_C \setminus S)$. Then $E_2$ does not contain any critical edges. Let $I_0$ be the subgraph of $C$ containing all vertices and edges that are a part of some motif  that is being cut by $(S, V_C \setminus S)$.

Consider the following random process: enumerate the edges in $E_2$ in the order they are being examined by \textsc{PartialSparsification}. Suppose edge $e$ with number $i$ is being sampled. If $e$ is not sampled, then we define $I_i$ to be equal to $I_{i-1}$ without $e$, otherwise $I_i$ is equal to $I_{i-1}$ with $e$ with it's weight multiplied by $1/p$. Now consider a random process $X_i$, $i \in \{0, \ldots, t\}$, where $X_i$ is equal to $\IVal_{M, I_i}(S, V_C \setminus S)$. 

It is easy to see that $X_t = \IVal_{M, C'}(S, V_C \setminus S)$ and that $X_i$ is a martingale. Let 
\[
\M^S(e) = \sum_{\substack{I\in\MG{C}{M}:\\ e \in E(I),\\ I \text{ crosses } (S, V_C \setminus S) }} \frac{w_\M(I)}{\kappa_I}.
\]
Then $|X_i - X_{i - 1}| \leq \M^S(e)/p^{r^*}$, where $e$ is the edge being sampled on step $i$,
since the weight of all motifs can change by at most $1/p^{r^*}$ during the random process. Note that although 
the weight of $e$ changes at most by a factor of $1/p$ the weights of other edges of any motif may have increased by a factor of $1/p$ earlier in the process. Since $\M^S(e)$
is defined at the beginning of the process, we can only bound
the increase by a factor of $1/p^{r^*}$.
Therefore, to apply \cref{lem:azuma}, we need to bound $\sum_{e \in E_2} \M^S(e)^2$.

Since there are no critical edges in $E_2$, for all edges $e$ that we sample, we must have 
\[\M^S(e) \leq \frac{d \epsilon'^2}{r^* (\log n + r)}.\] 
On the other hand,
\[
    \sum_{e \in E_2} \M^S(e) \leq r^* \sum_{I\in\MG{C}{M}: I \text{ crosses } (S, V_C \setminus S) } \frac{w_\M(I)}{\kappa_I} = r^* \alpha,
\]
since every motif contains at most $r^*$ edges. Therefore, using this inequality:
\[  
\sum_{e \in E_2} \M^S(e)^2 \le \frac{d \epsilon'^2 }{r^* (\log n + r)} \sum_{e \in E_2} \M^S(e) \le \frac{\alpha d \epsilon'^2 }{\log n + r}.
\]
Hence, because $\IVal_{M,C}(S, V_C \setminus S) = \alpha$ and by \cref{lem:azuma},
\begin{align*}
    &\p(|\IVal_{M,C}(S, V_C \setminus S) - \IVal_{M,C'}(S, V_C \setminus S)| \geq \epsilon' \IVal_{M,C}(S, V_C \setminus S)) \leq \p(|X_t - X_0| \ge \epsilon' \alpha)  \\
    &\le 2 \exp\left( \frac{-(\epsilon' \alpha )^2}{2\sum_{e \in E_2}\M^S(e) / p^{2r^*}} \right) \le 2 \exp\left( \frac{-(\epsilon' \alpha)^2}{4 \cdot \frac{\alpha d \epsilon'^2 }{\log n + r}} \right) = 2 \exp\left( \frac{-\alpha (\log n + r)}{4 d} \right).
\end{align*}

We now apply a union bound on all cuts $(S, V_C \setminus S)$ in conjunction with \cref{cor:cut_counting}. We need to bound $\sum_{\alpha \geq 1} P(\alpha)g(\alpha)$, where $P(\alpha)$ is the probability that the inequalities in the statement of the lemma doesn't hold for the cut with $\IVal_{M, C}$ equal to $\alpha$, $g(\alpha)$ is the number of those cuts, and the sum is taken across all values of $\alpha$ that are present in the graph.

Let $F(\alpha) = \sum_{\alpha \geq \alpha' \geq 1} g(\alpha')$ be the total number of cuts with $\IVal_{M, C} \leq \alpha$. By \cref{cor:cut_counting}, $F(\alpha) = A 2^{(\alpha + 1)r} n^{2(\alpha + 1)}$ for some constant $A$. We then adversarialy extend $F(\alpha)$ in a to the whole $\R_+$ such that $F(\alpha)$ is differentiable while preserving the above inequality.

We have that
\[
\sum_{\alpha \geq 1} P(\alpha)g(\alpha) \leq \int_{1}^\infty P(\alpha) \frac{dF(\alpha)}{d\alpha} d \alpha.
\]
Therefore, by applying partial integration,
\begin{align*}
&\int_{1}^\infty P(\alpha) \frac{dF(\alpha)}{d\alpha} d \alpha = \Big[ P(\alpha) F(\alpha)  \Big]_{1}^{\infty} - \int_{1}^\infty F(\alpha) \frac{dP(\alpha)}{d\alpha} d \alpha\\
&\leq 2 A 2^{2r} n^4 \exp \left(-\frac{\log n + r}{4d}\right) + 
\int_{1}^\infty A n^{2(x + 1)} 2^{r(x + 1)} \frac{\log n + r}{4d} 2\exp \left(-\frac{x (\log n + r)}{4d} \right) dx \\
&\leq n^{-c_1 }
\end{align*}
by setting $d$ to be sufficiently small. Thus, the inequalities in the lemma statement hold with probability at least $1 - n^{-c_1 }$ for all cuts, as desired.
\end{proof}

We will now use the fact that for a given cut, we can take the weighted sum of the $\IVal_{M, C}$ of cuts of each of the connectivity components $C$ such that this sum is equal to the motif cut size in the whole graph. We can then apply \cref{lem:layer_concentration} to each term to obtain the cut preservation property for the whole graph.

\begin{lemma}
\label{lem:cut_component_preservation}
Let $G'$ be $G$ after the application of \textsc{PartialSparsification}. $G'$ is $(M, \e')$-motif cut sparsifier of $G$ with probability $1 - n^{-c_1 + 3}$.

\end{lemma}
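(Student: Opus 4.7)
The plan is to combine the per-component concentration estimate of Lemma~\ref{lem:layer_concentration} with the telescoping decomposition displayed just before that lemma. Since each $M$-connected component has failure probability at most $n^{-c_1}$, the first step is a union bound over every $M$-connected component that appears at any of the at most $h \le n$ distinct strong connectivity levels $k_1,\dots,k_h$. At each level, the $k_i$-strong $M$-connected components partition $V$, so there are at most $n$ components per level, and hence at most $n^2$ in total. A union bound then guarantees that, with probability at least $1 - n^{-c_1 + 2}$, the two-sided inequality of Lemma~\ref{lem:layer_concentration} holds simultaneously for every such component $C$ (with $C'$ denoting the corresponding subgraph of $G'$ induced on $V_C$). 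I would condition on this event throughout the remainder of the argument.

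Fix an arbitrary non-trivial cut $(S, V \setminus S)$ of $G$. I would invoke the decomposition already derived before Lemma~\ref{lem:layer_concentration}, namely
\[
\Val_{M, G}(S, V \setminus S) = \sum_{i=1}^{h} (k_i - k_{i-1}) \sum_{\text{$k_i$-connected component } C} \IVal_{M, C}(S, V \setminus S),
\]
and observe that the same algebraic manipulation applied to $G'$, using the fact that $\IVal$ is defined via the strengths $\kappa_I$ of $G$ (not of $G'$), yields
\[
\Val_{M, G'}(S, V \setminus S) = \sum_{i=1}^{h} (k_i - k_{i-1}) \sum_{\text{$k_i$-connected component } C} \IVal_{M, C'}(S, V \setminus S).
\]
Crucially, the strong $M$-connectivity structure, the components $\{C\}$, the levels $\{k_i\}$, and the strengths $\kappa_I$ appearing in these expressions are all inherited from $G$; sparsification only changes which edges (and hence which motif instances) carry what weight.

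Next, for each component $C$ and its induced cut $(S \cap V_C, V_C \setminus S)$, either the cut is trivial, in which case both $\IVal_{M, C}$ and $\IVal_{M, C'}$ are zero, or the cut is non-trivial and the conditioned event supplies
\[
(1 - \epsilon')\,\IVal_{M, C}(S, V \setminus S) \le \IVal_{M, C'}(S, V \setminus S) \le (1 + \epsilon')\,\IVal_{M, C}(S, V \setminus S).
\]
Since the coefficients $k_i - k_{i-1}$ are non-negative, summing these inequalities term-by-term against the decompositions above yields $(1 - \epsilon')\,\Val_{M, G}(S, V \setminus S) \le \Val_{M, G'}(S, V \setminus S) \le (1 + \epsilon')\,\Val_{M, G}(S, V \setminus S)$, which is the desired cut-sparsifier guarantee. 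As this reasoning applies simultaneously to every cut $(S, V \setminus S)$, $G'$ is an $(M, \epsilon')$-motif cut sparsifier of $G$ on the conditioned event.

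The main obstacle I foresee is not the final aggregation, which reduces to positive-weight linearity, but justifying that the two decompositions of $\Val_{M, G}$ and $\Val_{M, G'}$ line up component-by-component with a common set of strengths and components. This is precisely what the definition of $\IVal_{M, H}$ affords, by parameterizing the normalization $1/\kappa_I$ by the strength in $G$ rather than in $H$; it is also why Lemma~\ref{lem:layer_concentration} carefully states the concentration bound with respect to $\IVal_{M, C'}$ (strengths from $G$) instead of any intrinsic notion of strength in $C'$. The failure-probability bookkeeping then gives $1 - n^{-c_1 + 2} \ge 1 - n^{-c_1 + 3}$, matching the statement.
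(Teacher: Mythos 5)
Your proof takes essentially the same approach as the paper: the same two telescoping decompositions of $\Val_{M,G}$ and $\Val_{M,G'}$ over connectivity levels and components (with strengths $\kappa_I$ and components both inherited from $G$), per-component concentration via Lemma~\ref{lem:layer_concentration}, a union bound over all components, and then a non-negative linear combination. The one place you diverge is the component-counting for the union bound, and there you have a small unjustified step: you assert $h \le n$ and ``the $k_i$-strong $M$-connected components partition $V$.'' Neither is quite right. In a general weighted graph the number $h$ of distinct strong connectivity values is bounded by the number of distinct $M$-connected components, not by $n$; the paper instead observes that all distinct $M$-connected components across all levels form a laminar family on $V$, hence number at most $2n$, which is both the justification you would need for any such bound and also a tighter count. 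Likewise, at a fixed level the components are pairwise disjoint (so at most $n$ of them) but need not cover $V$. These slips are harmless to the final probability: even with $h \le 2n$ you would get at most $2n^2$ component-level instances and a failure probability of at most $2n^{-c_1+2} \le n^{-c_1+3}$, matching the statement. So the argument is correct in substance; you should just replace the ``$h\le n$, partition'' reasoning with the laminar-family bound (or cite it) to make the counting airtight.
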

\begin{proof}
By definition of motif cut sparsifier, it is enough to show that the following holds for all cuts $(S, V \setminus S)$ of $G$:
\[
    (1 - \epsilon')\textup{Val}_{M, G}(S, V \setminus S) \leq \textup{Val}_{M, G'}(S, V \setminus S) \leq (1 + \epsilon') \textup{Val}_{M, G}(S, V \setminus S).
\]
By equation (\ref{eqn}),  we have 
\begin{eqnarray*}
\Val_{M, G}(S, V \setminus S) 
  &=& \sum_i (k_i-k_{i-1}) \cdot
  \sum_{\substack{\text{ $k_i$-connected}\\ \text{component } C}}  \IVal_{M, C}(S, V \setminus S).
\end{eqnarray*}
Now let $C'$ be $C$ after the application of \textsc{PartialSparsification}.
Then, similarly, the following holds:
\begin{eqnarray*}
\Val_{M, G'}(S, V \setminus S) 
 &=& \sum_i (k_i-k_{i-1}) \cdot 
\sum_{\substack{\text{ $k_i$-connected}\\ \text{component } C \text{ in } G}} \IVal_{M,C'}(S,V\setminus S).
\end{eqnarray*}
Note that if two $M$-connected components intersect, one of them is contained inside the other, and the smaller one has higher connectivity. Therefore, the set of all $M$-connected components is a laminar family, which means that its size is at most $2n$.
By applying \cref{lem:layer_concentration} to all $(k_i, M)$-connected components $C$ for all $i$ and a union bound over the at most $2n$ different $M$-connected components,  the following holds for all $M$-connected components:
\[
    (1 - \epsilon')\IVal_{M, C'}(S, S \setminus V) \leq \IVal_{M, C}(S, S \setminus V) \leq (1 + \epsilon') \IVal_{M, C'}(S, S \setminus V)
\]
with probability at least $1 -  2 n^{-c_1 + 1} \geq 1 -  n^{-c_1 + 3}$.
Combining all of the equalities and inequalities, we get the claim.
\end{proof}

\subsection{Hypergraphs}
We introduce hypergraphs here since we will use some results concerning them. A hypergraph is the pair of two sets $(V, F)$, where $V$ is the set of vertices and $F$ is the set of hyperedges $f$, which are subsets of $V$. Weighted hypergraph $H=(V, F, w)$ is a hypergraph with weight function $w: F \to \R_+$.
A hypergraph $H$ is $r$-uniform if every $f \in F$ satisfies $|f| = r$.
We denote the size of the cut $(S, V \setminus S)$ in hypergraph $H$ as $\textup{Val}_H(S, V \setminus S)$.

\begin{definition}[Induced Subhypergraph]
A hypergraph $H'=(V', F', w')$ is an induced subhypergraph of a hypergraph $H=(V, F, w)$, if $V' \subseteq V$, $F' = \{ f \in F: f \subseteq V' \}$ and $w$ and $w'$ are equal on $F'$.
\end{definition}

We will abuse the cut notation for the hypergraphs: if $H'=(V', F', w)$ is an induced subhypergraph of $H=(V, F, w)$, then $\textup{Val}_{H'}(S, V \setminus S) := \textup{Val}_{H'}(S \cap V', V' \setminus S)$.

\begin{definition}[Connectivity]
A weighted hypergraph $H=(V,F,w)$ is $k$-connected if 
every cut $(S,V\setminus S$), $S\neq \emptyset$, $S\subsetneq V$, in $H$ has size at least $k$.
\end{definition}
 
\begin{definition}[$k$-connected Component]
\label{def:hyper-kconnectedcomp}
For a weighted hypergraph $H=(V, F, w)$ with non-negative hyperedge weights and a value $k\in \R_+$, an induced subhypergraph $C=(V_C,F_C, w)$ of $H$ is called a $k$-connected component of $H$,
if 
\begin{enumerate}[label=(\alph*)]
    \item $C$ is $k$-connected and,
    \item there is no induced subhypergraph $C'=(V_{C'},F_{C'}, w)$ of $G$ that is $k$-connected and has $V_C \subsetneq V_{C'}$.
\end{enumerate}
\end{definition}

\begin{definition}[Hyperedge Strength]
Let $H=(V, F, w)$ be a weighted hypergraph with non-negative hyperedge weights. A hyperedge $f \in F$ has strength $\kappa_f$ if $\kappa_f$ is the
maximum value of $k$ such that there exists a $k$-connected component of $H$ that contains $f$.
\end{definition}

\subsection{Strength Estimation and Motif Cut Counting}
\label{sec:str_est}
To reiterate, construction of motif cut sparsifier is not possible by only using the techniques for constructing hypergraph sparsifier. But, the problems are sufficiently close to share some similarities, which allows us to use some results for hypergraph cut sparsification in our proof.

In this section we will present omitted proofs of \cref{lem:strength_est} and \cref{cor:cut_counting} by reducing them to similar existing results for hypergraphs.

Because motif instances are essentially just subsets of vertices, it is useful to consider them as hyperedges of some hypergraph, which we will call a motif hypergraph. Note that some motif instances share the same set of vertices. In this case, the weight of the resulting hyperedge is equal to the sum of their weights.

\begin{definition}[Motif Hypergraph]
Let $\M=(V_\M,E_\M)$ be a motif and $G=(V,E, w)$ be a directed weighted graph.
Then the $\M$-motif hypergraph of $G$ is an undirected weighted hypergraph $H_\M=(V, F_\M, w_\M)$, where
\begin{itemize}
    \item $F_M = \{ V(I): I \in \MG GM \}$,
    \item for $f \in F_M$, $w_M(f) = \sum_{I \in \MG G M: f = V(I)} w(I)$.
\end{itemize}
\end{definition}

Note that the motif hypergraph represents the motif connectivity structure of a graph: for a cut $(S, V \setminus S)$ in $G$, its motif size is equal to it's size in $H$, and for a $I \in \MG GM$, $\kappa_I = \kappa_{V(I)}$ where $V(I)$ is a hyperedge of motif hypergraph. The introduction of hypergraph allows us to use several results from hypergraph cut sparsification, as well as giving a new perspective on the problem.


We now prove \cref{lem:strength_est} by using the following result.

\begin{lemma}[Theorem 6.1 of \cite{chekuri2018minimum}, Strength Estimation]
\label{lem:strength_est_hyper}
There exists algorithm \textsc{StrengthEstimation} which does the following:
it receives as an input a rank $r$ weighted hypergraph $H=(V, F, w)$ on $n$ vertices and outputs strength estimations $\kappa'_f$ for each hyperedge $f$ with the following properties:
\begin{enumerate}
    \item For all $f \in F$, $\kappa'_f \leq \kappa_f$,
    \item $\sum_{f \in F} \frac{w(s)}{ \kappa'_f} \leq cr(n - 1)$, for some constant $c > 0$.
\end{enumerate}
The running time of the algorithm is $O(r|F| \log^2 n \log (r |F|))$.
\end{lemma}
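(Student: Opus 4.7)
The plan is to generalize the Benczur--Karger strength-estimation algorithm for edge-weighted graphs to the hypergraph setting. The key structural ingredient is a hypergraph analogue of the Nash--Williams inequality: for every rank-$r$ weighted hypergraph $H=(V,F,w)$ with $n\ge 2$, there is an absolute constant $c>0$ and an induced subhypergraph $C\subseteq H$ on $n'\ge 2$ vertices whose minimum cut is at least $w(F)/(c r (n-1))$. Equivalently, any hypergraph with total hyperedge weight exceeding $c r (n-1) k$ must contain a strongly $k$-connected induced subhypergraph. In the graph case ($r=2$) this is the familiar statement that a graph of average degree $\Omega(k)$ contains a $k$-edge-connected subgraph; for hypergraphs it can be derived from packing arguments for spanning hyperconnectors.

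Given this ingredient, I would run the following iterative estimator. Starting from $H_0 := H$, at step $i$ I use the structural lemma to find a maximal induced subhypergraph $C_i$ of $H_i$ whose minimum cut is at least $k_i := w(F(H_i))/(c r (n(H_i)-1))$, assign $\kappa'_f \gets k_i$ to every hyperedge $f$ contained in $C_i$, delete these hyperedges to form $H_{i+1}$, and iterate until no hyperedges remain. Property (1) ($\kappa'_f \le \kappa_f$) holds because $C_i$ is an induced subhypergraph of the original $H$ and has minimum cut at least $k_i$, so by definition $\kappa_f \ge k_i$ for any hyperedge $f$ assigned at step $i$. For property (2), observe that the weight $W_i$ removed at step $i$ is at most $w(F(H_i))$, while by construction $k_i \ge W_i/(c r (n(H_i)-1))$, so the step contributes at most $c r (n(H_i)-1)$ to $\sum_f w(f)/\kappa'_f$. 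A laminar charging argument over the nested sequence of vertex sets $V(C_i)$ then telescopes to the desired $O(r(n-1))$ bound.

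To realize the structural lemma algorithmically, I would perform a doubling search over $k \in \{1,2,4,\dots\}$, at each scale computing a sparse $k$-connectivity certificate of the current hypergraph via a hypergraph maximum-adjacency ordering, analogous to the Nagamochi--Ibaraki procedure. Each such pass on a rank-$r$ hypergraph with $|F|$ hyperedges can be implemented in $O(r|F| \log(r|F|))$ time using a priority queue keyed by accumulated hyperedge weight; combining $O(\log n)$ scales with $O(\log n)$ peeling iterations reaches the advertised $O(r|F| \log^2 n \log(r|F|))$ running time.

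The hardest part is the hypergraph Nash--Williams inequality itself: in graphs it is immediate from forest packing/arboricity, but for hypergraphs one must show that large total hyperedge weight forces a strongly connected core, and do so with a rank dependence no worse than the linear $r$ factor that appears in the sum bound. Establishing this inequality with the right constant, and simultaneously extracting the dense core without repeatedly solving hypergraph minimum-cut problems, is the technical heart of the argument; this is precisely what the cited theorem of Chekuri--Xu packages for us.
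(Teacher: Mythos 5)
The paper itself does not prove this statement---it simply cites Theorem~6.1 of Chekuri and Xu~\cite{chekuri2018minimum} verbatim and uses it as a black box. So there is no in-paper argument to compare yours against; the only meaningful comparison is to the cited theorem, which (as you yourself note in your final paragraph) already ``packages'' exactly what is needed.

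Taken on its own terms, your sketch has two real gaps. First, the hypergraph Nash--Williams-type lemma you posit (total weight exceeding $c\,r\,(n-1)k$ forces a strongly $k$-connected induced core) is stated but not established; you gesture at ``packing arguments for spanning hyperconnectors'' but proving it with the linear dependence on $r$ is the entire technical content of the theorem, not a side remark. Second, the bound on $\sum_f w(f)/\kappa'_f$ does not follow from the peeling scheme as written: at step $i$ you only charge at most $c\,r\,(n(H_i)-1)$, and summing this over iterations gives $c\,r\,(n-1)$ times the number of peeling rounds, which is not a priori $O(1)$. The ``laminar charging over the nested vertex sets $V(C_i)$'' step is invoked but the vertex sets of successive cores need not be nested once hyperedges have been deleted (the next densest core can live on a disjoint part of the vertex set), so the telescoping you rely on is unjustified. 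There is also a small imprecision in the argument for Property~(1): $C_i$ is an induced subhypergraph of $H_i$, not of $H$; the correct justification is that the $H$-induced subhypergraph on $V(C_i)$ contains all of $C_i$'s hyperedges (deleting hyperedges only removes them, never adds new ones on $V(C_i)$), hence is still $k_i$-connected. In short, the outline is a plausible high-level reconstruction, but as a proof it leaves the load-bearing steps unproved, and the honest route---the one the paper actually takes---is to cite~\cite{chekuri2018minimum} directly.
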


Note that although the algorithm presented in \cite{chekuri2018minimum} can only work with natural weights, we can easily reduce the general case to it by dividing all of the weights by the minimum one and then rounding them down to the nearest integer: tt only worsens the second property by a factor of $2$.

\begin{proof}[Proof of \cref{lem:strength_est}]
We construct motif hypergraph $H_M$ and run the algorithm from \cref{lem:strength_est_hyper} on it, then set $\kappa'_I := \kappa'_{V(I)}$ for $I \in \MG GM$. Since the construction of $H_M$ takes only $O(|\MG GM|)$ time, the total runtime is the same as in \cref{lem:strength_est_hyper}, and both properties straightforwardly follow from properties of $H_M$.
\end{proof}

Finally, we give the proof of \cref{cor:cut_counting}.

\begin{lemma}[Cut Counting in Hypergraphs, Theorem 3.2 of \cite{kogan2015sketching}]
\label{lem:cut_counting_hyper}
In an $r$-uniform weighted hypergraph $H=(V, M, w)$ with size of minimum cut $c$, there are at most $O(2^{
\alpha r} n^{2\alpha})$ cuts of size no more than $\alpha c$ for a half-integer $\alpha \geq 1$ where $\alpha$ is a half-integer if $2 \alpha$ is an integer.
\end{lemma}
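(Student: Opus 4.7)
The plan is to adapt Karger's random contraction algorithm to the hypergraph setting, following the classical approach for counting approximate-minimum cuts. Define a contraction process which, in each step, picks a hyperedge $f$ with probability proportional to $w(f)/\sum_{f'}w(f')$, merges all vertices of $f$ into a single supervertex, and discards any resulting self-loops. Continue until exactly $\lceil 2\alpha\rceil$ supervertices remain. Every partition of the surviving supervertices into two non-empty parts corresponds to a cut of the original hypergraph, so any fixed cut $(S,V\setminus S)$ is a possible output iff none of its crossing hyperedges are contracted during the process.

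Next, fix a cut of weight at most $\alpha c$ and bound from below its probability of surviving all contractions. A key observation is that contraction can only eliminate cuts or preserve their values, so the min-cut of the intermediate hypergraph is still at least $c$. Summing degrees and dividing by the uniform rank $r$, the total hyperedge weight after $i$ contractions (when $n-i$ supervertices remain) is therefore at least $(n-i)c/r$. Consequently, the probability of contracting a hyperedge that crosses the fixed cut at step $i+1$ is at most $\alpha c/((n-i)c/r) = \alpha r/(n-i)$. Multiplying survival probabilities across the $n-\lceil 2\alpha\rceil$ steps gives a lower bound of the form
\[
\prod_{i=0}^{n-\lceil 2\alpha\rceil -1}\!\!\Bigl(1-\frac{\alpha r}{n-i}\Bigr) \;\geq\; \frac{1}{\binom{n}{2\alpha}}\cdot 2^{-O(\alpha r)},
\]
via a standard telescoping estimate; the factor $2^{-O(\alpha r)}$ arises because the factors $(1-\alpha r/(n-i))$ for small $n-i$ are bounded below by constants depending on $\alpha r$ rather than by $1-O(1/(n-i))$.

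Finally, since each terminal configuration of $\lceil 2\alpha\rceil$ supervertices admits at most $2^{\lceil 2\alpha\rceil-1}$ non-trivial bipartitions, and each fixed cut survives with probability at least the lower bound above, the sum of survival probabilities over all cuts of weight at most $\alpha c$ is at most $1$. Rearranging gives that the number of such cuts is at most
\[
O\bigl(2^{\alpha r}\,n^{2\alpha}\bigr),
\]
as claimed. The half-integer restriction on $\alpha$ arises precisely so that $\lceil 2\alpha\rceil$ and the counting of bipartitions behave cleanly. The main obstacle is the careful lower bound on the telescoping product, since the naive Karger bound $\binom{n}{2\alpha}^{-1}$ is not quite tight enough in the hypergraph case; one has to track the extra factor $2^{-O(\alpha r)}$ that accounts for the fact that hyperedges of rank $r$ inflate both the total weight and the crossing probabilities relative to the graph case.
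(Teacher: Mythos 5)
The paper does not prove this lemma itself; it simply cites Theorem~3.2 of Kogan and Krauthgamer~\cite{kogan2015sketching}, so there is no internal proof to compare your sketch against. Your plan --- adapt Karger's random contraction to hypergraphs --- is the right general idea, but the execution contains a structural error that propagates through the whole argument: contracting a hyperedge whose endpoints lie in $\rho$ distinct supervertices reduces the supervertex count by $\rho-1$, which for an $r$-uniform hypergraph can be as large as $r-1$, not by $1$. Hence after $i$ contraction steps there remain \emph{at most} $n-i$ supervertices (and typically far fewer), so identifying ``step $i$'' with ``$n-i$ supervertices remaining'' is wrong. This makes your per-step failure bound $\alpha c/((n-i)c/r)=\alpha r/(n-i)$ an \emph{underestimate}: the correct upper bound is $\alpha r/t_i$ where $t_i\le n-i$ is the actual supervertex count, and that quantity can be much larger than $\alpha r/(n-i)$. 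Likewise, the claim that there are $n-\lceil 2\alpha\rceil$ steps, and that the process can be made to halt with exactly $\lceil 2\alpha\rceil$ supervertices, is not right, since a single contraction can overshoot.

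A second, independent, problem is that the product $\prod_{i=0}^{n-\lceil 2\alpha\rceil-1}\bigl(1-\alpha r/(n-i)\bigr)$ contains \emph{negative} factors whenever $\lceil 2\alpha\rceil<\alpha r$, i.e.\ whenever $r>2$ --- which is exactly the hypergraph regime: for $n-i$ strictly between $\lceil 2\alpha\rceil$ and $\alpha r$ we have $1-\alpha r/(n-i)<0$. Your remark that these small-$(n-i)$ factors are ``bounded below by constants depending on $\alpha r$'' is therefore false, and the hand-waved $2^{-O(\alpha r)}$ correction cannot repair it, because you are not losing a constant factor --- you are losing the sign. The obvious fix of stopping the contraction once roughly $\lceil\alpha r\rceil$ supervertices remain keeps every factor positive, but then the telescoping yields $\prod_{t=\alpha r+1}^{n}\frac{t-\alpha r}{t}=\binom{n}{\alpha r}^{-1}$, which only gives a cut count of $O\bigl(2^{\alpha r}n^{\alpha r}\bigr)$, far weaker than the claimed $O\bigl(2^{\alpha r}n^{2\alpha}\bigr)$. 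Closing the gap between exponent $\alpha r$ and exponent $2\alpha$ is the real content of the theorem: one must credit the process for the fact that a high-rank contraction, while more likely to hit the fixed cut, removes $\rho-1$ supervertices at once and so also leaves far fewer future opportunities for failure. Your proposal treats the process as if every step removes one supervertex, which is exactly where this trade-off is lost.
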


\begin{proof}[Proof of \cref{cor:cut_counting}]
Consider a modification of a motif hypergraph, where each hyperedge's weight is divided by its strength. Denote it by $H'$.
It is easy to see that the size of an arbitrary cut $(S, V \setminus S)$ in $H'$ is equal to the $\IVal_{M, G}(S, V \setminus S)$. 
Indeed, 
\begin{align*}
     \Val_{H'}(S, V \setminus S) &= \sum_{f \in F: f \text{ crosses } (S, V \setminus S)} \frac{w_M(f)}{\kappa_f} \\
    &=\sum_{I \in \MG GM: I \text{ crosses } (S, V \setminus S)} \frac{w(I)}{\kappa_I} \\
    &= \IVal_{M, G}(S, V \setminus S).
\end{align*}
Therefore, it is enough to show that if $c$ is the size of the smallest cut in $H'$, the number of cuts of size $\alpha c$ is at most $O(2^{(\alpha + 1)r}n^{2(\alpha + 1)})$, which we achieve as follows: find the smallest half integer $\beta \geq \alpha$ and apply \cref{lem:cut_counting_hyper} to it and $H'$. The result then follows from the fact that $\beta < 1 + \alpha$.
\end{proof}

\subsection{Runtime of \parspar{}}

\begin{theorem}
\label{thm:par_spar_time}
Let a directed weighted graph $G=(V, E, w)$, $\e' \in (0, 1)$ and a set of motifs $\{\M_i\}_{i \in[L]}$ be the input of \parspar. The total running time of \parspar is 
\[
\sum_{i=1}^{L} T(G, M_i) + \tO\left(L|E| + \sum_{i=1}^{L}|\MG G {\M_i}|\right),
\] 
where $T(G, M_i)$ for $i \in [L]$ is the time required to enumerate all instances of $M_i$ in $G$.
\end{theorem}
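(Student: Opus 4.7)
The plan is to account for the runtime of each step of Algorithm \ref{alg:par_spars} separately and sum them up, making use of the fact that $r_{max}$ and $r^*_{max}$ (and thus factors depending on them as well as $\log n$ factors) are hidden by the $\tO$ notation.

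First, I would handle the motif enumeration in line 2: by definition, this takes time $\sum_{i=1}^{L} T(G, M_i)$ and produces at most $\sum_{i=1}^L |\MG{G}{M_i}|$ motif instances in total. Next, for each $i\in[L]$, the call to \textsc{StrengthEstimation} on the motif instance set $\MG{G}{M_i}$ runs in time $O(r_i\,|\MG{G}{M_i}|\log^2 n \log(r_i|\MG{G}{M_i}|))$ by \cref{lem:strength_est}, which is $\tO(|\MG{G}{M_i}|)$; summed over $i$ this contributes $\tO(\sum_{i=1}^L |\MG{G}{M_i}|)$.

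Second, I would bound the cost of computing the estimated importance weights $\wse_{M_i}(e)$ and identifying the set $E_+$ in line \ref{line:find_crit}. Having obtained the strength estimates $\kappa'_I$, one computes $\wse(I) = w(I)/\kappa'_I$ for each $I \in \MG{G}{M_i}$ in $O(1)$ time per instance, then distributes this value to each of its at most $r_i^*$ edges; this takes $O(r_i^*|\MG{G}{M_i}|) = \tO(|\MG{G}{M_i}|)$ time and produces $\wse_{M_i}(e)$ for every edge. Comparing each $\wse_{M_i}(e)$ against the threshold and inserting it into $E_+$ takes an additional $O(|E|)$ per motif, for a total of $O(L|E|)$ across the outer loop. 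Summed across all motifs, this stage costs $\tO(L|E| + \sum_{i=1}^L |\MG{G}{M_i}|)$.

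Finally, the sampling loop iterates once over $E\setminus E_+$, doing $O(1)$ work per edge (a Bernoulli draw and a weight update), so it contributes $O(|E|)$. Checking membership in $E_+$ can be handled in $O(1)$ amortized time by storing $E_+$ as a hash set, or in $O(|E|)$ total by a single linear pass; either way this is absorbed into the existing $O(L|E|)$ term. Adding all contributions gives the claimed bound
\[
\sum_{i=1}^{L} T(G, M_i) + \tO\!\left(L|E| + \sum_{i=1}^{L}|\MG{G}{M_i}|\right).
\]
There is no real obstacle here: the statement is essentially a bookkeeping claim, and the only minor subtlety is verifying that factors of $r_{max}$, $r^*_{max}$, and polylogs of $n$ are indeed within the scope of $\tO$ as defined at the start of \cref{sec:main-results}.
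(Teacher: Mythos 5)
Your proof is correct and takes essentially the same step-by-step accounting approach as the paper's: enumerate motifs, account for \textsc{StrengthEstimation} per \cref{lem:strength_est}, bound the cost of computing $\wse_{M_i}$ and finding critical edges at $O(r_i^*|\MG{G}{M_i}| + |E|)$ per motif, and add the $O(|E|)$ sampling pass. The only cosmetic difference is that the paper also explicitly charges for constructing the motif hypergraph $H_{M_i}$ before calling \textsc{StrengthEstimation}, which you implicitly fold into the enumeration/estimation bookkeeping; this is absorbed by the $\tO$ notation either way.
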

\begin{proof}
We will analyze each of the procedures. \textsc{StrengthEstimation} takes time
$O(r_i|\MG{G}{M_i}| \cdot\log^2 n\cdot \log (r_i|\MG{G}{M_i}|))$ by \cref{lem:strength_est} for each $M_i$. Computing the values $\wse_{\M_i}(e)$ and finding all critical edges
can be done in $O(r^*_i |\MG G {\M_i}| + |E|)$ time for $i \in [L]$. We repeat those steps for all $L$ motifs. Sampling edges in the loop requires $O(|E|)$ operations. On top of that, the algorithm calculates $\MG G {\M_i}$ and $H_{M_i}$, which requires $\sum_{i=1}^{L} T(G, M_i) + O(\sum_{i=1}^{L} {r^*_i} |\MG G {\M_i}|)$ time, resulting in the total running time of 
\[
\sum_{i=1}^{L} T(G, M_i) + O\left(\sum_{i=1}^{L} (r_i|\MG{G}{M_i}|\cdot \log^2 n\cdot \log (r_i|\MG{G}{M_i}|) + r^*_i |\MG G {\M_i}| + |E|)\right). \qedhere
\]
\end{proof}

\section{Analysis of \textsc{MotifSparsification}}
\label{analysis_motif}

We are now ready to analyze the complete algorithm, \motspar{}. As was mentioned before, it essentially only calls \parspar \, $O(r^*_{max} \log n)$ times.
Hence, our main goal in this section is to show that after all these applications, the graph is still $(\e, M_i)$-motif sparsifier for all $i \in [L]$.

Because we also have the \Cref{alg:fast_par_spars} utilizing the same sparsification approach, we will show a proof for a generic algorithm, \gparspar{}, which abstracts both of the partial sparsification algorithms.

\begin{definition}
We assume that \gparspar{} accepts as input a weighted directed graph $G=(V, E, w)$, $\eps' > 0$,  and a set of motifs $ \{ M_i \}_{i=1}^{L}$, and returns $(E', w')$ such that $G'=(V, E', w')$ is $(\eps', M_i)$-motif cut sparsifier for all $i \in [L]$ with probability at least $1 - n^{-c_1}$ obtained through sampling at most $A$ of the edges with probability $1$ and the rest of the edges with probability $2^{-1/(2r^*_{max})}$ in time $B$. $A$ and $B$ can depend both on input parameters, as well as on the constant $c_1$.
\end{definition}

\begin{algorithm}[H]
\caption{Motif Sparsification}
\label{alg:motif_spars}
\begin{algorithmic}[1]
\Procedure{\textsc{MotifSparsification}}{$G, \{\M_i\}_{i \in[L]}, \epsilon$}
\Comment{$G=(V, E, w)$}
\State $\e' \gets \frac{\epsilon}{5 c_1 r^*_{max} \log n }$ \Comment{$c_1$ is an absolute constant, $r^*_{max}$ is maximum motif size}
\State $E_0\gets \emptyset$
\For{$j=1$ to $\lceil 2 c_1 r^*_{max} \log n \rceil $} \label{line:inner_loop}
\State $(E, w)\gets \gparspar{}(V, E, w, \epsilon', \{ M_i \}_{i=1}^{L})$
\EndFor
\State \Return $G = (V, E, w)$
\EndProcedure
\end{algorithmic}
\end{algorithm}

Since the approximation error grows multiplicatively after each application of \gparspar{}, we will need \cref{lem:exp_bounds} to get a final approximation bound.

\begin{lemma}
\label{lem:final_size}
At the end of the \textsc{MotifSparsification}, the set $E$ contains at most 
\(
A \)
edges with probability at least $1 - n^{-c_1 + 2}$.
\end{lemma}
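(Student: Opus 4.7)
The plan is to iterate the per-round edge bound across all $T := \lceil 2 c_1 r^*_{max} \log n \rceil$ outer iterations and apply a concentration argument. Let $N_j := |E|$ after the $j$-th call to $\gparspar{}$. By the guarantee on $\gparspar{}$, each call keeps at most $A$ critical edges deterministically and then retains each of the remaining edges independently with probability $p := 2^{-1/(2 r^*_{max})}$. Hence, conditioned on the realization of $E$ after iteration $j{-}1$, we have the pointwise domination
\[
N_j \le A + Z_j, \qquad Z_j \sim \mathrm{Bin}\bigl(N_{j-1},\, p\bigr),
\]
and therefore $\E[N_j \mid N_{j-1}] \le A + p\,N_{j-1}$.

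Unrolling this recursion gives $\E[N_T] \le A \sum_{j=0}^{T-1} p^j + p^T m \le A/(1-p) + p^T n^2$. The key arithmetic is that with our choice of $T$, $p^T = 2^{-T/(2r^*_{max})} \le 2^{-c_1 \log n} = n^{-c_1}$, so the ``legacy'' term $p^T n^2 \le n^{-c_1+2}$ is negligible. Meanwhile $1/(1-p) = O(r^*_{max})$ is a constant absorbed into the bound $A$ (cf.~\cref{cor:total_crit_counting}, where $A$ already carries polynomial factors in $r^*_{max}$). To upgrade this expectation bound to a high-probability bound, I would apply a multiplicative Chernoff inequality at each iteration: conditioned on $N_{j-1}$, the random variable $Z_j$ is a sum of independent Bernoullis with mean $\mu_j \le p N_{j-1}$, so $Z_j \le 2 \mu_j + O(\log n)$ with probability at least $1 - n^{-c_1}$ (the additive slack of $\Theta(\log n)$ handles the regime $\mu_j = O(\log n)$). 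A union bound over the $T = \mathrm{polylog}(n)$ iterations gives the claimed failure probability $n^{-c_1+2}$.

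The main obstacle is that the critical edge set identified in round $j$ depends on the random realization of $E$ at the end of round $j{-}1$, so the $N_j$'s form a dependent sequence and we cannot treat the whole process as a single product of independent Bernoullis. The resolution is that the pointwise inequality $N_j \le A + Z_j$ holds for every realization of the history, and, conditionally on that history, the round-$j$ sampling is independent of prior rounds. Chaining the single-round conditional Chernoff bounds via induction then yields a deterministic recursive inequality $N_j \le A + (1+o(1))\, p\, N_{j-1}$ that holds simultaneously at every iteration with probability $\ge 1 - n^{-c_1+2}$, and solving this recurrence with $p^T n^2 \le n^{-c_1+2}$ gives the claimed bound $N_T \le A$ (within the constants already hidden in $A$).
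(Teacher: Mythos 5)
Your approach is genuinely different from the paper's. The paper argues per edge: any edge that is never marked critical must pass $T = \lceil 2c_1 r^*_{max}\log n\rceil$ independent Bernoulli$(p)$ trials to survive, which happens with probability $p^T \le n^{-c_1}$; a union bound over at most $n^2$ edges then shows that with probability $\ge 1-n^{-c_1+2}$ every surviving edge was at some iteration critical, and combined with a (somewhat informal) without-loss-of-generality coupling that the critical set may be taken to be a single fixed set of size at most $A$, this yields $|E|\le A$ directly. You instead track $N_j=|E_j|$ through a recursion. This sidesteps the coupling step of the paper, but it introduces two concrete gaps.

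First, the recursion $N_j\le A+Z_j$ with $Z_j\sim\mathrm{Bin}(N_{j-1},p)$ is too loose because it double-counts the critical edges: the term $A$ already accounts for them, yet they are also included in the binomial's $N_{j-1}$ trials. The correct decomposition is $N_j=|B_j|+\mathrm{Bin}\bigl(N_{j-1}-|B_j|,p\bigr)$ with $|B_j|\le A$, which gives $\E[N_j]\le pN_{j-1}+(1-p)A$ with fixed point exactly $A$. Your version has fixed point $A/(1-p)=\Theta(r^*_{max})\,A$, so the conclusion you actually reach is $|E|\le O(r^*_{max})A$, a weaker statement than the lemma's $|E|\le A$. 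Saying the factor $1/(1-p)$ is ``absorbed into $A$'' is not legitimate: $A$ is a fixed quantity promised by \gparspar{} (instantiated, e.g., as the bound in \cref{cor:total_crit_counting}), not a parameter that this lemma is free to inflate, and downstream uses (the proof of \cref{thm:intro-main-general}) quote the bound with $A$ as given.

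Second, the Chernoff step as written fails. With the factor-$2$ form $Z_j\le 2\mu_j+O(\log n)$ you obtain $N_j\le A+2pN_{j-1}+O(\log n)$, but $p=2^{-1/(2r^*_{max})}>1/2$ for every $r^*_{max}\ge 1$, so $2p>1$ and the recursion diverges rather than contracts. To make a recursion of this kind contract you would need a $(1+\delta)$-multiplicative Chernoff bound with $\delta\ll 1-p=\Theta(1/r^*_{max})$, which is only available when $\mu_j=\Omega\bigl((r^*_{max})^2\log n\bigr)$; the small-$\mu_j$ regime must then be handled separately via the deterministic monotonicity $N_j\le N_{j-1}$, which your write-up never invokes. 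The additive $O(\log n)$ slack you mention addresses the Poisson-type tail, not the fact that the multiplicative constant exceeds $1/p$. The paper's per-edge argument is both simpler and tight precisely because it never needs a contraction of the edge-count random variable; I would recommend adopting that route, or else reworking your recursion with the corrected $|B_j|$-based decomposition and a careful $(1+\delta)$-Chernoff plus monotonicity argument.
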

\begin{proof}
Since all edges, except for those that are sampled with probability $1$, are sampled independently with same probability (and we only care about their quantity) and since the number of edges sampled with probability $1$ is bounded by $A$, we can assume without loss of generality that those are the same edges each time. 

Consider an arbitrary edge $e \in E$ at the start of the loop. Assume that $e$ is present in $E$ at the end of the algorithm. If it was sampled each time with probability $2^{-1/{(2r^*_{max})}}$, the probability of this happening is at most \[(2^{-1/{(2r^*_{max})}})^{2 c_1 r^*_{max} \log n} = n^{-c_1}.\] 
Since there are at most $n(n - 1)$ edges in $G$, the probability that at least one of those edges will be present in $E$ is less than $n^{-c_1 + 2}$ by a union bound. Therefore, $E$ consists entirely of edges sampled with probability $1$ with probability at least $1 - n^{-c_1 + 2}$, hence it's size is at most $A$.
\end{proof}

\begin{lemma}
\label{lem:single_motif_spars}
Let a directed weighted graph $G=(V, E, w)$, $\e \in (0, 1)$ and a set of motifs $\{\M_i\}_{i \in[L]}$ be the input of \textsc{MotifSparsification} and $G'$ be its output. Then for an arbitrary $M \in \{\M_i\}_{i \in[L]}$, $G'$ is $(M, \e)$-motif cut sparsifier of $G$ with probability at least $1 - n^{-c_1 + 5}$ for a sufficiently large $n$.
\end{lemma}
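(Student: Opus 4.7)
My plan is to iterate the guarantee of \gparspar{} through the $k:=\lceil 2 c_1 r^*_{max}\log n\rceil$ rounds of the inner loop, union-bound over these rounds so that every round succeeds, and then control the multiplicative compounding of approximation error via the chosen $\e'$.

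In more detail, I fix an arbitrary $M\in\{M_i\}_{i\in[L]}$ and write $G^{(0)}=G$ for the input and $G^{(j)}$ for the graph after the first $j$ iterations of the inner loop, so $G^{(k)}=G'$. By the assumed guarantee of \gparspar{}, conditioned on $G^{(j-1)}$ the graph $G^{(j)}$ is an $(M,\e')$-motif cut sparsifier of $G^{(j-1)}$ with probability at least $1-n^{-c_1}$. Letting $\mathcal E$ denote the event that this holds for every $j\in[k]$, and noting that $k\leq n^{5}$ for $n$ sufficiently large (since $r^*_{max}\leq n^2$), a union bound gives $\Pr[\mathcal E]\geq 1-kn^{-c_1}\geq 1-n^{-c_1+5}$.

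Conditioning on $\mathcal E$, a one-line induction on $j$ shows that for every nontrivial cut $(S,V\setminus S)$,
\[
(1-\e')^{k}\,\Val_{M,G}(S,V\setminus S) \;\leq\; \Val_{M,G'}(S,V\setminus S) \;\leq\; (1+\e')^{k}\,\Val_{M,G}(S,V\setminus S).
\]
It then remains to check that $(1\pm\e')^{k} \subseteq [1-\e,1+\e]$. With $\e'=\e/(5c_1 r^*_{max}\log n)$ and $k=\lceil 2 c_1 r^*_{max}\log n\rceil$ we have $k\e'\leq 1$, so $(1+\e')^{k}\leq e^{k\e'}\leq 1+2k\e'\leq 1+\e$ (appealing to \cref{lem:exp_bounds} to keep this rigorous), and $(1-\e')^{k}\geq 1-k\e'\geq 1-\e$ by Bernoulli's inequality.

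No real obstacle is expected here: all the motif-specific work (cut counting, critical edges, the Azuma-based concentration of \cref{lem:layer_concentration}) has already been absorbed into the single-round guarantee of \gparspar{}, so the only nontrivial check is that $\e'$ is tuned small enough relative to $k$ to absorb the multiplicative compounding -- which is exactly why line 2 of \cref{alg:motif_spars} pays an extra $r^*_{max}\log n$ factor in the denominator, and why the failure exponent loses a constant ($c_1\mapsto c_1-5$) relative to the single-round bound.
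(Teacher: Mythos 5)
Your proof is correct and follows essentially the same approach as the paper: iterate the per-round sparsification guarantee, union-bound the failure probabilities, and show $(1\pm\e')^{k}\subseteq[1-\e,1+\e]$ via \cref{lem:exp_bounds} given the choice of $\e'$. The only cosmetic difference is that you invoke the abstract \gparspar{} guarantee (failure $n^{-c_1}$ per round, union bound over $k\le n^5$ rounds), whereas the paper's proof directly cites \cref{lem:cut_component_preservation} with per-round failure $n^{-c_1+3}$ and union bound over $l\le n^2$ rounds; both land on the stated $n^{-c_1+5}$.
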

\begin{proof}
By definition, $G'$ is a $(M, \e)$-motif cut sparsifier if for all cuts $(S, V \setminus S)$, we have
\[
    (1 - \e) \Val_{M, G}(S, V \setminus S) \leq \Val_{M, G'}(S, V \setminus S) \leq (1 + \e) \Val_{M, G}(S, V \setminus S).
\]
We now proceed to show that the above inequalities hold. Denote by $l = \lceil 2c_1 r^*_{max} \log n \rceil$ the number of loop iterations and
denote by $G_j=(V, E_j, w_j)$ the state of the graph at the end of the loop iteration $j$ where $G_0 = G$. We will prove the following inductive statement:

With probability at least $1 - j n^{-c_1 + 3}$, the following holds after loop iteration $j$:
 For any cut $(S, V \setminus S)$ of $G$, the following holds:
\[
    (1 - \epsilon')^{j} \textup{Val}_{M, G}(S, V \setminus S) \leq \textup{Val}_{M, G_j}(S, V \setminus S) \leq (1 + \epsilon')^{j} \textup{Val}_{M, G}(S, V \setminus S).
\]
Note that for $j \leq l$, $(1 + \epsilon')^{j} \leq (1 + \frac{\epsilon}{2l})^{l} \leq 1 + \epsilon \leq 2$ and, similarly $(1 - \epsilon')^{j} \geq 1 - \epsilon/2 \geq 1/2 $ by \cref{lem:exp_bounds} and since $\epsilon \leq 1$.

\noindent \emph{Base case}: for $j = 0$, the property is trivial.

\noindent \emph{Inductive step}: suppose that the statements hold for $j - 1$. We can apply \cref{lem:cut_component_preservation}, which, combined with inductive assumption, gives us the property.

The probability that the used lemma fails is at most $n^{-c_1 + 3}$. Therefore, by a union bound with the probability that inductive assumption holds, the probability that the statement for iteration $j$ holds is at least $1 - j n^{-c_1 + 3}$.

Since $G' = G_{l}$, the inductive assumption on the last iteration also holds for $G'$, which means that:
\[
    (1 - \e)\textup{Val}_{M, G}(S, V \setminus S) \leq (1 - \epsilon')^{l} \textup{Val}_{M, G}(S, V \setminus S) \leq \textup{Val}_{M, G'}(S, V \setminus S),
\]
which gives us the desired lower bound. The upper bound is proven similarly. In total, the failure probability is at most \( l \cdot n^{-c_1 + 3} \), which is less then \( n^{-c_1 + 5} \) for a sufficiently large $n$.
\end{proof}

\subsection{Multiple Motifs}
\label{subsec:multi_motif}
We now put together all of our preceding lemmas to get our final sparsification result for all motifs simultaneously. 

\begin{lemma}
\label{thm:final_spars}
Let a directed weighted graph $G=(V, E, w)$, $\e \in (0, 1)$ and a set of motifs $\{\M_i\}_{i \in[L]}$ be the input of \textsc{MotifSparsification} and $G'$ be it's output. Then with probability at least $1 - L \cdot n^{-c_1 + 5}$, $G'$ is $\M_i$-motif cut $(1 + \epsilon)$ sparsifier of $G$ for all $i \in [L]$ for a sufficiently large $n$.
\end{lemma}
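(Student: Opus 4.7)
The plan is to observe that Lemma \ref{lem:single_motif_spars} already establishes, for any fixed motif $M$ in the input set, that the single graph $G'$ produced by \textsc{MotifSparsification} is an $(M,\epsilon)$-motif cut sparsifier of $G$ with probability at least $1 - n^{-c_1 + 5}$. The crucial point to emphasize is that \textsc{MotifSparsification} outputs one graph $G'$ that is processed with respect to \emph{all} motifs simultaneously at each invocation of \gparspar{} (the set $\{M_i\}_{i \in [L]}$ is passed in as a single argument, and the critical-edge set $E_+$ is taken to be the union over all motifs). Hence Lemma \ref{lem:single_motif_spars} applies verbatim to the \emph{same} graph $G'$ for each fixed $M_i$.

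With this in hand, the proof reduces to a straightforward union bound. First I would fix an arbitrary $i \in [L]$ and invoke Lemma \ref{lem:single_motif_spars} to conclude that
\[
\Pr\bigl[G' \text{ is not an } (M_i, \epsilon)\text{-motif cut sparsifier of } G\bigr] \le n^{-c_1 + 5}
\]
for sufficiently large $n$. Then I would take a union bound over all $L$ choices of $i$ to obtain
\[
\Pr\bigl[\exists\, i \in [L] : G' \text{ is not an } (M_i, \epsilon)\text{-motif cut sparsifier of } G\bigr] \le L \cdot n^{-c_1 + 5}.
\]
Taking the complement yields the claimed success probability of $1 - L \cdot n^{-c_1 + 5}$.

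There is essentially no main obstacle here, since the heavy lifting was already performed in Lemma \ref{lem:single_motif_spars} and the preceding lemmas about \parspar{} which already handle sets of motifs at once. The only small point worth being explicit about is that the randomness used by \textsc{MotifSparsification} is shared across all motifs — the algorithm does not draw fresh randomness per motif — so strictly speaking we are union-bounding over $L$ events defined on the same probability space, which is exactly what a union bound handles. The per-motif bound $1 - n^{-c_1 + 5}$ from Lemma \ref{lem:single_motif_spars} requires only that for each motif $M_i$, the approximate cut-preservation property holds after $\lceil 2 c_1 r^*_{max} \log n \rceil$ iterations of \gparspar{}, and the analysis there depends only on $r^*_{max}$ and $\epsilon'$, both of which are defined globally and do not change between motifs.
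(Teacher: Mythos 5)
Your proof is correct and follows essentially the same route as the paper: apply Lemma~\ref{lem:single_motif_spars} to each $M_i$ and take a union bound over the $L$ motifs. The paper's proof is a one-liner to this effect, and your added remarks (that the same graph $G'$ is analyzed for all motifs, and that the randomness is shared) are accurate observations that the paper leaves implicit.
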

\begin{proof}
The proof follows from applying \cref{lem:single_motif_spars} to each of the motifs $M_i$, $i \in [L]$.
\end{proof}

\begin{lemma}
\label{thm:final_time}
Let a directed weighted graph $G=(V, E, w)$, $\e \in (0, 1)$ and a set of motifs $\{\M_i\}_{i \in[L]}$ be the input of \textsc{MotifSparsification}. The total running time of \textsc{MotifSparsification} is $O(r^*_{max} B \log n)$.
\end{lemma}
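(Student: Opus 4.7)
The plan is essentially an accounting argument: \textsc{MotifSparsification} is just a loop that repeatedly invokes \gparspar{}, so the total running time reduces to (number of iterations) $\times$ (cost per iteration) $+$ bookkeeping. By the specification of \gparspar{}, a single invocation costs $B$, and the outer loop on line \ref{line:inner_loop} runs $\lceil 2 c_1 r^*_{max}\log n\rceil = O(r^*_{max} \log n)$ times, which immediately gives the claimed bound provided we can charge $B$ to each iteration.

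More concretely, first I would verify that each pass through the loop body does only the following: one call to \gparspar{} on the current graph $(V,E,w)$, a reassignment of the returned pair, and the constant-time loop bookkeeping. Setting $\e'$ and initializing $E_0$ before the loop is $O(1)$, and returning $G$ at the end is also negligible. Thus the total cost is at most $\lceil 2 c_1 r^*_{max}\log n\rceil$ invocations of \gparspar{}, plus lower-order terms.

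The only mild subtlety is that the value $B$ in the specification of \gparspar{} is a function of its input, and the input graph changes from iteration to iteration. Here I would observe that the vertex set $V$, the motif list $\{M_i\}_{i\in[L]}$, and the parameter $\e'$ are invariant across iterations, and the edge set can only shrink since \gparspar{} returns a subset of the edges handed in (each edge is either kept with some reweighting or removed). Because the runtime bound $B$ in the \gparspar{} specification depends monotonically on the input parameters (in particular on $|E|$), the runtime of every subsequent call is at most the runtime $B$ of the first call on the original input $(V,E,w,\e',\{M_i\})$. Hence each of the $O(r^*_{max}\log n)$ iterations costs at most $B$, and the total running time is $O(r^*_{max} B \log n)$ as claimed.

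The main ``obstacle,'' such as it is, is just making precise the statement that $B$ for later iterations is dominated by $B$ for the first iteration; this is immediate from the monotonicity of the explicit bounds given in Theorem~\ref{thm:par_spar_time} (and analogously for \Cref{alg:fast_par_spars}), since those expressions are non-decreasing in $|E|$ and in $|\MG{G}{M_i}|$, both of which can only decrease as edges are removed. Once this is noted, the lemma follows by multiplying the per-iteration bound by the iteration count.
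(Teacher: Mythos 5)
Your proof is correct and takes the same approach as the paper: \textsc{MotifSparsification} simply calls \gparspar{} inside a loop of $\lceil 2c_1 r^*_{max}\log n\rceil$ iterations, so the runtime is $O(r^*_{max} B\log n)$. The paper's one-line proof leaves the monotonicity-of-$B$ point implicit, which you make explicit, but the argument is the same.
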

\begin{proof}
Immediate from the fact that $B$ is the runtime of $\gparspar{}$.
\end{proof}

\subsection{\motspar{} with \parspar}

\begin{proof}[Proof of \cref{thm:intro-main-general}]
The proof follows from \cref{thm:final_spars}, \cref{lem:final_size} and \cref{thm:final_time}.

By \Cref{cor:total_crit_counting}, the number of edges in the final graph is at most
\begin{align*}
    \frac{c L r_{max} {(r^*_{max})^2} (n - 1) (\log n + r_{max})}{d \epsilon'^2} &= O \left(\frac{L r_{max} {(r^*_{max})^4} (n - 1) (\log n + r_{max}) \log^2 n}{\epsilon^2} \right)\\ &= \tO(Ln / \e^2).
\end{align*}
To improve upon runtime a little bit, in \parspar, we can compute each set of motif instances only once, since we only delete them during the algorithm. Since the algorithm calls \textsc{PartialSparsification} in a loop the total runtime is 
\begin{align*}
    &\sum_{i=1}^{L} T(G, M_i) + \\
    &O\left(\sum_{i=1}^{L} {(r^*_{max}
   )}^2 |\MG G {\M_i}|\cdot \log n + r_{max}^* r_{i}|\MG{G}{M_i}| \cdot\log^3 n \cdot\log (r_{i}|\MG{G}{M_i}|) + |E|) \right) \\
   &= \sum_{i=1}^{L} T(G, M_i) + \tO\left(L|E| + \sum_{i=1}^{L}|\MG G {\M_i}|\right). \qedhere
\end{align*}
\end{proof}
\section{Sparsification without enumeration}

One of the main problems of the presented algorithm is that it requires finding every motif instance, which takes time at least equal to the number of motif instances, which can reach $O(n^r)$ in dense graphs.

Nevertheless, there is still a way to circumvent the enumeration. Consider the case when the sparsification is performed with respect to only one motif $M$. Recall that \textsc{PartialSparsification} on a high level does two things: finds critical edges and samples non-critical edges with high probability. The importance of the edge is defined as the sum of importances of motifs containing this edge:
\[
    \ws_\M(e) = \sum_{I \in \MG GM: e \in E(I)} \ws(I),
\] where $\ws(I) = w(I) / \kappa_I$,
and the edge is critical if $\ws_\M(e) \geq \frac{d \epsilon'^2}{r^* (\log n + r)}$.

It is easy to see that all steps of this procedure can be performed in time $\tO(|E| + |V|)$, except for computing the values $\ws_\M(e)$.
This is why we opt for a different approach of defining importances, based on connectivities.

\subsection{Basic Definitions}

\begin{definition}[Motif Connectivity]\label{def:weak_motif_strength}
Let $\M=(V_\M, E_\M)$ be a motif, let $G=(V, E, w)$ be a directed weighted graph. Let $I \in \MG GM$ be a motif instance. The connectivity, $k_I$ of $I$ is the minimum $M$-motif size of a cut $(S, V \setminus S)$ which $I$ crosses.
\end{definition}

Accordingly, we adopt the following notation. 

\begin{definition}[Connectivity Importance Weight]

Let $M = (V_M, E_M)$ be a motif and $G=(V,E,w)$ be a directed weighted graph. Then
\begin{itemize}
    \item for $I \in \MG GM$, the connectivity importance weight in $G$ is $\wws(I) = w(I) / k_I$,
    \item for an edge $e \in E$, the connectivity $M$-importance weight in $G$ is
    \[
    \wws_\M(e) = \sum_{I \in \MG GM: e \in E(I)} \wws(I).
    \]
\end{itemize}
\end{definition}

While it is unclear how to compute motif strengths without enumerating all motifs, there is a way to approximate motif connectivities. The key idea is to compute the motif weighted graph, and then use the edge connectivities there to bound the motif connectivities, since the cut sizes in motif weighted graph are close to the $M$-motif sizes of corresponding cuts in the original graph.
\begin{definition}[Motif Weighted Graph]\label{def:motif-weighted-graph}
Let $\M=(V_\M, E_\M)$ be a motif and $G=(V, E, w)$ be a directed weighted graph. The undirected graph $G_M = (V, E, w_M)$ is called the $M$-motif weighted graph. (Recall from \cref{def:edge_motif_weight} that $w_\M(e) = \sum_{I \in \MG GM: e \in E(I)} w(I)$.) The motif weighted graph should be considered as undirected.
\end{definition}

Although it was shown by \cite{fung2019general} that graph cut sparsification is possible using the importances based on connectivities, to our knowledge no previous work has shown that it is possible in the hypergraph setting. Hence to show the correctness of the proposed algorithm, we shall adapt their techniques to our approach.

Finally, to compute the motif weighted graph we shall modify an algorithm for computing the number of motif instances in the graph \cite{williams2013finding}.

The following sections will be organized as follows: we will first present the algorithm for computing the motif weighted graph and prove its correctness and runtime, followed by the sparsification algorithm. In the rest of this section we will introduce necessary definitions and show some of their properties.

\begin{definition}[$M$-connectivity of an edge]
Let $M = (V_M, E_M)$ be a motif and $G=(V,E,w)$ be a directed weighted graph. Recall that the connectivity of an edge $e$ in graph $G_M$ is the minimum size of a cut cutting $e$ in $G_M$. For $e \in E$, the value $k_{M, e}$, equal to the connectivity of an edge $e$ in the motif weighted graph $G_M$, is called $M$-connectivity of the edge $e$.
\end{definition}

We will be omitting subscript $M$ where possible.

\begin{definition}

Let $M = (V_M, E_M)$ be a motif and $G=(V,E,w)$ be a directed weighted graph. Then
\begin{itemize}
    \item for $I \in \MG GM$, the estimated connectivity importance weight in $G$ is 
    \[
    \wwse(I) = w(I) \frac{r^*}{\min_{e \in E(I)} k_{M, e}},
    \]
    \item for an edge $e \in E$, the estimated connectivity $M$-importance weight in $G$ is
    \[
    \wwse_\M(e) = \sum_{I \in \MG GM: e \in E(I)} \wwse(I).
    \]
\end{itemize}
\end{definition}

\begin{lemma} \label{lem:motif_conn_approx}
Let $M = (V_M, E_M)$ be a motif and $G=(V,E,w)$ be a directed weighted graph. For any cut $(S, V \setminus S)$, the following holds:
\[
    \Val_{M, G}(S, V \setminus S) \leq \Val_{G_M}(S, V \setminus S) \leq r^* \Val_{M, G}(S, V \setminus S).
\]
In addition, for all $I \in \MG GM$,
\[
    \wws(I) \leq \wwse(I) \leq r^*\cdot \wws(I).
\]
\end{lemma}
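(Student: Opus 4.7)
The plan is to prove both inequalities in two separate short arguments, the first being a direct computation at the level of cut sums and the second reducing to the first via edge connectivities in $G_M$.

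For the first inequality, I would start by unrolling the cut value in $G_M$ using the definition of $w_M(e)$:
\[
\Val_{G_M}(S, V\setminus S) \;=\; \sum_{e\text{ crosses }(S,V\setminus S)} w_M(e) \;=\; \sum_{e\text{ crosses}}\ \sum_{I\in\mathcal M(G,M):\, e\in E(I)} w(I).
\]
Swapping the order of summation, for each motif instance $I$ crossing the cut, its contribution is $w(I)\cdot c_I$, where $c_I\in\{1,\dots,r^*\}$ is the number of edges of $I$ that cross the cut (at least $1$ because $I$ crosses, at most $r^*=|E_M|$). Since motif instances that do \emph{not} cross the cut contribute $0$ on both sides, this immediately gives
\[
\Val_{M,G}(S,V\setminus S) \;\le\; \Val_{G_M}(S,V\setminus S) \;\le\; r^*\,\Val_{M,G}(S,V\setminus S).
\]

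For the second inequality, I need to show $k_I \le \min_{e\in E(I)} k_{M,e} \le r^*\, k_I$, which after multiplication by $w(I)$ gives the claim. For the lower bound, fix any $e\in E(I)$ and let $(S,V\setminus S)$ realise $k_{M,e}$ in $G_M$. Since $e\in E(I)$ crosses this cut, $I$ also crosses it, so $k_I\le \Val_{M,G}(S,V\setminus S)$; combining with the first inequality of the lemma, $k_I\le \Val_{M,G}(S,V\setminus S)\le \Val_{G_M}(S,V\setminus S)=k_{M,e}$, and minimising over $e\in E(I)$ yields $k_I\le \min_{e\in E(I)} k_{M,e}$. For the upper bound, let $(S,V\setminus S)$ be a cut achieving $k_I$, i.e.\ a motif cut crossing $I$ with value $k_I$. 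Since $I$ crosses this cut and is weakly connected, some $e_0\in E(I)$ crosses it, so by the first inequality $k_{M,e_0}\le \Val_{G_M}(S,V\setminus S) \le r^*\,\Val_{M,G}(S,V\setminus S)=r^*k_I$, hence $\min_{e\in E(I)} k_{M,e}\le k_{M,e_0}\le r^*k_I$.

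Neither step looks technically difficult; the only subtlety to watch is the definitional one, namely that $\Val_{G_M}$ is an undirected cut while $G$ may be directed, so ``edge crosses the cut'' must be interpreted in the undirected sense consistently with the definition of motif-crossing (which was itself defined through the undirected version of the cut in \Cref{def:motif-cut-size} and the surrounding discussion). Once that is noted, the first double inequality is a one-line double counting and the second follows formally by plugging the first into the definitions of $k_I$ and $k_{M,e}$.
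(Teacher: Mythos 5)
Your proof is correct and follows essentially the same approach as the paper: the first double inequality comes from the observation that a crossing motif contributes between $1$ and $r^*$ of its edges to $\Val_{G_M}$, and the second is obtained by identifying $\min_{e\in E(I)} k_{M,e}$ with $\min_{S:\,(S,V\setminus S)\text{ cuts }I}\Val_{G_M}(S,V\setminus S)$ and plugging in the first inequality. The paper phrases the second step as a single chain of equalities/inequalities over minima rather than picking explicit optimizing cuts, but this is only a stylistic difference; your aside about interpreting ``crosses'' in the undirected sense is correct and consistent with the paper's conventions.
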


\begin{proof}
The first property follows from the following observation: if a cut cuts a motif, then it cuts between $1$ and $r^*$ of its edges. Therefore, the contribution of a motif $I$ to the size of a cut that crosses it in $G_M$ is between $w(I)$ and $r^* w(I)$.

To establish the second property, notice that for all $I \in \MG GM$, by definition of edge connectivity,
\[
    \min_{e \in E(I)} k_{M, e} = 
    \min_{e \in E(I)} \min_{\emptyset \subsetneq S \subsetneq V: (S, V\setminus S) \text{ cuts } e} \Val_{G_M}(S, V\setminus S) = 
    \min_{\emptyset \subsetneq S \subsetneq V: (S, V\setminus S) \text{ cuts } I} \Val_{G_M}(S, V\setminus S).
\] 
This, in combination with the first property, leads to
\[
    k_I \leq \min_{e \in E(I)} k_{M, e} \leq r^* k_I,
\]
which implies the second property.
\end{proof}
\subsection{Constructing the motif weighted graph}
Most of the algorithmic ideas in this section are adopted from \cite{williams2013finding}. The design of the algorithm is based on the idea of reducing the task of computing the number of motif instances in graph $G$ to the task of computing the number of triangles in a specially constructed graph $G_\sigma$, with a one-to-one correspondence between motif instances in $G$ and triangle instances in $G_\sigma$. Then, we can apply fast matrix multiplication to count the number of triangles in $G_\sigma$.

The problem of computing a motif weighted graph is slightly different from the problem of computing the number of motifs. We can use the latter in a black box manner: for each edge $e \in E$, we delete this edge from the graph and compute the number of remaining motif instances. The difference between this number and the number of motif instances in the original graph is the motif weight of the edge $e$. This approach requires calling the motif counting primitive $|E| + 1$ times. In this subsection, we present an algorithm which can construct the motif graph without this additional factor of $|E|$ in the running time.

\subsubsection{Notation}

Most of the notation in this subsection is exclusive to this subsection. We fix the motif $M=(V_M, E_M)$, and we will be omitting it where possible. Denote by $T_k = \{ (v_{1}, \ldots, v_{k}): \forall j \in [k] \: v_{j} \in V \land \forall j, i \in [k] \: j \neq i \rightarrow v_i \neq v_j \}$ the set of all ordered sequences of $k$ distinct vertices of $G$.

Let $k_1, k_2, k_3$ be such that $k_1 + k_2 + k_3 = r$, and $ \lfloor r/3 \rfloor \leq k_i  \leq \lceil r/3 \rceil$. The algorithm starts by constructing a tripartite graph with weighted vertices and edges 
$G_{\sigma} = (V_\sigma, E_\sigma, w_\sigma)$ defined as follows: $V_\sigma = T_{k_1} \cup T_{k_2} \cup T_{k_3}$, where we consider the entries of the three parts $T_{k_i}$ to be distinct. Fix some arbitrary ordering $\pi$ of vertices $V_M$, and let $\pi_1$ be its first $k_1$ entries, $\pi_2$ the next $k_2$ entries, and $\pi_3$ the rest of its entries. 

For a vertex $v \in T_{k_i} \subset V_\sigma$, consider a natural mapping $f_{v}: v \to V_M$,
where for $\ell \in [k_i]$, $f_v(v_l) = \pi_{i, \ell}$. Note that if we are also given $u \in T_{k_{j}}$ and $z \in T_{k_{h}}$, where $i, j, h$ are $1, 2, 3$ in some order, and $v, u, z$ are pairwise disjoint, we can construct natural extensions of the corresponding mappings: $f_{v,u} = f_v \cup f_u$ and $f_{v, u, z} = f_v \cup f_u \cup f_z$. Notice that both of them, as well as $f_v$, are bijections. We call such a mapping $f$ consistent if $f^{-1}$ is a graph homomorphism (that is every edge in $M$, when mapped via $f^{-1}$, corresponds to an edge in $G$). We denote by $E(f)$ the subset of edges $E$ that are mapped to edges in $E_M$.

For a vertex $v \in T_{k_i} \subset V_\sigma$, its weight is defined to be equal to
\[
    w_\sigma(v) = \prod_{e \in E(f_v)} w(e),
\] if $f_v$ is consistent, and $0$ if it is not. For a pair of vertices $u, v \in V_\sigma$, there is an edge between them if they come from different sets $T_{k_i}$, are pairwise disjoint, and mapping $f_{u, v}$ is consistent. The weight of the edge $(u, v)$ is equal to
\[
    w_\sigma((u, v)) = \prod_{e \in E(u, v)} w(e),
\] where $E(u, v) = \{ e \in E(f_{u, v}): |e \cap u| = |e \cap v| = 1\}$.

\begin{lemma} \label{lem:mwg_bijection}
    $P$ is an injective graph homomorphism between $G_M$ and a subgraph $G'$ of $G$ iff there exists a \bf{triangle} $u, v, z \in V_\sigma$ such that $P=f^{-1}_{u, v, z}$.
\end{lemma}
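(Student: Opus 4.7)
The plan is to prove this biconditional by explicit construction in each direction, leveraging the fixed three-way split $\pi = \pi_1 \pi_2 \pi_3$ of $V_M$ that is built into the definition of $G_\sigma$. Reading the surrounding paragraphs, I interpret the statement as asserting a bijection between injective homomorphisms from the motif $M$ into (subgraphs of) $G$ and triangles of $G_\sigma$, which is exactly the reduction to triangle counting the paper wants to set up.

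\textbf{Forward direction.} Given an injective homomorphism $P \colon V_M \to V(G')$, the natural candidate triangle is obtained by grouping the image of $P$ according to $\pi$: set $u = (P(\pi_{1,1}), \ldots, P(\pi_{1,k_1})) \in T_{k_1}$, $v = (P(\pi_{2,1}), \ldots, P(\pi_{2,k_2})) \in T_{k_2}$, and $z = (P(\pi_{3,1}), \ldots, P(\pi_{3,k_3})) \in T_{k_3}$. By construction $P = f_{u,v,z}^{-1}$. It remains to verify that $u,v,z$ form a triangle in $G_\sigma$: (i) they lie in three distinct parts of $V_\sigma$; (ii) their underlying sets are pairwise disjoint because $P$ is injective; and (iii) for every pair $\{i,j\}$ the restriction of $P$ to $\pi_i \cup \pi_j$ is still a graph homomorphism, which is precisely consistency of $f_{u,v}$, $f_{v,z}$, and $f_{u,z}$.

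\textbf{Reverse direction.} Given a triangle $u,v,z \in V_\sigma$, I would define $P := f_{u,v,z}^{-1}$. Injectivity of $P$ follows immediately from the pairwise disjointness of the tuples combined with the fact that each of $f_u, f_v, f_z$ is a bijection onto its own block of $\pi$. The substantive step is to verify that $P$ is a graph homomorphism, that is, $f_{u,v,z}$ is consistent. To see this, take any edge $(a,b) \in E_M$; since $V_M = \pi_1 \cup \pi_2 \cup \pi_3$ has only three blocks, the two endpoints $a,b$ must lie in at least one of the pairwise unions $\pi_i \cup \pi_j$, and consistency of that pair's mapping then forces $(P(a), P(b)) \in E$. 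Taking $G' = G[P(V_M)]$ produces the required subgraph of $G$.

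\textbf{Main obstacle.} The only nonroutine step is the pigeonhole observation in the reverse direction: that pairwise consistency of $f_{u,v}, f_{v,z}, f_{u,z}$ upgrades to global consistency of $f_{u,v,z}$. This is precisely why the construction uses \emph{three} almost-equal blocks of sizes in $\{\lfloor r/3 \rfloor, \lceil r/3 \rceil\}$: any single edge of $M$ has only two endpoints, so it is guaranteed to be covered by at least one of the three pairwise unions, and no ``global'' consistency check on all of $V_M$ is needed beyond what the three edges of the triangle already encode. Everything else in the proof is a direct unwinding of the definitions of $T_{k_i}$, $f_v$, consistency, and the edge relation in $G_\sigma$.
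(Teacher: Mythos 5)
Your proof is correct and follows essentially the same route as the paper's: the forward direction groups the image of $P$ into $u,v,z$ according to $\pi$ and reads off pairwise disjointness from injectivity and pairwise consistency from the homomorphism property, while the reverse direction observes that pairwise consistency of $f_{u,v}, f_{v,z}, f_{u,z}$ upgrades to consistency of $f_{u,v,z}$ because any edge of $M$ has both endpoints in one of the three pairwise block-unions. The only difference is that you spell out this pigeonhole step explicitly, whereas the paper's proof states "it follows that $f_{u,v,z}$ is consistent" without elaboration; your version is slightly more transparent but the argument is the same.
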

\begin{proof}
The reverse direction is easy to see. Since there is only an edge between two vertices if they are pairwise disjoint, come from different sets $T_{k_i}$, 
and $f_{u,v},
f_{v,z}$, and $f_{v,z}$ are consistent, it follows
that $f_{u, v, z}$
is consistent and, therefore, $P$ is a homomorphism.

In the other direction, suppose that $P$ maps
$\pi_1$ to $u$, $\pi_2$ to $v$, $\pi_3$ to $z$. Since $P$ is injective, $u$, $v$ and $z$ are pairwise disjoint and don't contain repeating elements. Therefore, they are vertices of $G_\sigma$, and, since $P$ is homomorphism, by definition they are pairwise connected by edges. By definition of $f_{u, v, z}$, $f^{-1}_{u, v, z} = P$.
\end{proof}

The approach now is to compute the triangle weighted graph for $G_\sigma$, and use it to construct the motif weighted graph of the original graph. The triangle weighted graph we construct differs somewhat from the motif weighted graph defined in \cref{def:motif-weighted-graph}, since we must take into account the vertex-weights in $G_\sigma$. Formally, denote by $\Delta$ the triangle motif, i.e. the clique on $3$ vertices. For $I \in \MG {G_\sigma} \Delta$, define
\[
    w_\sigma(I) = \prod_{v \in V(I)} w_\sigma(v) \prod_{e' \in E(I)} w_\sigma(e').
\]
For $e \in E_\sigma$, define
\[
w_{\Delta, \sigma}(e) = \sum_{I \in \MG {G_\sigma} \Delta: e \in E(I)} w_\sigma(I)
\]
and for $v \in V_\sigma$,
\[
w_{\Delta, \sigma}(v) = \sum_{I \in \MG {G_\sigma} \Delta: v \in V(I)} w_\sigma(I).
\]
Let $A$ denote the number of automorphisms of $M$.

\begin{lemma} \label{lem:mwg_counting}
    For $e \in E$, 
    \[
        w_M(e) = \frac{1}{A} \left( \sum_{v \in V_\sigma: e \in E(f_v)} w_{\Delta, \sigma}(v) + \sum_{(u, v) \in E_\sigma: e \in E(u, v)} w_{\Delta, \sigma}((u,v)) \right)
    \]
\end{lemma}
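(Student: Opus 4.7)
The plan is to prove the identity by interchanging the order of summation on the right-hand side: rather than summing triangle contributions grouped by the vertex/edge of $G_\sigma$ that they contain, I will sum over triangles $T=(u',v',z')$ in $G_\sigma$ directly and account how each triangle contributes to the RHS for the fixed edge $e\in E$. For a fixed triangle $T$, the edge $e=(a,b)$ is relevant precisely when $e\in E(f_{u',v',z'})$, which forces $a$ and $b$ to both lie in $u'\cup v'\cup z'$. There are then exactly two cases: either $a$ and $b$ lie in the same block (say both in $u'$), in which case $e\in E(f_{u'})$ and $T$ is counted exactly once in the first sum (contributing $w_\sigma(T)$ through $w_{\Delta,\sigma}(u')$); or they lie in two distinct blocks (say $u'$ and $v'$), in which case $e\in E(u',v')$ and $T$ is counted exactly once in the second sum. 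The two cases are mutually exclusive and collectively exhaustive, so the RHS equals $\sum_{T\,:\,e\in E(f_{u',v',z'})} w_\sigma(T)$.

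Next I will compute $w_\sigma(T)$ for a triangle $T=(u',v',z')$. By definition, $w_\sigma(T)$ is the product of the three vertex-weights $w_\sigma(u'),w_\sigma(v'),w_\sigma(z')$ and the three edge-weights $w_\sigma((u',v')),w_\sigma((u',z')),w_\sigma((v',z'))$. Each of these factors is, by definition, a product of $w(e')$ over a specific subset of $E(f_{u',v',z'})$, namely the edges of the ambient graph whose two endpoints both land in a particular block, or one in each of a particular pair of blocks. Since every edge in $E(f_{u',v',z'})$ falls into exactly one of these six categories (its two endpoints are either both in one block or split between exactly two), these six sets partition $E(f_{u',v',z'})$, and therefore $w_\sigma(T)=\prod_{e'\in E(f_{u',v',z'})} w(e')$.

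Now I invoke \cref{lem:mwg_bijection} to translate triangles of $G_\sigma$ into injective graph homomorphisms $P=f^{-1}_{u',v',z'}\colon V_M\to V$ whose image is a subgraph of $G$ isomorphic to $M$. For each motif instance $I\in\MG GM$, the number of injective homomorphisms $M\to I$ is exactly $A$, the number of automorphisms of $M$. Moreover, for such a $P$, the image of $E_M$ under $P$ is exactly $E(I)$, so $\prod_{e'\in E(f_{u',v',z'})} w(e')=w(I)$ and the condition $e\in E(f_{u',v',z'})$ is equivalent to $e\in E(I)$. Grouping triangles by the motif instance they realize then yields
\[
\sum_{T\,:\,e\in E(f_{u',v',z'})} w_\sigma(T) \;=\; \sum_{I\in\MG GM\,:\,e\in E(I)} A\cdot w(I) \;=\; A\cdot w_M(e),
\]
and dividing by $A$ gives the lemma.

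The only real bookkeeping hurdle is the casework in the first step—ensuring that a triangle $T$ with $e\in E(f_{u',v',z'})$ is counted in exactly one of the two sums and with multiplicity exactly one. This boils down to the observation that, in a triangle of $G_\sigma$, the three parts $u',v',z'$ are pairwise disjoint subsets of $V$ (by the disjointness requirement in the definition of $E_\sigma$), so the two endpoints of $e$ uniquely determine whether $e$ is intra-block or inter-block, and which block(s) contain them. Aside from that, every other step is a direct unwinding of the definitions and application of the bijection lemma.
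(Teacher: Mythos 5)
Your proof is correct and follows essentially the same route as the paper's: both hinge on \cref{lem:mwg_bijection}, the observation that $E(f_{u,v,z})$ is partitioned into the six intra-/inter-block edge sets (so that $w_\sigma(T)=\prod_{e'\in E(f_{u,v,z})}w(e')$ and each $T$ with $e\in E(f_{u,v,z})$ is counted exactly once on the RHS), and the overcounting factor of $A$ from vertex-ordered instances. The only cosmetic difference is the direction of the bookkeeping—you unwind the RHS and regroup by triangle, whereas the paper starts from $w'_M(e)$ and expands it to reach the RHS.
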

\begin{proof}
Let $w'_M(e)$ be the weighted sum of all vertex-ordered instances of $M$ containing $e$. Then, trivially, $w'_M(e) = A \cdot w_M(e)$. Each vertex-ordered instance of $M$ is uniquely defined by an injective homomorphism $P$ from $G_M$ to a subgraph of $G$, with its weight being:
\[
    w(P) = \prod_{e \in P(E_M)} w(e),
\] where $P(E_M) = \{(P(u), P(v)): (u, v) \in E_M \}$ is the projection of the edges of $M$.

By \cref{lem:mwg_bijection}, $P$ uniquely maps to a triple of vertices $u, v, z$ forming a triangle. Since the set of edges $P(E_M) = E(f_{u, v, z})$ can be partitioned into sets of edges between elements of $u$, $v$ and $z$, and between pairs of elements from different vertices,
\[
    w(P) = \prod_{e \in P(E_M)} w(e) = \prod_{v \in V(I)} w_\sigma(v) \prod_{e' \in E(I)} w_\sigma(e') = w_\sigma(I),
\] where $I = (u, v, z)$ is the aforementioned triangle. Therefore
\[
    w'_M(e) = \sum_{I \in \MG {G_\sigma} \Delta: e \in E(I) }  w_\sigma(I).
\]
Note that for a triangle $I=(u, v, z)$, edge $e \in E$ can only be present in one of the sets 
\[
E(f_v), E(f_u), E(f_z), E(u, v), E(u, z), E(v, z),
\] since $u, v, z$ are pairwise disjoint, and that those sets form a partition of $E(I)$. Therefore
\[
    w'_M(e) = \sum_{I \in \MG {G_\sigma} \Delta }  w_\sigma(I) = \sum_{v \in V_\sigma: e \in E(f_v)} w_{\Delta, \sigma}(v) + \sum_{(u, v) \in E_\sigma: e \in E(u, v)} w_{\Delta, \sigma}((u,v)).
\]
The claim now follows from the relation  $w'_M(e) = A \cdot w_M(e)$.
\end{proof}

\subsubsection{Analysis of the algorithm} We can now present the \Cref{alg:count_motifs}. Let $N(v)$ be the set of vertices adjacent to $v \in V_\sigma$.

\begin{algorithm}[H]
\caption{Constructing the motif weighted graph}
\label{alg:count_motifs}
\begin{algorithmic}[1]
\Procedure{\textsc{MotifWeights}}{$G=(V, E, w), M=(V_M, E_M)$} 
\State Compute number of automorphisms $A$ of $M$.
\State Construct graph $G_\sigma=(V_\sigma, E_\sigma)$. 
\State Let $W$ be the weighted adjacency matrix of $G_\sigma$.
\State Let $D$ be a diagonal matrix with diagonal entries $w_\sigma(v), v \in V_\sigma$.
\State $U \gets D W D W D$

\For{$(u, v) \in E_\sigma$}
    \State $w_{\Delta, \sigma}((u, v)) \gets W_{u, v} \cdot U_{u, v}$.
\EndFor
\For{$v \in V_\sigma$}
    \State $w_{\Delta, \sigma}(v) \gets \frac{1}{2} \sum_{u \in N(v)} w_{\Delta, \sigma}((u, v))$
\EndFor

\State $\forall e \in E: w'_M(e) \gets 0$
\For{$v \in V_\sigma$, $e \in E(f_v)$}
    \State $w'_M(e) \gets w'_M(e) + w_{\Delta, \sigma}(v)$
\EndFor
\For{$(u, v) \in E_\sigma$, $e \in E(u, v)$}
    \State $w'_M(e) \gets w'_M(e) + w_{\Delta, \sigma}((u, v))$
\EndFor
\State \Return $w'_M / A$
\EndProcedure
\end{algorithmic}
\end{algorithm}

\begin{theorem} \label{lem:motif_counting_correcntess}
Let a directed weighted graph $G=(V, E, w)$ and a motif $M=(V_M, E_M)$ be the input of \textsc{MotifWeights}. Then \cref{alg:count_motifs} returns the function $w_M(e), e \in E$.
\end{theorem}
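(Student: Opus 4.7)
The plan is to verify that the three intermediate quantities computed by \textsc{MotifWeights} --- the triangle-edge weights $w_{\Delta,\sigma}((u,v))$, the triangle-vertex weights $w_{\Delta,\sigma}(v)$, and the final sums $w'_M(e)$ --- match their defined values, and then to invoke \cref{lem:mwg_counting} to conclude that the output equals $w_M(e)$.

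First I would expand the matrix product $U=DWDWD$ entrywise. For any $(u,v)$,
\[
U_{u,v}=w_\sigma(u)\,w_\sigma(v)\sum_{x\in V_\sigma} w_\sigma(x)\,W_{u,x}\,W_{x,v}.
\]
Multiplying by $W_{u,v}$ picks up exactly those $x$ for which $(u,x),(x,v),(u,v)$ are all edges of $G_\sigma$, and thanks to the tripartite structure of $G_\sigma$ (no edges inside a part $T_{k_i}$) those $x$ are automatically distinct from $u,v$ and lie in the third part. Therefore
\[
W_{u,v}\,U_{u,v}=\sum_{\substack{x:\,(u,v,x)\\ \text{triangle in }G_\sigma}} w_\sigma(u)w_\sigma(v)w_\sigma(x)\,W_{u,v}W_{u,x}W_{x,v}=\sum_{I\in\MG{G_\sigma}{\Delta}:\,(u,v)\in E(I)} w_\sigma(I),
\]
which is precisely the value $w_{\Delta,\sigma}((u,v))$ defined before the algorithm. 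This establishes correctness of the first inner loop.

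Next I would handle the vertex weight. Summing $w_{\Delta,\sigma}((u,v))$ over all neighbours $u$ of $v$ counts each triangle $I=(v,a,b)$ containing $v$ exactly twice (once as the edge $(a,v)$ and once as the edge $(b,v)$, each time with weight $w_\sigma(I)$). Thus
\[
\tfrac{1}{2}\sum_{u\in N(v)} w_{\Delta,\sigma}((u,v))=\sum_{I\in\MG{G_\sigma}{\Delta}:\,v\in V(I)} w_\sigma(I)=w_{\Delta,\sigma}(v),
\]
which is what the second loop computes.

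Finally, the last two loops over $E$ evaluate
\[
w'_M(e)=\sum_{v\in V_\sigma:\,e\in E(f_v)} w_{\Delta,\sigma}(v)+\sum_{(u,v)\in E_\sigma:\,e\in E(u,v)} w_{\Delta,\sigma}((u,v)),
\]
and the algorithm returns $w'_M/A$. By \cref{lem:mwg_counting} this is exactly $w_M(e)$, completing the proof. The main obstacle is the bookkeeping in the first step: one must be careful that $G_\sigma$ being tripartite guarantees the three vertices appearing in $(DWDWD)_{u,v}$ are distinct and lie in distinct parts, so that the sum $W_{u,v}U_{u,v}$ ranges over honest triangles with the correct weight (vertex weights included exactly once each, edge weights once each), rather than degenerate closed walks.
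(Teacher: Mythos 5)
Your proof is correct and follows essentially the same approach as the paper: verify that the first two loops correctly compute $w_{\Delta,\sigma}$ via the identity $w_{\Delta,\sigma}((u,v))=W_{u,v}(DWDWD)_{u,v}$, show that the vertex weights satisfy $w_{\Delta,\sigma}(v)=\tfrac12\sum_{u\in N(v)}w_{\Delta,\sigma}((u,v))$ by the double-counting argument, and then invoke \cref{lem:mwg_counting} for the final aggregation. The only difference is cosmetic: you expand $W_{u,v}U_{u,v}$ and argue that the tripartite structure rules out degenerate closed walks, whereas the paper begins from the sum over triangles containing $(u,v)$ and extends it to a sum over all $z\in V_\sigma$ by zero-padding with $W$ entries; both directions yield the same chain of equalities, and your explicit remark about distinctness of the three parts is a slightly more careful rendering of a step the paper leaves implicit.
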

\begin{proof}
Assuming that values $w_{\Delta, \sigma}$ are computing correctly by the algorithm, \Cref{lem:mwg_counting} implies that the values $w'_M$ and $w_M$ are also computed correctly. Therefore, we only need to prove correctness of computation of $w_{\Delta, \sigma}$.

To show that, notice that $w_\sigma((u,v)) = W(u, v) = 0$ if $u$ and $v$ are not connected. Therefore, for $(u, v) \in E_\sigma$:
\begin{align*}
    w_{\Delta, \sigma}((u,v)) =& \sum_{I \in \MG {G_\sigma} \Delta: e \in E(I)} w_\sigma(I) \\
    &= \sum_{z \in V_\sigma: (u, z) \in E_\sigma \land (z, v) \in E_\sigma} w_\sigma(u) w_\sigma((u, z)) w_\sigma(z) w_\sigma((z, v)) w_\sigma(v) w_\sigma((u, v)) \\
    &= \sum_{z \in V_\sigma} D_{u,u} W_{u, z} D_{z, z} W_{z, v} D_{v, v} W_{u, v} = (DWDWD)_{u, v}W_{u, v},
\end{align*}
which is exactly what is being computed. Considering values $w_{\Delta, \sigma}$ for vertices, the following equality holds
\[
    w_{\Delta, \sigma}(v) = \frac{1}{2} \sum_{u \in N(v)} w_{\Delta, \sigma}((u, v))
\]
since each triangle containing $v$ will be counted twice in the sum on the right hand side.
\end{proof}

\begin{theorem} \label{thm:motif_counting}
Let a directed weighted graph $G=(V, E, w)$ and a motif $M=(V_M, E_M)$ be the input of \textsc{MotifWeights}. Then its running time is 
$O(n^{\omega \lceil r / 3 \rceil} + r^2 n^{2 \lceil r / 3 \rceil} + r^* r^r)$ where $n^\omega$ is matrix multiplication time.
\end{theorem}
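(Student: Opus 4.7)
My plan is to bound the running time of \textsc{MotifWeights} step by step, matching each computational phase of \cref{alg:count_motifs} to one of the three terms in the claimed bound.

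First, I would handle the computation of the automorphism count $A$. Since an automorphism is a bijection $V_M \to V_M$ preserving edges, one can enumerate all $r!$ permutations of $V_M$ and, for each, verify edge-preservation in $O(r^*)$ time. Using $r! \leq r^r$ this yields the $O(r^* r^r)$ term in the stated bound. Next, I would bound the cost of constructing $G_\sigma$. The vertex set $V_\sigma = T_{k_1} \cup T_{k_2} \cup T_{k_3}$ has size at most $3 n^{\lceil r/3 \rceil}$. For each $v \in T_{k_i}$, determining $w_\sigma(v)$ requires inspecting the $O(r^2)$ edges of $M$ restricted to $\pi_i$ and looking up $O(r^2)$ entries of $w$; similarly each of the at most $O(n^{2\lceil r/3 \rceil})$ candidate edges in $E_\sigma$ needs an $O(r^2)$ consistency check and weight product computation. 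These together contribute $O(r^2 n^{2\lceil r/3 \rceil})$, the second term of the claimed bound.

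Second, I would account for the matrix-multiplication step. Both $W$ and $D$ are $N \times N$ matrices with $N = |V_\sigma| = O(n^{\lceil r/3 \rceil})$, so computing $U = DWDWD$ costs $O(N^\omega) = O(n^{\omega\lceil r/3 \rceil})$, giving the first (and in dense regimes dominant) term. The subsequent loops that set $w_{\Delta,\sigma}((u,v)) = W_{u,v} U_{u,v}$ over all edges and that sum triangle-weights into $w_{\Delta,\sigma}(v)$ each cost $O(|E_\sigma|) = O(n^{2\lceil r/3 \rceil})$, which is absorbed into the second term.

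Finally, I would bound the two post-processing loops that write contributions into $w'_M(e)$. For each $v \in V_\sigma$ there are at most $\binom{k_i}{2} = O(r^2)$ edges in $E(f_v)$, yielding $O(r^2 n^{\lceil r/3 \rceil})$; for each edge of $G_\sigma$ there are at most $O(r^2)$ edges in $E(u,v)$, yielding $O(r^2 n^{2\lceil r/3 \rceil})$. Both are again dominated by the second term. Summing all contributions and dividing by $A$ (a single $O(|E|)$ scan) gives the total runtime $O(n^{\omega\lceil r/3 \rceil} + r^2 n^{2\lceil r/3 \rceil} + r^* r^r)$. The only mild subtlety to watch out for is ensuring that the enumeration of $E(f_v)$ and $E(u,v)$ in the construction of $G_\sigma$ is implemented so the $O(r^2)$ factor per vertex/edge is honest (e.g.\ by precomputing adjacency in $M$ as an $r \times r$ table), since otherwise a naive implementation could smuggle an extra $r$ factor. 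Everything else is a straightforward sum of per-step costs.
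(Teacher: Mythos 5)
Your proof is correct and follows essentially the same decomposition as the paper's: bound the automorphism enumeration by $O(r^* r^r)$, the construction of $G_\sigma$ and all per-vertex/per-edge scans by $O(r^2 n^{2\lceil r/3\rceil})$, and the computation of $U = DWDWD$ by $O(n^{\omega\lceil r/3\rceil})$ via fast matrix multiplication, then sum. Your added remark about precomputing the adjacency table of $M$ to keep the per-edge consistency check honest at $O(r^2)$ is a nice implementation detail that the paper leaves implicit, but the argument is the same.
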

\begin{proof}
The number of automorphisms can be computed in time $O((r^* + r) r^r)$, by checking all $r!$ permutations of vertices of $M$.

The graph $G_\sigma$ has $O(n^{\lceil r / 3 \rceil})$ vertices and $O(n^{2 \lceil r / 3 \rceil})$ edges, and can be constructed in time 
$O((r + r^*)n^{2 \lceil r / 3 \rceil})$. Using fast matrix multiplication
(\cite{alman2021refined} is state-of-the art at the time of writing), $U$ can be computed in time $O(n^{\omega \lceil r / 3 \rceil})$.

The rest of the algorithm can be computed in time $O(r^2 (|V_\sigma| + |E_\sigma|)) = O(r^2 n^{2 \lceil r / 3 \rceil})$, which gives the final runtime
\[
    O(n^{\omega \lceil r / 3 \rceil} + r^2 n^{2 \lceil r / 3 \rceil} + r^* r^r). \qedhere
\]
\end{proof}
\subsection{Fast Partial Sparsification}

In this subsection we will present the main part of the sublinear algorithm, \textsc{FastPartialSparsification}, which is a counterpart to \Cref{alg:par_spars}. It differs in the way it computes the edge importances. It uses \textsc{MotifWeights} algorithm, as well as an almost quadratic time all-pairs max-flow algorithm \cite{abboud2021gomory} \cite{abboud2021subcubic} to compute the motif weighted graph and motif edge connectivities. Then, using this information, the algorithm produces edge importance estimates and sample edges according to them.

We denote the ratio between the highest and the lowest weight by $W$.

The all-pairs max-flow problem is equivalent to the problem of computing edge connectivities between any two pairs of vertices. As was mentioned, we will utilize a result on its computation:
\begin{theorem}[Theorem 1.3 of \cite{abboud2021gomory}] \label{thm:all_pairs_conn}
For an undirected weighted graph $G=(V, E, w)$, $n = |V|$, there is a randomized Monte Carlo algorithm \textsc{Connectivities} for computing edge connectivities between all pairs of vertices that runs in time $\tO(n^2)$.
\end{theorem}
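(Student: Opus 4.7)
My plan is to reduce the all-pairs edge connectivity problem to constructing a \emph{Gomory-Hu tree} of $G$ in $\tO(n^2)$ time, and then answer any pairwise query by a path-minimum computation on the tree. Recall that a Gomory-Hu tree of $G$ is a weighted tree $T$ on the same vertex set $V$ such that for every $u,v\in V$, the edge connectivity between $u$ and $v$ in $G$ equals the minimum weight on the unique $u$-$v$ path in $T$. Since a tree on $n$ vertices supports $O(n^2)$ path-minimum queries in total time $O(n^2)$ (or even $\tO(1)$ per query after $O(n\log n)$ preprocessing), it suffices to build $T$ within the target budget.

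The first step is to argue that any Gomory-Hu tree of $G$ encodes all pairwise edge connectivities. This is classical: one verifies that for any minimum $u$-$v$ cut, there is a bottleneck edge on the $u$-$v$ path in $T$ whose weight matches the cut value, and that contracting non-bottleneck edges does not lower the cut. Hence the correctness of the output reduces entirely to the correctness of the tree construction.

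The main task is therefore a near-quadratic construction of $T$. The naive Gomory-Hu algorithm performs $n-1$ max-flow calls, which, even with state-of-the-art $\tO(m)$ max-flow, gives $\tO(mn)$ and can be $\omega(n^2)$ on dense graphs. To reach $\tO(n^2)$, I would follow the approach of Abboud et al.: use an \emph{isolating cuts lemma} (which, given a set of ``terminals,'' computes in essentially one max-flow a minimum cut separating each terminal from the others) together with a recursive / divide-and-conquer scheme on the tree being built, so that the total work across recursion levels telescopes. Randomization is used to pick pivots so that, with high probability, the recursion balances, yielding $\tO(n^2)$ total running time.

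The hard part, and the main obstacle, is controlling the running time of this recursion and proving that the isolating-cut primitive can be implemented within the budget on weighted graphs: one needs both an efficient max-flow subroutine and a careful amortization argument showing that the sum over all recursive calls of vertex set sizes is $\tO(n^2)$. Since we are invoking this as a black box from \cite{abboud2021gomory}, my proof is simply to instantiate their construction on the input graph $G$ and then answer each query by looking up the bottleneck edge on the relevant tree path; the Monte Carlo guarantee is inherited directly from their algorithm.
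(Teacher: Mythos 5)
This statement is a black-box citation of an external result (Theorem~1.3 of \cite{abboud2021gomory}); the paper does not prove it, and you correctly recognize at the end that you are simply invoking that construction as given. Your sketch of the Gomory--Hu tree plus isolating-cuts recursion is a fair high-level description of what the cited paper does, but since neither you nor the paper supplies an actual proof, there is nothing substantive to compare beyond agreeing that the result is imported as stated.
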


Recall that we are trying to approximate
$$\wwse_\M(e) = \sum_{I \in \MG GM: e \in E(I)} \wwse(I)=\sum_{I \in \MG GM: e \in E(I)}  w(I) \frac{r^*}{\min_{e \in E(I)} k_{M, e}}$$
for each edge $e\in E$. Armed with \textsc{MotifWeights} (\cref{alg:count_motifs}) and \textsc{Connectivities} (\cref{thm:all_pairs_conn}) we are able to calculate the values of $k_{M,e}$. However, calculating the above formula naively would still require us to sum over all motif instances, which is prohibitively slow.

Instead we split the graph in to levels based on the motif-connectivities of its edges as follows: Let $k_{min} = \min_{e \in E} k_e$. For $j \in \mathbb{N} \cup \{0\}$, let $G_j=(V, E_j, w)$, where $E_j = \{ e \in E: k_e \geq 2^j k_{min} \}$. Let $w_{M, j}(e)$ be the motif weight of edge $e \in G_j$. For $ I \in \MG GM$ denote $\rho_I = \min_{e \in E(I)} k_{M, e} $.
Notice that for $j \in \mathbb{N} \cup \{0 \}$, 
\[
\{ I \in \MG GM: 2^{j + 1} k_{min} > \rho_I \geq 2^{j} k_{min} \} = \MG{G_{j}}{M} \setminus \MG{G_{j + 1}}{M}.
\]

Instead of directly computing $\wwse_M$, we will use its approximation function $\wwsee_M: E \to \mathbb{R}$, where 
\[
    \wwsee_M(e) = \sum_{j=0}^\infty \quad \sum_{I \in \MG{G_{j}}{M} \setminus \MG{G_{j + 1}}{M}: e \in E(I)} w(I) \frac{r^*}{2^j}.
\]

As we'll show below, this quanity $\wwsee_M(e)$ is faster to calculate, yet approximates $\wwse$ sufficiently well that we can use it in the construction of our sparsifier.

\begin{lemma} \label{lem:wwse_approx}
Let $M = (V_M, E_M)$ be a motif and $G=(V,E,w)$ be a directed weighted graph. Then for $e \in E$,
\[
\wwse_M(e) \leq \wwsee_M(e) \leq 2 \wwse_M(e).
\]
\end{lemma}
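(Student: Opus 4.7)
The plan is to unfold the outer sum in $\wwsee_M(e)$ by observing that the sets $\MG{G_j}M\setminus\MG{G_{j+1}}M$ partition $\MG GM$ into dyadic layers indexed by motif connectivity. Concretely, since $G_j$ is the edge-subgraph on $E_j=\{e:k_{M,e}\geq 2^jk_{min}\}$, a motif instance $I$ lies in $\MG{G_j}M$ precisely when every edge of $I$ lies in $E_j$, which is equivalent to $\rho_I=\min_{e\in E(I)}k_{M,e}\geq 2^jk_{min}$. Consequently each $I\in\MG GM$ belongs to a unique layer $j(I)$ satisfying
\[
2^{j(I)}k_{min}\;\leq\;\rho_I\;<\;2^{j(I)+1}k_{min}.
\]

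Second, I would rewrite $\wwsee_M(e)$ as a single sum over the same index set used by $\wwse_M(e)$:
\[
\wwsee_M(e)\;=\;\sum_{\substack{I\in\MG GM\\ e\in E(I)}} w(I)\cdot\frac{r^*}{2^{j(I)}},\qquad \wwse_M(e)\;=\;\sum_{\substack{I\in\MG GM\\ e\in E(I)}} w(I)\cdot\frac{r^*}{\rho_I}.
\]
The lemma then reduces to a termwise comparison between the coefficients $r^*/2^{j(I)}$ and $r^*/\rho_I$ for each fixed $I\ni e$.

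For this termwise step, the dyadic bracket above immediately gives
\[
\frac{r^*}{\rho_I}\;\leq\;\frac{r^*}{2^{j(I)}k_{min}}\;<\;\frac{2r^*}{\rho_I},
\]
so the coefficient appearing in $\wwsee_M(e)$ sandwiches the coefficient appearing in $\wwse_M(e)$ within a factor of two (absorbing the $k_{min}$ normalization that relates the $2^j$ in the definition of $\wwsee_M$ to the threshold $2^jk_{min}$ in the definition of $E_j$). Summing these pointwise inequalities over all $I\ni e$ delivers both inequalities $\wwse_M(e)\leq \wwsee_M(e)\leq 2\wwse_M(e)$ simultaneously.

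There is no serious obstacle: the entire argument is one dyadic approximation $1/\rho_I\asymp 1/2^{j(I)}$ together with the observation that the layers $\{\MG{G_j}M\setminus\MG{G_{j+1}}M\}_{j\geq 0}$ partition $\MG GM$. No probabilistic, counting, or cut-structure facts about $G$ are needed. The only bookkeeping concern is making sure the normalization between the exponent $2^j$ in the definition of $\wwsee_M$ and the threshold $2^jk_{min}$ in the definition of $E_j$ is applied consistently; once this is done, both inequalities fall out of the same dyadic comparison.
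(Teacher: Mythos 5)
Your proposal takes the same route as the paper's proof: partition $\MG GM$ into dyadic layers $\MG{G_j}{M}\setminus\MG{G_{j+1}}{M}$, rewrite $\wwsee_M(e)$ as a single sum with coefficient $r^*/2^{j(I)}$, and compare termwise against $r^*/\rho_I$ using the bracket on $\rho_I$. The one thing you flag but do not resolve — the $k_{min}$ normalization — is a real inconsistency: as literally written, $E_j$ is thresholded at $2^j k_{min}$ while $\wwsee_M$ divides by $2^j$, so the layer bracket $2^{j(I)}k_{min}\le\rho_I<2^{j(I)+1}k_{min}$ yields $k_{min}\cdot\wwse_M(e)\le\wwsee_M(e)\le 2k_{min}\cdot\wwse_M(e)$ rather than the claimed factor-of-two sandwich. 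The paper's own proof silently drops $k_{min}$ (it states ``$2^j\le\rho_I<2^{j+1}$''), which is only valid if $k_{min}=1$ or if one reads the denominator in the definition of $\wwsee_M$ as $2^j k_{min}$. Your instinct to call this out is correct; to close it cleanly you should commit to one of these readings (the latter is the intended one, as it is what the algorithm effectively uses) rather than leaving it as a bookkeeping remark.
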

\begin{proof}
For $I \in \MG{G_{j}}{M} \setminus \MG{G_{j + 1}}{M} $, $2^{j} \leq \rho_I < 2^{j + 1} $. Therefore
\begin{align*}
    \wwse_M(e) & = \sum_{I \in \MG{G}{M}: e \in E(I)} w(I) \frac{r^*}{\rho_I} 
    = \sum_{j=0}^\infty \sum_{I \in \MG{G_{j}}{M} \setminus \MG{G_{j + 1}}{M}: e \in E(I)} w(I) \frac{r^*}{\rho_I} \\
    & \leq \sum_{j=0}^\infty \sum_{I \in \MG{G_{j}}{M} \setminus \MG{G_{j + 1}}{M}: e \in E(I)} w(I) \frac{r^*}{2^j} = \wwsee_M(e).
\end{align*}
The upper bound can be shown similarly.
\end{proof}

On the other hand, $\wwsee$ can be easily computed using \Cref{alg:count_motifs}.

\begin{lemma}
Let $M = (V_M, E_M)$ be a motif and $G=(V,E,w)$ be a directed weighted graph. Let $\Lambda_M = \lceil \log \max_{e \in E} k_{M, e} / k_{min} \rceil$. Then for $e \in E$,
\[
    \wwsee_M(e) = \sum_{j=0}^{\Lambda_M} (w_{M, j}(e) - w_{M, j + 1}(e)) \frac{r^*}{2^j}.
\]
\end{lemma}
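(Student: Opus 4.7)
The statement is essentially a book-keeping identity, so the proof should just amount to carefully manipulating the definitions. Here is the plan.

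First, I would observe the basic containment: since $E_{j+1} \subseteq E_j$ by definition (a larger threshold $2^{j+1} k_{min}$ is harder to satisfy than $2^j k_{min}$), the instance sets are nested:
\[
\MG{G_{j+1}}{M} \subseteq \MG{G_j}{M}.
\]
Both sides are well-defined subsets of $\MG{G}{M}$. Consequently, for any edge $e$, the set $\{I \in \MG{G_j}{M}\setminus\MG{G_{j+1}}{M} : e\in E(I)\}$ is exactly the set-theoretic difference of $\{I\in\MG{G_j}{M}:e\in E(I)\}$ and $\{I\in\MG{G_{j+1}}{M}:e\in E(I)\}$, so summing $w(I)$ over it yields
\[
\sum_{\substack{I\in \MG{G_j}{M}\setminus\MG{G_{j+1}}{M}\\ e\in E(I)}} w(I) \;=\; w_{M,j}(e) - w_{M,j+1}(e).
\]

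Second, I would check that the infinite sum defining $\wwsee_M(e)$ is actually supported on $j\in\{0,\dots,\Lambda_M\}$. For any $j > \Lambda_M$, we have $2^j k_{min} > \max_{e\in E} k_{M,e}$, hence $E_j = \emptyset$, hence $\MG{G_j}{M} = \emptyset$, and in particular $w_{M,j}(e) = 0$. Thus the term for $j > \Lambda_M$ contributes zero to both the definition of $\wwsee_M(e)$ and to the claimed finite sum. In particular $w_{M,\Lambda_M+1}(e)=0$, so truncating the telescoping-style sum at $j=\Lambda_M$ does not drop any mass.

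Combining the two observations, starting from the definition of $\wwsee_M(e)$ and substituting the identity from step one:
\[
\wwsee_M(e) = \sum_{j=0}^{\infty} \bigl(w_{M,j}(e)-w_{M,j+1}(e)\bigr)\frac{r^*}{2^j} = \sum_{j=0}^{\Lambda_M}\bigl(w_{M,j}(e)-w_{M,j+1}(e)\bigr)\frac{r^*}{2^j},
\]
which is precisely the desired formula. There is no real obstacle here: the only thing one must be careful about is that the factor $r^*/2^j$ depends on $j$, so this is \emph{not} a pure telescoping sum that collapses to a single term — it genuinely produces $\Lambda_M+1$ summands, and the ``cancellation'' is only between the two motif-weight terms inside a single level $j$.
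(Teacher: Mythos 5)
Your proof is correct and takes the same approach as the paper's (which is even more terse, merely invoking the identity $w_{M,j}(e)=\sum_{I\in\mathcal{M}(G_j,M):e\in E(I)}w(I)$ and stopping there). Your elaboration—spelling out the nesting $\MG{G_{j+1}}{M}\subseteq\MG{G_j}{M}$, the per-level difference identity, and the vanishing of $w_{M,j}(e)$ for $j>\Lambda_M$—is exactly the book-keeping the paper's one-line proof implicitly relies on.
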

\begin{proof}
The lemma follows from the fact that
\[
    w_{M, j}(e) = \sum_{I \in \MG{G_{j}}{M}: e \in E(I)} w(I). \qedhere
\]
\end{proof}

We are ready to present the fast partial sparsification algorithm.

\begin{algorithm}[H]
\caption{Fast Partial Sparsification}
\label{alg:fast_par_spars}
\begin{algorithmic}[1]
\Procedure{\textsc{FastPartialSparsification}}{$V, E, w, \epsilon', \{ M_i \}_{i=1}^{L} $}
\State Rescale $w$ so that $\min_{e \in E} w(e) = 1$.
\State $E_+ \gets \emptyset$
\For{$i = 1 \to L$} \label{line:fast_spar_loop}
\State $ w_{M_i} \gets \Call{MotifWeights}{G=(V, E, w), M_i} $
\State $ \{ k_e\}_{e \in E} \gets \Call{Connectivities}{G_{M_i}=(V, E, w_{M_i})} $
\State  $k_{min} \gets \min_{e \in E} k_e$
\State  $E_j \gets \{ e \in E: k_e \geq 2^j k_{min} \}$
\For{$j = 0 \to \Lambda_{M_i}$}
\State $w_{M_i, j} \gets \Call{MotifWeights}{G_j=(V, E_j, w), M_i}$
\EndFor
\State Let
\[
    \Upsilon' \gets \frac{{\eps'}^2 }{256 (d_1 + r_i + 2r_i^*) (r_i^*)^2 r_i \log n \ln n}
\]
\State \label{line:add_crit_weak} $E_+ \gets E_+ \cup \{ e \in E: \wwsee_{\M_i}(e) \geq \Upsilon' \}$ 
\EndFor
\State $E_- \gets \emptyset$
\State $w'\gets w$
\For{$e\in E \setminus E_+$}
\If{a probability $p=2^{-1/{(2r^*_{max})}}$ Bernoulli variable is equal to $1$} 
\State $w'(e)\gets w(e)/p$
\Else
\State $w'(e)\gets 0$
\State $E_-\gets E_-\cup \{e\}$
\EndIf
\EndFor
\State $E \gets E \setminus E_-$
\State Rescale $w'$ with respect to original weights.
\State \Return $(E, w')$
\EndProcedure
\end{algorithmic}
\end{algorithm}

\subsection{Correctness of \textsc{FastPartialSparsification}}

The goal of this subsection is to show that the output of \textsc{FastPartialSparsification} is indeed a motif cut sparsifier of the original graph. The analysis closely follows that of \cite{fung2019general}, while accommodating for the fact that in our application we are dealing with a different sampling scheme. This proof can be adapted to show the possibility of cut sparsification in hypergraphs using connectivities, albeit, using our tools, the guarantees on the sparsifier size in this case is most likely not tight.

Notice that due to the way we are scaling the weights, it holds that $\min_{I \in \MG GM} k_I \geq 1$.

\begin{definition}
    Let $M = (V_M, E_M)$ be a motif and $G=(V,E,w)$ be a directed weighted graph. An instance $I \in \MG GM$ is called $k$-heavy if its connectivity $k_I$ is at least $k$. Otherwise, it is $k$-light.
\end{definition}

\begin{definition}
    Let $M = (V_M, E_M)$ be a motif and $G=(V,E,w)$ be a directed weighted graph. For a cut $(S, V \setminus S)$, its (motif) $k$-projection is the set of $k$-heavy motif instances crossing this cut. 
\end{definition}

While the concept of $k$-projection was originally conceived for regular cut sizes \cite{fung2019general}, our definition is more general as they are equivalent when the motif is just one edge. Hence we will refer to them as edge $k$-projections.

\begin{theorem}[Theorem 2.3 of \cite{fung2019general}] \label{thm:graph_proj_num}
Let $G=(V,E,w)$ be a weighted graph. Let $\lambda$ be the minimum size of a cut in $G$, $k \geq \lambda$ and $\alpha > 0$. Then the number of distinct edge $k$-projections in cuts of size at most $\alpha k$ is at most $n^{2 \alpha}$.
\end{theorem}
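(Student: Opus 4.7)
My plan is to adapt Karger's random edge contraction argument for counting near-minimum cuts. The goal is to define a random experiment such that for every $k$-projection $P$ arising from a cut of size at most $\alpha k$, the experiment produces some fixed ``representative'' outcome with probability at least $n^{-2\alpha}$; since events corresponding to distinct projections are mutually disjoint, this bounds the number of distinct projections by $n^{2\alpha}$.

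Concretely, I would run Karger's algorithm on $G$: repeatedly contract a uniformly random edge (sampled proportional to weight) until two super-vertices remain, producing a final cut $C^*$. For each projection $P$, fix a representative cut $C_P$ of size at most $\alpha k$ with $k$-projection $P$. The aim is then to lower-bound $\Pr[C^* = C_P]$ by $n^{-2\alpha}$. The standard Karger analysis bounds the probability of contracting some edge of $C_P$ at the step with $i$ super-vertices by $|C_P|/m_i \leq (\alpha k)/m_i$, where $m_i$ is the current edge count. If one can maintain $m_i \geq ik/2$ throughout, the survival probability telescopes to at least $\prod_{i=3}^{n}(1 - 2\alpha/i) = \binom{n}{2\alpha}^{-1} \geq n^{-2\alpha}$.

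The main obstacle is justifying the density bound $m_i \geq ik/2$ during contraction. Classically this comes from the assumption that the min-cut is at least $k$, but our setting allows $k > \lambda$, so globally the bound can fail. The relevant structural fact is that each edge of $P$ is $k$-heavy and hence lies in a $k$-strongly connected subgraph, where the degree argument locally yields $m_i \geq ik/2$. Since $P$ may span multiple such subgraphs, I cannot simply restrict to one. The cleanest fix I expect is a tree-packing argument in the spirit of Nash--Williams and Thorup: build a fractional packing of edge-disjoint spanning trees in which each $k$-heavy edge appears with total tree-participation at most $O(1/k)$; then for a random tree sampled from the packing, the edges of $P$ contribute at most $2\alpha$ tree edges in expectation. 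A ``$2$-respecting tree'' style contraction argument along this random tree recovers $C_P$ with the required $n^{-2\alpha}$ probability, yielding the desired count.
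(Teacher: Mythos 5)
The paper does not prove this statement; it cites it as a black-box result from \cite{fung2019general} and uses it directly in the proof of \cref{lem:proj_num}. So there is no ``paper's proof'' to compare against. Your attempt is therefore a reconstruction of Fung, Hariharan, Harvey and Panigrahi's Theorem 2.3, and I'll evaluate it as such.

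You correctly identify the central obstacle: Karger's density bound $m_i \geq ik/2$ can fail once the global min cut $\lambda$ drops below $k$, so the bare contraction argument breaks. However, the tree-packing fix you sketch has two genuine gaps.

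First, you assert, but do not establish, the existence of a fractional spanning-tree packing in which every $k$-heavy edge has participation $O(1/k)$. This is in fact provable: if you contract every $k$-light edge of $G$, the quotient graph has min cut at least $k$ (any cut of the quotient lifts to a cut of $G$ crossed only by heavy edges, and any heavy edge $(u,v)$ crossing a cut of size $c$ certifies $\mathrm{conn}(u,v)\le c$, so $c\ge k$). Applying Nash--Williams/Tutte to the quotient and then re-inserting light tree edges to span each light-connected component gives the desired packing. But none of this appears in your sketch, and it is not a triviality.

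Second, and more seriously, the final step (``a $2$-respecting-tree style contraction argument \ldots recovers $C_P$ with the required $n^{-2\alpha}$ probability'') does not go through as stated. Two issues arise. (a) The object you must count is the \emph{projection} $P$, not the cut $C_P$: when $\lambda < k$ the number of cuts of size $\le \alpha k$ can be exponential (consider a perfect matching with $k=1$, $\alpha = 1$: the number of size-$\le 1$ cuts is $2^{\Theta(n)}$ while the number of projections is $O(n)$), so any argument phrased as ``the random process outputs $C_P$ with probability $\ge n^{-2\alpha}$'' is doomed; the success event must depend on $P$ only. (b) Even granting the packing, knowing that a random tree contains at most $2\alpha$ heavy edges of $C_P$ in expectation does not determine $P$: the cut $C_P$ can contain many light tree edges (which your bound does not control), and $P$ contains many heavy non-tree edges; the standard ``remove the $\le 2\alpha$ chosen tree edges, then $2$-color the resulting forest components'' step recovers a full cut only if you also know which light tree edges are in the cut. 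You would need an additional argument that the projection is a function of the $\le 2\alpha$ heavy tree edges alone (for instance by first contracting all light edges of the tree and reasoning on the quotient tree), and even then the heavy edges with both endpoints inside a single light-connected component require separate handling. As it stands, the sketch conflates counting cuts with counting projections and leaves the reconstruction step open, so it does not yet constitute a proof.
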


We now use Theorem \ref{thm:graph_proj_num} to prove the following lemma which extends the
edge case to arbitrary motifs.
The reader might notice that while the bound provided by \Cref{thm:graph_proj_num} matches that of cut-counting theorem of Karger \cite{karger1999random}, \Cref{lem:proj_num} does not match \Cref{lem:cut_counting_hyper}. While it would be desirable to match this bound if one were to construct a connectivity-based hypergraph cut sparsifier, this statement is sufficient for our application.

\begin{lemma} \label{lem:proj_num}
Let $M = (V_M, E_M)$ be a motif and $G=(V,E,w)$ be a directed weighted graph. Let $\lambda$ be the minimum motif size of a cut in $G$, $k \geq \lambda$ and $\alpha > 0$. Then the number of distinct $k$-projections in cuts of motif size at most $\alpha k$ is at most $n^{2 \alpha r^*}$.
\end{lemma}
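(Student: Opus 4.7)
The plan is to reduce this motif cut-counting statement to its analogue for ordinary (edge) cuts in the motif weighted graph $G_M$, for which the desired bound is provided by Theorem \ref{thm:graph_proj_num}. The central observation that enables the reduction is that a $k$-heavy motif instance is composed entirely of edges that are $k$-heavy in $G_M$: indeed, from Lemma \ref{lem:motif_conn_approx} we have $\min_{e\in E(I)} k_{M,e}\ge k_I$ for every $I\in\MG GM$, so if $k_I\ge k$ then every edge $e\in E(I)$ satisfies $k_{M,e}\ge k$.

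Given this, I would proceed as follows. Since a motif instance crosses a cut $(S,V\setminus S)$ iff at least one of its edges does, the set of $k$-heavy motif instances crossing $(S,V\setminus S)$ is entirely determined by the set of $k$-heavy edges of $G_M$ crossing $(S,V\setminus S)$, namely by the edge $k$-projection of $(S,V\setminus S)$ in $G_M$: a $k$-heavy motif $I$ lies in the motif $k$-projection precisely when $E(I)$ (which by the observation above consists of $k$-heavy edges) meets the edge $k$-projection. Hence the map that sends a cut to its edge $k$-projection in $G_M$ factors through the motif $k$-projection map, so
\[
\#\{\text{motif $k$-projections}\}\;\le\;\#\{\text{edge $k$-projections in $G_M$}\}.
\]

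Finally, I would translate the cut-size threshold. By Lemma \ref{lem:motif_conn_approx}, any cut of motif size $\le\alpha k$ in $G$ has size $\le r^*\alpha k$ in $G_M$, and likewise the minimum cut of $G_M$ is at least $\lambda$, so $k\ge\lambda$ means we are safely in the regime where Theorem \ref{thm:graph_proj_num} applies to $G_M$ with parameters $k$ and $\alpha'=r^*\alpha$. That theorem gives at most $n^{2r^*\alpha}$ distinct edge $k$-projections of cuts of size $\le r^*\alpha k$ in $G_M$, which combined with the surjection above yields the claimed bound $n^{2\alpha r^*}$.

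The argument is not hard once one has the right observation about $k$-heavy motifs being built from $k$-heavy edges of $G_M$; the only mild care needed is in the edge case where $k$ falls below the minimum cut of $G_M$ (so that no cut of interest exists, and the bound holds vacuously) versus where $k$ exceeds it (in which case Theorem \ref{thm:graph_proj_num} is directly applicable). A second, more conceptual point worth flagging is that this route loses a factor of $r^*$ in the exponent compared with the pure hypergraph cut-counting bound of Lemma \ref{lem:cut_counting_hyper} (which used motif strengths rather than motif connectivities); this is the price we pay for avoiding enumeration, but is acceptable since $r^*$ is a constant in all downstream applications.
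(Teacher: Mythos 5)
Your proof is correct and follows essentially the same route as the paper: both arguments reduce the motif projection count to the edge projection count in the motif weighted graph $G_M$ via \cref{thm:graph_proj_num}, using the observation from \cref{lem:motif_conn_approx} that a $k$-heavy motif is composed entirely of $k$-heavy edges of $G_M$, so that the motif $k$-projection is a function of the edge $k$-projection (the paper establishes the same fact by defining the map $f$ and proving it injective; your ``factors through'' phrasing is a slightly cleaner way to say the same thing). One small inaccuracy in your closing remark: when $\lambda \le k < \lambda_{G_M}$ the bound is not vacuous --- cuts of motif size at most $\alpha k$ can still exist --- but in that regime every edge of $G_M$ is $k$-heavy, so applying \cref{thm:graph_proj_num} with threshold $\lambda_{G_M}$ and $\alpha' = \alpha r^* k / \lambda_{G_M} \le \alpha r^*$ still yields the claimed bound; the paper's proof is equally implicit on this point.
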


\begin{proof}
We are going to show that the number of motif $k$-projections in cuts of motif size at most $\alpha k$ is at most the number of edge $k$-projections in cuts of size at most $\alpha r^* k$ in the motif weighted graph $G_M$, which is $n^{2\alpha r^*}$.

Indeed, consider a motif $k$-projection $P$ of some cut $(S, V \setminus S)$ of motif size $\alpha k$. Let $f$ be the following mapping: $f(P) = \bigcup_{I \in P} \{ e \in E(I): (S, V \setminus S) \text{ cuts } e\}$, and let $F = \bigcup_{I \in P}  E(I)$. 
Then, by \Cref{lem:motif_conn_approx}, the size of the cut $(S, V \setminus S)$ in $G_M$ is at most $\alpha r^* k$, and the sets $f(P)$, $F$ contain only $k$-heavy edges.
Let $\mathcal{P}$ be the set of edge $k$-projections in cuts of size at most $\alpha r^* k$. Let $\mathcal{P}' = \{ P \cap F: P \in \mathcal{P}\}$. Trivially, $|\mathcal{P}'| \leq |\mathcal{P}|$.
Now, notice that $f(P) \in \mathcal{P}'$, since each edge in $f(P)$ must be $k$-heavy, and the cut $(S, V \setminus S)$ induces an edge $k$-projection.

On the other hand, suppose that there are two motif $k$-projections $P_1$ and $P_2$ such that $f(P_1) = f(P_2)$. Let $(S, V \setminus S)$ be the cut inducing $P_1$. For any $I \in P_2$, there must be an edge $e \in E(I) \cap f(P_2)$. But then $(S, V \setminus S)$ cuts $e$ and, therefore, $I$. Therefore, $I \in P_1$, and $P_2 \subseteq P_1$. By exchanging $P_1$ and $P_2$ we also get that $P_1 \subseteq P_2$ and $P_2 = P_1$. Therefore,
$f$ is injective. 
But since $f$ maps each motif $k$-projection to $\mathcal{P}'$, their number is at most $|\mathcal{P}'| \leq |\mathcal{P}|$.
\end{proof}

Equipped with this result, we can show that it is possible to do partial sparsification if we only sample edges $e$ with high value of $\wws_M(e)$ with probability $1$. The idea is to divide each cut into parts containing edges with approximately the same connectivity and show the concentration of each part, which motivates the following definition.

\begin{definition} \label{def:overlap_graphs}
    Let $M = (V_M, E_M)$ be a motif and $G=(V,E,w)$ be a directed weighted graph. Let $\Lambda = \lceil \log \max_{I \in \MG GM} k_I \rceil$ and Let $F_i = \{I \in \MG GM: 2^i \leq k_I < 2^{i + 1} \}$ for $i \in [\Lambda]$.
    We define $H_i=(V, J_i, w_i)$ as follows: $J_i = \{ I \in \MG GM: w(I) \geq 2^{i - 1} / n^r \}$, $w_i(I) = \min (2^{i + 1}, w(I))$. Notice that $F_i \subseteq J_i$ and $\forall I \in F_i{:}\ w(I) = w_i(I)$. Let $\pi_i$
    be the minimum of the connectivities of the motifs in $F_i$ in the graph $H_i$.
\end{definition}
Notice that if the graph is unweighted, all of $H_i$ are just equal to the motif hypergraph $H$. We could have defined them to all be equal in all cases; then however, the second bound in the following lemma would have depended on logarithm of the ratio between the maximum and minimum weights in the graph.

\begin{lemma} \label{lem:conn_overlap}
Let $M = (V_M, E_M)$ be a motif and $G=(V,E,w)$ be a directed weighted graph. Let $\gamma = 2 r \log n$.
\begin{itemize}
    \item For all $i \in [\Lambda]$, $\pi_i \geq 2^{i - 1}$,
    \item For all cuts $(S, V \setminus S)$, \[
    \sum_{i = 0}^{\Lambda} \Val_{H_i}(S, V \setminus S) \leq \gamma \Val_{G, M}(S, V \setminus S).
    \]
\end{itemize}
\end{lemma}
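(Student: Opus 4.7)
The lemma has two claims; I would prove them independently, and both reduce to elementary accounting once the right observation is made.

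\textbf{Proof of the first bullet} ($\pi_i \geq 2^{i-1}$). Fix $i$ and pick any $I \in F_i$ together with any cut $(S, V\setminus S)$ that $I$ crosses. Since $I\in F_i$, by \cref{def:weak_motif_strength} every such cut satisfies $\Val_{G,M}(S,V\setminus S)\geq k_I \geq 2^i$. Split the instances in $\MG{G}{M}$ crossing the cut into a \emph{heavy} part $\{I': w(I')\geq 2^{i-1}/n^r\}$ and a \emph{light} part. Since $|\MG{G}{M}| \leq n^r$, the total $w$-contribution of the light part is at most $n^r\cdot 2^{i-1}/n^r = 2^{i-1}$, so the heavy part contributes at least $2^i - 2^{i-1} = 2^{i-1}$ to $\Val_{G,M}(S,V\setminus S)$. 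Now move to $H_i$. All heavy crossing instances lie in $J_i$, and for each such $I'$ we have $w_i(I') = \min(2^{i+1}, w(I'))$. If every heavy crossing instance has $w(I')\leq 2^{i+1}$, then $w_i(I')=w(I')$ and summing gives $\Val_{H_i}(S,V\setminus S)\geq 2^{i-1}$. Otherwise there is a single heavy crossing instance with $w_i(I') = 2^{i+1}\geq 2^{i-1}$, which already suffices. Taking the minimum over cuts crossing $I$ and then over $I\in F_i$ yields $\pi_i\geq 2^{i-1}$.

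\textbf{Proof of the second bullet.} I would swap the order of summation: for any cut $(S,V\setminus S)$,
\[
\sum_{i=0}^{\Lambda} \Val_{H_i}(S,V\setminus S) \;=\; \sum_{i=0}^{\Lambda}\;\sum_{I'\in J_i: I'\text{ crosses}} w_i(I') \;=\; \sum_{I'\text{ crosses}}\;\sum_{i: I'\in J_i} \min\bigl(2^{i+1}, w(I')\bigr).
\]
It therefore suffices to show that for every crossing instance $I'$, the inner sum is at most $\gamma \cdot w(I') = 2r\log n \cdot w(I')$. The condition $I'\in J_i$ is equivalent to $i\leq \log w(I') + r\log n + 1$, so $i$ ranges over an interval of length at most $\log w(I') + r\log n + 2$. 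Split this range at $j^\star := \lfloor\log w(I')\rfloor$: for $i < j^\star$ the term is $2^{i+1}$ and their sum is a geometric series bounded by $2w(I')$; for $j^\star \leq i \leq \log w(I')+r\log n+1$ the term is at most $w(I')$ and there are at most $r\log n + 3$ such indices, contributing at most $(r\log n + 3) w(I')$. Summing gives at most $(r\log n + 5) w(I')\leq 2r\log n\cdot w(I')$ for $n$ large enough (and the small-$n$ case can be absorbed into the $\tO$ notation or handled by a slightly tighter geometric series).

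\textbf{Main subtleties.} Neither claim is deep, but one has to be careful about two things. First, in claim (1) the capping $w_i(I')=\min(2^{i+1},w(I'))$ in the definition of $H_i$ could \emph{a priori} wipe out most of the weight; the observation that the cap itself ($2^{i+1}$) is already larger than the target threshold $2^{i-1}$ is what rescues us. Second, in claim (2) the $1/n^r$ lower bound on the weights included in $J_i$ is precisely what prevents the number of relevant indices $i$ from being unbounded in the low-weight regime; without it one would incur a $\log W$ factor instead of $r\log n$. These observations are exactly what the construction of $H_i$ in \cref{def:overlap_graphs} is designed to guarantee, so the proof falls out cleanly once they are identified.
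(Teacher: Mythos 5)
Your proof is correct and follows essentially the same approach as the paper's. For the first bullet, both arguments fix $I\in F_i$ and a cut crossing it, use $\Val_{G,M}\ge k_I\ge 2^i$, observe that the instances excluded from $J_i$ contribute at most $n^r\cdot 2^{i-1}/n^r=2^{i-1}$, and dispose of the capping via the same case split (your "all heavy crossing instances have $w\le 2^{i+1}$" vs.\ "some exceeds $2^{i+1}$" is the complementary form of the paper's split); for the second bullet, both swap the order of summation and bound $\sum_i w_i(I)$ per instance by cutting the index range near $\log w(I)$ into a geometric tail worth $O(w(I))$ and an $O(r\log n)$-length plateau worth $w(I)$ per term, then absorb constants into $\gamma=2r\log n$ for large $n$.
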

\begin{proof}
We first show the first point. Consider any $I \in F_i$ and consider any cut $(S, V \setminus S)$ cutting $I$. If $(S, V \setminus S)$ cuts any other $I' \in \MG GM$ such that $w(I') \geq 2^{i + 1}$, then $\Val_{H_i}(S, V \setminus S) \geq 2^{i + 1}$. Otherwise, because $|\MG GM| \leq n^r$, there is at most $n^r$ hyperedges that were crossing the cut in $H_M$ that are not present in $J_i$. Since the sum of their weight is at most $2^{i - 1}$, $\Val_{H_i}(S, V \setminus S) \geq 2^{ i- 1}$ because $\Val_{G, M}(S, V \setminus S) \geq 2^i$ since $k_I \geq 2^i$. Therefore, each cut cutting $I$ has size at least $2^{i - 1}$, and $\pi_i \geq 2^{i - 1}$. 

Now, to show the second point, fix a cut $(S, V \setminus S)$.  For simplicity, let $w_i(I) = 0$ if $I \not\in J_i$, and let $j_I = \min \{ j \in \mathbb{N}: 2^{j + 1} \geq w_i(I) \} $. Notice that for any $I \in \MG GM$,
\[
    \sum_{i = 0}^\Lambda w_i(I) \leq \sum_{i=0}^{j_I} 2^{i + 1} + \sum_{i=j_I + 1}^{\Lambda} w(I) \leq  2^{j_I + 2} + (\log(n^r) + 1) w(I)
\] because $I$ is not present in $J_i$ for $i > \lfloor \log(n^r) \rfloor + j_I + 1$.

Then, since $2^{j_I} \leq w(I)$,
\begin{align*}
    \sum_{i = 0}^{\Lambda} \Val_{H_i}(S, V \setminus S) &= \sum_{I \in \MG GM: I \text{ crosses } (S, V \setminus S)} \sum_{i = 0}^\Lambda w_i(I) \\
    &\leq \sum_{I \in \MG GM: I \text{ crosses } (S, V \setminus S)} \left( 2^{j_I + 2} + (\log(n^r) + 1) w(I) \right)
    \\
    &\leq \sum_{I \in \MG GM: I \text{ crosses } (S, V \setminus S)} 2 r \log n \cdot w(I)
    \leq 2 r \log n \cdot \Val_{G, M}(S, V \setminus S)
\end{align*}
for a large enough $n$.
\end{proof}

Recall that for $e \in E$,
\[
    \wws_M(e) = \sum_{I \in \MG GM: e \in E(I)} \frac{w(I)}{k_I}.
\]

The overall strategy is for each $F_i$, we show for all cuts simultaneously that the difference between the true contribution of edges in $F_i$ to the size of this cut versus the one observed in the sparsified graph is bounded by $\eps$ times the size of this cut in graph $H_i$. Because we have a bound on the sum of the sizes of cuts in $H_i$ from the previous lemma, we can obtain a guarantee in terms of the size of this cut in the original graph.

\begin{theorem} \label{thm:iterative_conn_sampling}
Let $M = (V_M, E_M)$ be a motif and $G=(V,E,w)$ be a directed weighted graph. Let
\[
    \Upsilon = \frac{\eps^2 }{64 (d_1 + r + 2r^*) r^* \gamma \ln n}.
\]
 Consider the following sampling scheme: all of the edges $e \in E$ such that $\wws_M(e) \geq \Upsilon$ are sampled with probability $1$, and all other edges $e$ are independently sampled with probability $p_e \geq 2^{-1/(2r^*)}$, with their weights multiplied by $1/p_e$ after successful sampling. Then with probability at least $1 - 4 n^{-d_1 + r}$ for a global constant $d_1$, the graph $G'$ obtained after the sampling is a $(M, \eps)$-motif cut sparsifier of $G$.
\end{theorem}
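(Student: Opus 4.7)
The overall strategy mirrors the analysis of \parspar{} in \cref{lem:cut_component_preservation}, but with motif connectivities replacing motif strengths and \cref{lem:proj_num} replacing \cref{cor:cut_counting} as the cut-counting primitive. Using \cref{def:overlap_graphs}, I would partition the motif instances into the levels $F_0, F_1, \ldots, F_\Lambda$ and analyze the contribution of each level to a cut separately. For any cut $(S, V\setminus S)$ write
\[
\Val_{M,G}(S,V\setminus S) = \sum_{i=0}^{\Lambda} \Val_{F_i,G}(S,V\setminus S), \qquad \Val_{F_i,G}(S,V\setminus S) := \sum_{\substack{I\in F_i\\ I \text{ crosses }(S,V\setminus S)}} w(I),
\]
and similarly for $G'$. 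It suffices to show, for every $i$ simultaneously with high probability, that $|\Val_{F_i,G'}-\Val_{F_i,G}| \le (\eps/\gamma)\,\Val_{H_i}(S,V\setminus S)$; summing and invoking \cref{lem:conn_overlap} then yields the claimed $(1\pm\eps)$ bound.

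Fix $i$ and a cut $(S, V\setminus S)$ of $H_i$ with $\Val_{H_i}(S,V\setminus S) = \alpha \pi_i$ for some $\alpha\ge 1$. For the concentration step I would form an edge-exposure martingale as in the proof of \cref{lem:layer_concentration}: enumerate the non-critical edges that participate in some $I\in F_i$ crossing the cut, expose them one by one, and track $\Val_{F_i,\cdot}(S,V\setminus S)$ on the partially sampled graph. Exposing an edge $e$ can change this quantity by at most $p^{-r^*}\sum_{I\in F_i:\, e\in E(I),\, I\text{ crosses}} w(I)$, because every motif weight in $G'$ is at most $p^{-r^*}$ times its weight in $G$. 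Since every non-critical edge satisfies $\wws_M(e) < \Upsilon$ and since every $I\in F_i$ has $k_I \ge 2^i$ (hence $w(I)/k_I \le \wws_M(e)$ whenever $e\in E(I)$), the per-step bound $c_e$ is at most $p^{-r^*}\cdot 2^{i+1}\cdot\wws_M(e) \le 4\,\Upsilon\cdot w_i$ for an appropriate rescaling, while $\sum_e c_e$ is bounded by $r^*\cdot 2^{i+1}\cdot\alpha = O(r^*\alpha\pi_i)$. Feeding these into \cref{lem:azuma} with deviation $\lambda = (\eps/\gamma)\alpha\pi_i$ produces a failure probability exponentially small in $\eps^2\alpha/(r^* \gamma \Upsilon) = \Omega(\alpha(d_1+r+2r^*)\ln n)$.

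To pass from a single cut to all cuts simultaneously I would exploit the fact that $\Val_{F_i,G'}(S,V\setminus S)$ depends only on which $2^i$-heavy motifs cross $(S,V\setminus S)$, i.e.\ on the motif $2^i$-projection of the cut. By \cref{lem:proj_num} there are at most $n^{2\alpha r^*}$ such projections among cuts of $H_i$ of size at most $\alpha\pi_i$. A standard integration-by-parts union bound, exactly as in \cref{lem:layer_concentration}, converts the per-projection tail bound into an overall failure probability at most $n^{-d_1+r}$ per level $i$; a further union bound over the $O(\log n)$ nontrivial levels and a constant factor loss give the claimed $4n^{-d_1+r}$.

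Finally, I would combine the per-level approximations: on the intersection of these good events, for every cut
\[
|\Val_{M,G'}(S,V\setminus S) - \Val_{M,G}(S,V\setminus S)| \;\le\; \sum_{i=0}^{\Lambda}\frac{\eps}{\gamma}\Val_{H_i}(S,V\setminus S) \;\le\; \eps\,\Val_{M,G}(S,V\setminus S),
\]
where the second inequality is exactly \cref{lem:conn_overlap}. The main technical obstacle is the second step: correctly bounding the Azuma increments $c_e$ in terms of the non-critical threshold $\Upsilon$ while simultaneously keeping the variance $\sum_e c_e^2$ small enough to absorb the $n^{2\alpha r^*}$ projection count from \cref{lem:proj_num}. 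The choice of $\Upsilon$ in the statement is precisely calibrated so that the exponent $\eps^2\alpha/(r^*\gamma\Upsilon)$ dominates $2\alpha r^*\ln n$ by the desired factor of $d_1+r$, ensuring the union bound closes.
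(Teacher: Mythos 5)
Your high-level plan is sound and follows the paper's structure (partition by connectivity level, Azuma per level, projection counting via \cref{lem:proj_num}, sum via \cref{lem:conn_overlap}). However, there is a genuine gap in the per-level concentration target you set for yourself, and it is precisely the place where the paper has to be more careful.

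You aim to prove $|\Val_{F_i,G'}(S,V\setminus S)-\Val_{F_i,G}(S,V\setminus S)| \le (\eps/\gamma)\,\Val_{H_i}(S,V\setminus S)$, i.e.\ $|f'_{i,S}-f_{i,S}|\le(\eps/\gamma)e_{i,S}$ in the paper's notation. Plugging $\lambda=(\eps/\gamma)\alpha\pi_i$ into Azuma with the correct bound on the increments (namely $\sum_t c_t^2 = O(r^*\Upsilon 2^{i}f_{i,S})$, since $\sum_t M^S(e_t)\le r^* f_{i,S}$, not $r^*\cdot 2^{i+1}\alpha$ as you write) gives an exponent of order
\[
\frac{(\eps/\gamma)^2\alpha^2\pi_i^2}{r^*\Upsilon 2^{i}f_{i,S}}.
\]
In the worst case $f_{i,S}$ can be as large as $e_{i,S}=\alpha\pi_i$, and then this is only $\Theta\!\bigl(\eps^2\alpha/(r^*\gamma^2\Upsilon)\bigr)$. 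After substituting the given $\Upsilon$, the $\gamma$ in its denominator cancels only one of the two $\gamma$ factors, leaving an exponent of order $\alpha(d_1+r+r^*)/r$ with \emph{no} $\ln n$ factor, which cannot absorb the $n^{2\alpha r^*}$ projections from \cref{lem:proj_num}. Your claim that the exponent equals $\eps^2\alpha/(r^*\gamma\Upsilon)=\Omega(\alpha(d_1+r+r^*)\ln n)$ is off by a factor of $\gamma=\Theta(r\log n)$ exactly when $f_{i,S}\approx e_{i,S}$.

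The paper avoids this by targeting the deviation $\frac{\eps}{2}\max\bigl(\frac{e_{i,S}2^{i-1}}{\gamma\pi_i},f_{i,S}\bigr)$ in \cref{lem:conn_bound_core} rather than $(\eps/\gamma)e_{i,S}$. The key point is that $\max(a,b)^2\ge ab$, so the $f_{i,S}$ appearing in the Azuma numerator cancels the $f_{i,S}$ in $\sum_t c_t^2$ exactly, yielding an exponent of $\Theta\!\bigl(\eps^2 2^j/(r^*\gamma\Upsilon)\bigr)=\Theta\bigl((d_1+r+2r^*)2^j\ln n\bigr)$ with a single $\gamma$ in the denominator, which does beat the projection count. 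The final step then uses $\max(a,b)\le a+b$ together with $\sum_i e_{i,S}\le\gamma\Val_{M,G}$ (second part of \cref{lem:conn_overlap}) and $\sum_i f_{i,S}\le\Val_{M,G}$ to get the $(1\pm\eps)$ guarantee; this is also why the coefficient in the per-level bound must be $\eps/2$ rather than $\eps$. Without the $\max(\cdot,f_{i,S})$ term your induction on levels does not close, so you need to strengthen the per-level claim to the two-sided form before the rest of the argument goes through.
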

\begin{proof}
Recall that we need to show that for all cuts $(S, V \setminus S)$,
\[
    |\Val_{G', M}(S, V \setminus S) - \Val_{G, M}(S, V \setminus S)| \leq \eps \Val_{G, M}(S, V \setminus S).
\]
Fix a cut $(S, V \setminus S)$. Let $F_{i, S} = \{I \in F_i: I \text{ crosses } (S, V \setminus S)\}$. Let $f_{i, S} = \sum_{I \in F_{i, S}} w(I)$, and $e_{i, S} = \Val_{H_i}(S, V \setminus S)$. Let $G'=(V, E', w')$ be the graph obtained after the sampling and let $H'_M$ be its motive hypergraph. Denote $f'_{i,S} = \sum_{I \in F_{i, S}} w'(I)$.

We start with the following lemma:

\begin{lemma} \label{lem:conn_bound_core}
For any fixed $i$ with probability at least $1 - 1/n^{d_1}$ all cuts $(S, V \setminus S)$ satisfy
\[
    |f_{i, S} -  f'_{i, S}| \leq \frac{\epsilon}{2} \max \left( \frac{e_{i,S} 2^{i - 1}}{\gamma \pi_i}, f_{i, S} \right) \leq \frac{\epsilon}{2} \max \left( \frac{e_{i,S}}{\gamma}, f_{i, S} \right).
\]
\end{lemma}
\begin{proof}
By \Cref{lem:conn_overlap}, $\frac{2^{i - 1}}{\pi_i} \leq 1$, which implies the second inequality.

For the first inequality, notice that if $f_{i, S} = 0$ then $F_{i, S}$ is empty and the lemma statement is trivially true with probability $1$ for the cut $(S, V \setminus S)$. Hence we can assume that $f_{i, S} > 0$ and that there is at least one motif instance in $F_{i, S}$. Since it must be at least $\pi_i$-connected in $H_i$, $e_{i, S} \geq \pi_i$. This means that we can split the remaining cuts into sets of the following form:
\[
    C_{i, j} = \{ (S, V\setminus S): \pi_i \cdot 2^{j} \leq e_{i, S} < \pi_i \cdot 2^{j + 1} \}
\] for $j \in \mathbb{N} \cup \{ 0 \}$.

We will show that with probability at least $1 - 2n^{-d_1 2^j}$, all of the cuts in $C_{i, j}$ satisfy the property. By the union bound, we will then have that the probability that any cut violates the property is at most
\[
    \sum_{j = 0}^{\infty} 2n^{-d_1 2^j} \leq 4 n^{-d_1}
\]
and we are done.

Now, fix $j \in C_{i, j}$ and a cut $(S, V \setminus S)$. We will show that the lemma property holds for cut $(S, V \setminus S)$ with high probability.

Let $E_{i, S} = \bigcup_{I \in F_{i, S}} E(I)$ and $k = |E_{i, S}|$. Consider a process where we sample each edge in $E_{i, S}$ individually and recalculate the value $f_{i, S}$ after each sample. Denote $Z_0$ as the initial value and $Z_k$ as the final value. It is easy to see that $Z_0 = f_{i, S}$, $Z_k = f'_{i, S}$ and that it is a martingale. 

Let $e_t$ be the edge sampled during step $t \in [k]$. Denote 
\[
\M^S(e_t) = \sum_{\substack{I\in F_{i, S}:\\ e_t \in E(I)}} w(I).
\]
Because we sample each edge with probability $\geq {2^{-1/(2r^*)}}$, $|Z_{t} - Z_{t - 1}| \leq \sqrt{2} M^S(e_t)$. On one hand, if $\wws_M(e_t) \geq \Upsilon$, the edge is not sampled and $Z_{t} = Z_{t-1}$. On the other hand, when $\wws_M(e_t) \leq \Upsilon$, using the fact that $2^i \leq k_I < 2^{i + 1}$ for $I \in F_{i, S}$, we can bound $M^S(e)$ as follows:
\[
    \Upsilon \geq \wws_M(e_t) = \sum_{\substack{I\in \MG GM:\\ e_t \in E(I)}} \frac{w(I)}{k_I} \geq \sum_{\substack{I\in F_{i, S}:\\ e_t \in E(I)}} w_\M(I)2^{-i-1} = M^S(e_t) 2^{-i-1}.
\]
Hence we can set $c_t = \sqrt{2} \min(\Upsilon 2^{i+1}, M^S(e_t))$, and $c_t \geq |Z_t - Z_{t - 1}|$ in both cases. Then we have
\[
    \sum_{t = 1}^{k} c_t^2 \leq 2 \Upsilon 2^{i+1} \sum_{t = 1}^{k} M^S(e_t) \leq 2 r^* \Upsilon f_{i, S} 2^{i+1}.
\]
Now let $\xi = \frac{\epsilon}{2} \max \left( \frac{e_{i,S} 2^{i - 1}}{\gamma \pi_i}, f_{i, S} \right)$. By \Cref{lem:azuma} and because $e_{i, S} \geq 2^j \pi_i$,
\begin{align*}
    \Pr(|Z_k - Z_0| \geq \xi) & \leq 2 \exp\left( \frac{-\xi^2}{2\sum_{t = 1}^{k} c_t^2} \right) \leq 2 \exp\left( \frac{-\eps^2 }{16 r^* \Upsilon f_{i, S} 2^{i+1}} \cdot f_{i, S} \cdot \frac{e_{i,S} 2^{i - 1}}{\gamma \pi_i} \right) \\
    & \leq 2 \exp\left( \frac{-\eps^2 2^j}{64 r^* \gamma \Upsilon} \right) \leq 2 \exp(-(d_1 + r + 2 r^*)2^j \ln n).
\end{align*}

Because instances in $F_{i, S}$ are $\pi_i$-heavy, \Cref{lem:proj_num} implies that the number of distinct sets $F_{i, S}$ is at most $n^{2 \cdot 2^j r^*}$. Using a union bound over them, we get that the statement of the lemma holds with probability at least $1 - 4 n^{-d_1 + r}$.
\end{proof}

Now, because there is at most $n^r$ motif instances, there are at most $n^r$ non-empty sets $F_i$. Therefore, we can do a union bound over this quantity, which yields an overall probability of at least $1 - 4 n^{-d_1}$ for which the statement of the \Cref{lem:conn_bound_core} holds for all cuts. 

Finally, for all cuts $(S, V \setminus S)$, by the second property of \Cref{lem:conn_overlap},
\begin{align*}
    |\Val_{G', M}(S, V \setminus S) - \Val_{G, M}(S, V \setminus S)| & \leq |\sum_{i = 0}^\Lambda (f'_{i, S} - f_{i, S}) |
    \leq \frac{\eps}{2} \sum_{i=0}^{\Lambda} \max(\frac{e_{i, S}}{\gamma}, f_{i, S}) \\
    & \leq \frac{\eps}{2} \sum_{i=0}^{\Lambda} \frac{e_{i, S}}{\gamma} + f_{i, S} \\
    & \leq \eps \Val_{G, M}(S, V \setminus S),
\end{align*}
which implies that $G'$ is a $(M, \eps)$-motif cut sparsifier.
\end{proof}

Finally, the algorithm correctness follows by the fact that our sampling strategy conforms with requirements of \Cref{thm:iterative_conn_sampling}.

\begin{theorem} \label{thm:fast_spar_spars}
Let $\{ M_i \}_{i\in[l]}$ --- set of motifs, $G=(V,E,w)$ --- a directed weighted graph, and $\epsilon'$ be the inputs of the \textsc{FastPartialSparsification}. Then for each $i \in [L]$ with probability at least $1 - 5 n^{-d_1}$, the output is an $(\epsilon', M_i)$-motif cut sparsifier of $G$ for all $i \in [L]$ simultaneously.
\end{theorem}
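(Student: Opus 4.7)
The proof will be a direct application of \cref{thm:iterative_conn_sampling}, applied once per motif $M_i$, with the approximate importances $\wwsee_{M_i}$ used by the algorithm playing the role of the exact importances $\wws_{M_i}$ that appear in the theorem. The bulk of the argument will be verifying that the sampling scheme induced by the algorithm satisfies the hypotheses of \cref{thm:iterative_conn_sampling} with parameter $\epsilon = \epsilon'$ for each motif $M_i$.

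Fix $i \in [L]$ and write $r = r_i$, $r^* = r_i^*$. First, I will show that every edge marked as critical by \cref{thm:iterative_conn_sampling}'s scheme for $M_i$ is indeed included in $E_+$ by the algorithm. The theorem's threshold is
\[
\Upsilon \;=\; \frac{(\epsilon')^2}{64(d_1+r+2r^*)r^*\gamma \ln n} \;=\; \frac{(\epsilon')^2}{128(d_1+r+2r^*)r^* r\log n\ln n},
\]
where $\gamma=2r\log n$, while the algorithm uses the threshold $\Upsilon'=(\epsilon')^2/(256(d_1+r+2r^*)(r^*)^2 r\log n\ln n)$, so that $\Upsilon = 2r^*\Upsilon' \ge \Upsilon'$. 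Chaining \cref{lem:motif_conn_approx} and \cref{lem:wwse_approx} gives
\[
\wws_{M_i}(e) \;\le\; \wwse_{M_i}(e) \;\le\; \wwsee_{M_i}(e).
\]
Hence any edge $e$ with $\wws_{M_i}(e)\ge\Upsilon$ satisfies $\wwsee_{M_i}(e)\ge\Upsilon\ge\Upsilon'$ and is picked up on line \ref{line:add_crit_weak} of \cref{alg:fast_par_spars}, so it is kept in the output with probability $1$. All remaining edges are sampled independently with probability $p=2^{-1/(2r^*_{max})} \ge 2^{-1/(2r^*)}$, exactly as required by the hypothesis of \cref{thm:iterative_conn_sampling}.

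With both hypotheses verified, \cref{thm:iterative_conn_sampling} then guarantees that the output is an $(\epsilon',M_i)$-motif cut sparsifier with probability at least $1-4n^{-d_1+r}$ (choosing $d_1$ large enough so that the $r$ term is absorbed into an $n^{-d_1}$-type bound, which is legitimate since $r$ is constant). Taking a union bound over the $L$ motifs $M_1,\ldots,M_L$ — and possibly absorbing the $L$ factor into $d_1$, as $L$ is part of the input parameters — yields the final success probability of $1-5n^{-d_1}$ stated in the theorem, giving the conclusion that the output is simultaneously an $(\epsilon',M_i)$-motif cut sparsifier for every $i\in[L]$.

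The only non-routine step is the threshold comparison in the paragraph above: one must be careful to track the factor of $2r^*$ lost in the chain $\wws \le \wwse \le r^*\wws$ and the factor of $2$ lost in $\wwse \le \wwsee$, and confirm that the algorithm's choice of $\Upsilon'$ is tight enough to leave room for these slacks. Once this book-keeping is done, the rest of the proof is immediate.
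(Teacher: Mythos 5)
Your overall strategy is exactly the one the paper uses: verify that the edges retained deterministically by \textsc{FastPartialSparsification} contain every edge with $\wws_M(e)\ge\Upsilon$, check the sampling probability condition, and invoke \cref{thm:iterative_conn_sampling}. Your threshold bookkeeping is correct — indeed, the direction $\wws_M(e)\le\wwse_M(e)\le\wwsee_M(e)$ is precisely what is needed to show that $\wws_M(e)\ge\Upsilon$ implies $\wwsee_M(e)\ge\Upsilon'$, since $\Upsilon=2r^*\Upsilon'\ge\Upsilon'$. (The paper's own phrasing cites the complementary bound $\wwsee\le 2r^*\wws$, which is what justifies the \emph{size} bound in \cref{lem:fast_spar_size}; your use of the lower bound is the cleaner choice here.)

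However, your final probability accounting has a genuine gap. The constant $5$ in the claimed bound $1-5n^{-d_1}$ does not come from a union bound over the $L$ motifs — the theorem, as the paper interprets it, is a per-motif statement, so no union bound over $L$ occurs at this stage. Rather, the $5$ decomposes as $4+1$: \cref{thm:iterative_conn_sampling} contributes a $4n^{-d_1}$ failure probability (after absorbing the $n^r$ factor, as you note), and the remaining $n^{-d_1}$ comes from the fact that \textsc{Connectivities} (\cref{thm:all_pairs_conn}) is a \emph{randomized Monte Carlo} algorithm which can itself fail. Your proof silently treats the edge connectivities $k_{M,e}$ — and hence the quantities $\wwsee_{M_i}(e)$ — as exactly correct, but the algorithm computes them via a call that is only correct with high probability. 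This must be conditioned on, and its failure probability added in. Additionally, your suggestion to ``absorb the $L$ factor into $d_1$'' does not quite work as stated, because $d_1$ appears inside the definition of $\Upsilon'$; it cannot be retroactively inflated after the algorithm has run. The fix is simple — just condition on the success of \textsc{Connectivities}, at a cost of $n^{-d_1}$ — but as written the source of the $5$ is misattributed and the Monte Carlo failure is entirely unaccounted for.
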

\begin{proof}
Fix $M = M_i$.
The algorithm makes one call to \textsc{Connectivities} algorithm related to motif $M$, which we assume to have sucess probability at least $1 - n^{-d_1}$.
By \Cref{lem:wwse_approx} and \Cref{lem:motif_conn_approx}, $\wwsee_{M_i}(I) \leq 2 r^* \wws_{M_i}(I)$, therefore \Cref{alg:fast_par_spars} satisfies prerequisites of \Cref{thm:iterative_conn_sampling} with $\eps = \eps'$ and it's output is a $(\epsilon', M_i)$-motif cut sparsifier with probability at least $1 - 4 n^{-d_1}$. Hence, the final success probability is at least $1 - 5 n^{-d_1}$.
\end{proof}

\subsection{Size of sparsifier from \fparspar{}}

In this subsection, the sparsifier size.
More precisely, we bound the number of edges sampled with probability $1$, which are basically an analog of the critical edges from \Cref{alg:par_spars}. To do this, we will utilize a classic result on sum of inverse connectivities. We include the proof for completeness.

Because we iteratively apply \fparspar{} to the same graph multiple times, all other edges will be discarded with high probability, yielding us a sparsifier size bound as detailed in \Cref{lem:final_size}.

\begin{lemma}[Corollary of Lemma 6.9 of \cite{chekuri2018minimum}] \label{lem:conn_sum_bound}
Let $H=(V, F, w)$ be a weighted hypergraph. For $I \in F$, let $k_I$ denote the hyperedge connectivities and $\kappa_I$ denote the hyperedge strengths. Then
\[
    \sum_{I \in F} \frac{w(I)}{k_I} \leq \sum_{I \in F} \frac{w(I)}{\kappa_I} \leq n - C
\] where $C$ is the number of connected components in $H$.
\end{lemma}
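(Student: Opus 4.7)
The inequality naturally splits into two independent claims: (i) $k_I\ge\kappa_I$ for every hyperedge $I\in F$ (which gives $\sum_I w(I)/k_I \le \sum_I w(I)/\kappa_I$), and (ii) $\sum_I w(I)/\kappa_I\le n-C$. Part (ii) is exactly the content of Lemma~6.9 of \cite{chekuri2018minimum}, which we will invoke as a black box. The only thing left to prove from scratch is part (i), which is a short structural argument comparing the definition of hyperedge connectivity (a global min-cut quantity) with that of hyperedge strength (a max-over-induced-subhypergraphs quantity).

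\textbf{Proof of (i): $k_I\ge\kappa_I$.} Fix $I\in F$. By the definition of strength, there exists an induced subhypergraph $C=(V_C,F_C,w)$ of $H$ with $I\in F_C$ that is $\kappa_I$-connected; in particular, every non-trivial cut of $C$ has size at least $\kappa_I$. Now let $(S,V\setminus S)$ be any cut of $H$ that crosses $I$, i.e.\ $I\cap S\ne\emptyset$ and $I\setminus S\ne\emptyset$. Since $I\subseteq V_C$, both $S\cap V_C$ and $V_C\setminus S$ are non-empty and the induced cut $(S\cap V_C,V_C\setminus S)$ of $C$ also crosses $I$, hence is non-trivial. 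Any hyperedge of $C$ crossing this induced cut is a hyperedge of $H$ crossing $(S,V\setminus S)$, so the size (in $C$) of the induced cut is at most the size (in $H$) of the original cut. Combining these two facts,
\[
    \Val_H(S,V\setminus S)\ \ge\ \Val_C(S\cap V_C,V_C\setminus S)\ \ge\ \kappa_I.
\]
Taking the minimum over all cuts of $H$ crossing $I$ gives $k_I\ge\kappa_I$, and so $w(I)/k_I\le w(I)/\kappa_I$ for every $I$. Summing over $I\in F$ yields the first inequality.

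\textbf{Proof of (ii): $\sum_I w(I)/\kappa_I\le n-C$.} This is precisely Lemma~6.9 of \cite{chekuri2018minimum}. For completeness we sketch the idea one would use if re-proving it: proceed by induction on $|V|$, decomposing $H$ according to its strong-connectivity hierarchy. If $H$ is disconnected, apply induction to each connected component and note that the bounds add up. Otherwise let $\lambda$ be the (global) connectivity of $H$ and let $H'$ be obtained by removing a minimum cut (or by partitioning vertices along such a cut). Every hyperedge removed has strength exactly $\lambda$, and the total weight of removed hyperedges is $\lambda$ (by minimality of the cut), so removing them contributes at most $1$ to the sum $\sum w(I)/\kappa_I$. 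The remaining pieces have strictly smaller vertex sets and at least one additional connected component, and the induction hypothesis handles them. Since the original graph has $C$ components, iterating this argument produces the bound $n-C$. This is the standard Bencz\'ur--Karger argument transferred to the hypergraph setting, which is exactly what Chekuri--Xu carry out.

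\textbf{Expected obstacle.} There is essentially no obstacle: part (i) is a one-line consequence of matching the definitions of strength and connectivity, and part (ii) is cited directly. The only subtlety worth double-checking is that the induced cut used in (i) is non-trivial in $C$, which is guaranteed because $I$ (with vertices in both sides of the cut) lies entirely inside $V_C$.
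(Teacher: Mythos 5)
Your proposal is correct and follows essentially the same structure as the paper's proof: the first inequality reduces to the pointwise comparison $k_I\ge\kappa_I$, and the second is a standard Bencz\'ur--Karger-style induction over min cuts after reweighting by $1/\kappa_I$, which the paper writes out in full and you invoke as Lemma~6.9 of Chekuri--Xu with a matching sketch. The only difference in emphasis is that you supply the short argument for $k_I\ge\kappa_I$ (which the paper merely asserts) while treating the sum bound as a citation, whereas the paper does the reverse; both are legitimate and the underlying ideas coincide.
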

\begin{proof}
Since $k_I \geq \kappa_I$ for all $I \in F$, it is enough to only prove the second inequality.

Let $H'=(V, E, w')$ where $w'(I) = \frac{w(I)}{\kappa_I}$. Let $(S, V \setminus S)$ be the minimum cut in $H$. Notice that for all $I \in F$ that cross this cut, $\kappa_I$ is equal to the size of this cut, hence the size of this cut in $H'$ is $1$. On the other hand, because $\kappa_I$ is not higher than the size of any cut crossing $I$, the size of each cut in $H'$ is at least $1$, hence the size of the minimum cut in $H'$ is $1$.

Now, we will show the claim by induction on $C$. If $n - C = 0$, the claim holds trivially. Otherwise, we assume that the claim holds for all hypergraphs with a bigger number of connected components.

Find a minimum cut in $H'$ and remove all the cut hyperedges from the graph $H$. Let $J=(V, F', w)$ be the resulting graph, and let $J'=(V, F', w'')$ be its reweighted version. This increases the number of connected components, therefore by induction the inequality holds for the new graph $J$. Because removal of edges can only decrease the strengths of edges, $w''(I) \geq w'(I)$ for all $I \in F'$. Therefore,
\[
    \sum_{I \in F} \frac{w(I)}{\kappa_I} \leq \sum_{I \in F'} w''(I) + \sum_{I \in F: I \text{ crosses } (S, V \setminus S)} w'(I) \leq n - C - 1 + 1 = n - C. \qedhere
\]
\end{proof} 

\begin{lemma} \label{lem:fast_spar_size}
Let $\{ M_i \}_{i\in[l]}$ --- set of motifs, $G=(V,E,w)$ --- a directed weighted graph, and $\epsilon' \geq 0$ be the inputs of the \textsc{FastPartialSparsification}. The number of edges sampled with probability $1$ in the algorithm is at most 
\[
    \sum_{i = 1}^{L} \frac{256 (n - 1) (d_1 + r_i + 2r_i^*) (r_i^*)^4 r_i \log n \ln n}{{\eps'}^2 }.
\]
\end{lemma}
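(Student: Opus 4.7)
The plan is to show that for each motif $M_i$, the edges thresholded into $E_+$ at line~\ref{line:add_crit_weak} of \fparspar{} are few in number via a Markov-style argument, and then union over $i \in [L]$. Concretely, an edge $e$ survives with probability $1$ only if it satisfies $\wwsee_{M_i}(e) \ge \Upsilon'$ for at least one $i$, so it suffices to bound, for each fixed motif $M=M_i$ with parameters $r,r^*$, the cardinality of the set $\{e \in E : \wwsee_M(e) \ge \Upsilon'\}$ by $\sum_{e \in E} \wwsee_M(e) / \Upsilon'$.

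The next step is to bound $\sum_{e \in E} \wwsee_M(e)$ by chaining the relevant approximation inequalities. First, \Cref{lem:wwse_approx} gives $\wwsee_M(e) \le 2\wwse_M(e)$. Second, \Cref{lem:motif_conn_approx} tells us that for every motif instance $I$, $\min_{e \in E(I)} k_{M,e} \ge k_I$, hence $\wwse(I) = w(I)\, r^*/\min_{e\in E(I)} k_{M,e} \le r^* w(I)/k_I = r^*\wws(I)$, and summing over $I \ni e$ yields $\wwse_M(e) \le r^* \wws_M(e)$. Finally, since each motif instance $I$ contributes its importance to at most $r^*$ edges in the sum $\sum_e \wws_M(e)$, we can write $\sum_{e\in E}\wws_M(e) \le r^* \sum_{I \in \MG GM} w(I)/k_I$.

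To control this last sum, I will apply \Cref{lem:conn_sum_bound} to the motif hypergraph $H_M$ of $G$ with respect to $M$. Recall that $H_M$ has hyperedges $V(I)$ weighted by $w(I)$ (summed over coinciding instances), that the cuts of $H_M$ correspond exactly to the motif cuts of $G$, and hence that the hyperedge-connectivity in $H_M$ of the hyperedge corresponding to $I$ is exactly $k_I$. The lemma then yields $\sum_{I \in \MG GM} w(I)/k_I \le n-1$. Combining, we obtain $\sum_{e \in E}\wwsee_M(e) \le O((r^*)^2 (n-1))$, and dividing by $\Upsilon'$ produces the claimed per-motif bound of $O\bigl((n-1)(d_1+r+2r^*)(r^*)^4 r \log n \ln n / {\eps'}^2\bigr)$.

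The proof concludes by summing the per-motif bounds across $i \in [L]$, using that $E_+$ is a union of the individual threshold sets. There is no substantial obstacle here: the only point worth care is the passage from motif-instance quantities in $G$ to hyperedge-connectivity quantities in $H_M$, so that \Cref{lem:conn_sum_bound} may be applied cleanly; the remainder is bookkeeping of constants to verify that the factor ultimately matches the claimed $256$ (absorbing the constant of $2$ in \Cref{lem:wwse_approx} and any small slack into the definition of $\Upsilon'$).
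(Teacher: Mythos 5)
Your proposal is correct and follows essentially the same chain of inequalities as the paper: bound $\tau\cdot\Upsilon'$ by $\sum_e\wwsee_{M_i}(e)$, pass to $\wws_{M_i}$ via \cref{lem:wwse_approx} and \cref{lem:motif_conn_approx}, pick up an $r^*$ factor from motif instances touching at most $r^*$ edges, and invoke \cref{lem:conn_sum_bound} on the motif hypergraph to bound $\sum_I w(I)/k_I\le n-1$, then sum over $i\in[L]$. The only difference is that you spell out the translation to $H_{M_i}$ explicitly whereas the paper cites \cref{lem:conn_sum_bound} directly; your caution about bookkeeping the constants is warranted, since the paper's own arithmetic actually produces a factor of $512$ rather than the stated $256$, a harmless discrepancy.
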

\begin{proof}
Fix $i \in [i]$. By \Cref{lem:wwse_approx} and \Cref{lem:motif_conn_approx}, $\wwsee_{M_i}(I) \leq 2 r_i^* \wws_{M_i}(I)$. Let $\tau$ be the number of edges sampled with probability $1$ which are added to $E_+$ when considering motif $M_i$ in line~\ref{line:add_crit_weak} of \Cref{alg:fast_par_spars}. We have
\[
    \tau \cdot \Upsilon' \leq \sum_{e \in E} \wwsee_{M_i}(e) \leq 2 r^*_i \sum_{e \in E} \wws_{M_i}(e) \leq 2 (r_i^*)^2 \sum_{I \in \MG GM} \frac{w(I)}{k_I} \leq 2(r_i^*)^2 (n - 1)
\]
where the last inequality follows by \Cref{lem:conn_sum_bound}. Hence $\tau \leq 2 (r_i^*)^2 (n - 1) / \Upsilon'$.

We obtain the bound by summing over all $i$.
\end{proof}

\subsection{Running time of \fparspar{}}

Remember that the main practical difference between \parspar{} and \fparspar{} is in the running time, which we show in this subsection.

Recall that $W = \max_{e \in E} w(e) / \min_{e \in E} w(e)$.
\begin{lemma} \label{lem:fast_spar_time}
Let a directed weighted graph $G=(V, E, w)$, $\eps' > 0$ and a motif set $\{M_i\}_{i=1}^{L}$ be the input of $\fparspar{}$. Then its running time is bounded by
\[
\tO(L (n^2 + (r^r + n^{\omega \lceil r / 3\rceil } + n^{2\lceil r/3 \rceil}) \log W)).
\]
\end{lemma}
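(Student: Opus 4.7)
The plan is to bound the cost of a single iteration of the outer loop (line~\ref{line:fast_spar_loop}) and then multiply by $L$. Within one iteration (for a fixed motif $M_i$) there are three kinds of work: (a) one invocation of \textsc{MotifWeights} on $G$, (b) one invocation of \textsc{Connectivities} on the motif weighted graph $G_{M_i}$, and (c) a loop over levels $j=0,\ldots,\Lambda_{M_i}$, each iteration of which calls \textsc{MotifWeights} on the level-graph $G_j$. The remaining work (forming the sets $E_j$, computing $\wwsee_{M_i}(e)$ from the $w_{M_i,j}$, comparing against $\Upsilon'$, sampling, and re-weighting) is $\tO(|E|+\sum_j |E_j|)$, which is dominated by the cost of the \textsc{MotifWeights} calls since $|E|\leq n^2\leq n^{2\lceil r/3\rceil}$.

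By \Cref{thm:motif_counting}, each call to \textsc{MotifWeights} costs $O(n^{\omega\lceil r/3\rceil}+r^2 n^{2\lceil r/3 \rceil}+r^* r^r)$, which is $\tO(r^r+n^{\omega \lceil r/3 \rceil}+n^{2\lceil r/3\rceil})$ absorbing the $r^2$ factor into $\tO$. By \Cref{thm:all_pairs_conn}, the single call to \textsc{Connectivities} costs $\tO(n^2)$. So everything reduces to bounding $\Lambda_{M_i}=\lceil \log(\max_e k_{M_i,e}/k_{\min})\rceil$, i.e.\ the number of distinct connectivity levels that matter.

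The key observation is that after the initial rescaling that sets $\min_{e\in E}w(e)=1$, every edge weight lies in $[1,W]$, so every motif weight lies in $[1,W^{r^*}]$, and every edge weight in the motif weighted graph $G_{M_i}$ lies in $[1,|\mathcal{M}(G,M_i)|\cdot W^{r^*}]\subseteq [1,n^r W^{r^*}]$. Since edge connectivities in a weighted graph are sandwiched between the minimum and $n$ times the maximum edge weight, the ratio $\max_e k_{M_i,e}/k_{\min}$ is at most $n\cdot n^r\cdot W^{r^*}$, so $\Lambda_{M_i}=O(r\log n+r^*\log W)=\tO(\log W)$ (with $r,r^*$ absorbed as $\tO$ factors since they are treated as polylogarithmic in the stated bound). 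The inner loop therefore makes $\tO(\log W)$ calls to \textsc{MotifWeights}, contributing $\tO((r^r+n^{\omega\lceil r/3\rceil}+n^{2\lceil r/3\rceil})\log W)$.

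Putting the three pieces together, one outer iteration costs
\[
\tO\!\left(n^2 + (r^r + n^{\omega\lceil r/3\rceil} + n^{2\lceil r/3\rceil})\log W\right),
\]
and summing this over the $L$ motifs yields the claimed bound. The only mildly delicate step is the $\tO(\log W)$ bound on $\Lambda_{M_i}$: this just requires carefully tracking how the weight ratio $W$ of $G$ propagates through the product definition of motif weights and then through the edge connectivities of $G_{M_i}$, which is the one place where the dependence on $W$ (rather than on the unweighted structure of $G$) enters.
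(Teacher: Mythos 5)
Your proof is correct and follows essentially the same approach as the paper: decompose the cost per motif into one \textsc{Connectivities} call, roughly $\Lambda_{M_i}$ calls to \textsc{MotifWeights}, plus cheap bookkeeping, and then bound $\Lambda_{M_i}=O(r\log n+r^*\log W)$. The only cosmetic difference is in deriving the $\Lambda_{M_i}$ bound — the paper bounds $\min_e k_{M_i,e}$ and $\max_e k_{M_i,e}$ directly ($\geq(\min_e w(e))^{r_i^*}$ and $\leq n^{r_i}(\max_e w(e))^{r_i^*}$), while you first pin down the weight range of $G_{M_i}$ and then sandwich the edge connectivities; both routes yield the same estimate.
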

\begin{proof}
First consider the loop at line~\ref{line:fast_spar_loop}. In this loop, \textsc{MotifWeights} is called $O(\sum_{i \in [L]} \Lambda_{M_i})$ times, \textsc{Connectivities} is called $L$ times, $\wwsee_{M_i}(e)$ is calculated for all $e \in E$ and $i \in [L]$, and the set $E_+$ is updated $L$ times.

First we bound $\Lambda_{M_i}$. Since $\Lambda_{M_i} = \lceil \log \max_{e \in E} k_{M_i, e} / \min_{e \in E} k_{M_i, e} \rceil$, and $\min_{e \in E} k_{M_i, e} \geq \min_{e \in E} w(e)^{r_i^*}$ and $\max_{e \in E} k_{M_i, e} \leq n^{r_i} \max_{e \in E} w(e)^{r_i^*}$, we have $\Lambda_{M_i} \leq r_{max}^* \log W + r_{max} \log n + 1$.

The time needed to calculate $\wwsee_{M_i}$ for all $e \in E$ is $O(\Lambda_{M_i} |E|)$, given weights $w_{M_i, j}$, $j \in [\Lambda_{M_i}]$. Hence, by \Cref{thm:motif_counting} and \Cref{thm:all_pairs_conn}, the running time of this segment is
\begin{align*}
    &\tO(Ln^2) + O((r_{max}^* \log W + r_{max} \log n) L (r^* r^r + n^{\omega \lceil r / 3\rceil } + r^2 n^{2\lceil r/3 \rceil}) ) \\
&= \tO(L (n^2 + (r^r + n^{\omega \lceil r / 3\rceil } + n^{2\lceil r/3 \rceil}) \log W)).
\end{align*}
Since the rest can be done in time $O(|E|)$, the first part dominates the runtime.
\end{proof}

\subsection{\motspar{} with \fparspar{}}
Similarly to \parspar{}, the final algorithm is obtained by running the \motspar{}. In the next theorem we derive its properties.

\begin{proof}[Proof of \cref{thm:intro-main-general-2}]
The runtime follows from \Cref{lem:fast_spar_time} and \Cref{thm:final_time}. Notice that $W$ multiplies by at most $n^{c_1}$ during the execution of \motspar{}, since each weight is multiplied by at most $2^{-1/(2r^*)}$ each iteration.

The sparsifier size follows from \Cref{lem:final_size} and \Cref{lem:fast_spar_time}. More precisely, it is at most
\[
O\left(L \frac{n r_{max} (r_{max}^*)^7 \log^4 n}{{\eps}^2 }\right).
\]

The probability follows from \Cref{thm:final_spars} and \Cref{thm:fast_spar_spars} by setting $d_1 = c_1 + 1$.
\end{proof}
\section{Lower Bound for Induced Motif Sparsification}\label{sec:lower-bound}

\subsection{Overview}

In contrast to the rest of the paper, in this section we consider the question of motif-cut sparsiciation in the context of induced motifs. That is, unlike in the rest of the paper, a subgraph is only considered to be a motif instance if it is an induced subgraph.

\begin{definition}
Let $G=(V, E)$ be a directed graph, and let $M=(V_M, E_M)$, a weakly connected directed graph, be our motif. An {\bf induced} subgraph of $G$ that is \emph{isomorphic} to $M$ is considered to be an {\bf induced} motif instance. The set of all {\bf induced} instances of $M$ is $G$ is denoted $\overline{\mathcal M}(G, M)$ (with the overline differentiating it from the set of not-necessarily-induced motif instance $\mathcal M(G, M)$).
\end{definition}

This can be simply generalized to undirected graphs and motifs, as described in \cref{sec:prelims}.

\begin{definition}
We extend the definitions of the \emph{weigh of a motif instance}, the \emph{size of a motif cut} and the concept of an $(M, \epsilon)$\emph{-motif cut sparsifier} analogously from \Cref{def:motif-weight,def:motif-cut-size,def:motif-cut-sparsifier}, with the exception that we denote the motif size of a cut by $\OVal_{M, G}$.
\end{definition}

In this section, we rule out the possibility of constructing any non-trivial induced-motif-cut sparsifiers in full generality, by demonstrating an example of a graph and a motif where this is not possible:

\lowerbound*

In the rest of the section, we recall our lower-bound construction from \cref{sec:lower-bound-overview}, give an overview of our proof, then finally prove \cref{thm:graphlet-main} formally in \cref{sec:graphlet-proof}.

\paragraph{Construction:}
Our input graph will be the undirected, unweighted clique with the three edges of a specific triangle $(a,b,c)$ removed. More formally, we define $\Delta^-=(V, E_\Delta^-)$ as an unweighted, undirected graph on $n$ vertices, where
$$E_\Delta^-=\binom{V}{2}\setminus\big\{\{a,b\},\{b,c\},\{c,a\}\big\},$$
for distinct special vertices $a, b, c\in V$.

We call these three special vertices the central triangle, and all other vertices the periphery. Our motif will be the induced undirected $2$-path -- i.e. $3$ vertices with exactly $2$ edges between them.


\paragraph{Proof Sketch:}
Note first the distribution of $2$-path motifs in $\Delta^-$: We have exactly $3(n - 3)$ motifs, each having two vertices in the central triangle and one in the periphery. Suppose a weighted graph $\widehat G=(V,\widehat E, w)$ approximates the induced-motif-cut structure of $\Delta^-$ to within a $(1\pm\epsilon)$-factor for some $\epsilon=\Omega(1)$. We assume such a $\widehat G$ exists and, through a series of claims, we show that $\widehat G$ must necessarily be dense.

First we show that nearly all induced $2$-paths in $\widehat G$ must contain one vertex from the periphery and two from the central triangle -- similarly to how it is in $\Delta^-$ (\cref{claim:m-2-graphlets}). Next, we show that most peripheral vertices must have induced $2$-paths in common with all three central vertices (\cref{claim:graphlet-strong-constribution}). This implies that most periphery vertices must have a heavy edge (of weight $\Omega(1)$) connecting them to at least one of the central vertices (\cref{claim:graphlet-strong-connection}). (This statement may seem trivial at first glance, but is actually the crux of the proof;~\cref{ex:graphlet-sparsifier} shows a similar construction where the analogous statement is false, leading to a valid sparsifier.) Finally, we argue that at least one of the central vertices must have $\Omega(n)$ heavy edges adjacent on it. This leads to $\Omega(n^2)$ \emph{not necessarily induced} $2$-paths; in order for most of these to not be induced, $\widehat G$ must be dense.

In what follows, we formalize the above argument, and show that any graph $\wh G$ approximating the induced-motif-cut structure of $\Delta^-$ to within a constant multiplicative error must have $\Omega(n^2)$ edges.

\begin{remark}
Throughout the proof we assume that the sparsifier $\wh G$ is undirected. Since our motif is also undirected this is without loss of generality: Indeed, for $u,v\in V$ we can replace any directed edges $(u,v)$ of weight $w_1$ and $(v,u)$ of weight $w_2$ by a single undirected edge $\{u,v\}$ of weight $w_1\cdot w_2$. Similarly, if exactly one of $(u,v)$ and $(v,u)$ is present, we can replace it with an undirected edge $\{u,v\}$ of weight $0$, without affecting the induced $P_2$ motif-graph. Thus the existence of a directed sparsifier implies the existence of an undirected sparsifier of equal or smaller size.
\end{remark}

\subsection{Proof of~\cref{thm:graphlet-main}}\label{sec:graphlet-proof}

\begin{proof}[Proof of~\cref{thm:graphlet-main}]

    We take $G=\Delta^-$ on $n$ vertices, and the motif which is the induced $2$-path ($P_2$) as our example. We may assume without loss of generality that $\epsilon \ge 100/n$. Suppose $\widehat G=(V, \widehat E, w)$ is a graph which well approximates the induced-motif-cut structure of $\Delta^-$, that is, for all cuts $S\subseteq V$
   \begin{equation}
       (1 - \epsilon) \OVal_{P_2, \Delta^-}(S, V \setminus S) \leq \OVal_{P_2, \widehat{G}}(S, V \setminus S) \leq (1 + \epsilon) \OVal_{P_2, \Delta^-}(S, V \setminus S).\label{eq:graphlet-main}
   \end{equation} 
   
  \begin{claim}\label{claim:total-graphlets}
    The total weight of induced $2$-paths motifs in $\widehat G$ is at most $3(1 + \epsilon)n$.
  \end{claim}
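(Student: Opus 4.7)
The plan is to bound the total motif weight by summing the sparsifier approximation inequality~\eqref{eq:graphlet-main} over the $n$ singleton cuts, and then to compute both sides explicitly.

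First, I would observe the following double-counting identity: for any weighted (undirected) graph $H=(V,F,w)$, summing the $P_2$-motif size over all singleton cuts $\{v\}$ counts every induced $P_2$ instance exactly three times (once for each of its three vertices, since a motif crosses the cut $(\{v\}, V\setminus\{v\})$ iff $v$ is one of its vertices). In other words,
\[
  \sum_{v\in V}\OVal_{P_2, H}(\{v\}, V\setminus\{v\}) \;=\; 3\cdot\!\!\!\sum_{I\in\OM{H}{P_2}}\!\! w(I).
\]
Applied to both $H=\wh G$ and $H=\Delta^-$, this reduces bounding the total motif weight in $\wh G$ to bounding a sum of $n$ cut sizes.

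Next, I would apply the upper half of~\eqref{eq:graphlet-main} to each singleton cut and sum:
\[
  \sum_{v\in V}\OVal_{P_2,\wh G}(\{v\},V\setminus\{v\}) \;\le\; (1+\epsilon)\sum_{v\in V}\OVal_{P_2,\Delta^-}(\{v\},V\setminus\{v\}).
\]
Combined with the identity above, this yields
\[
  \sum_{I\in\OM{\wh G}{P_2}} w(I) \;\le\; (1+\epsilon)\,\bigl|\OM{\Delta^-}{P_2}\bigr|.
\]

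Finally, I would compute $|\OM{\Delta^-}{P_2}|$ directly. An induced $P_2$ in $\Delta^-$ is a triple of vertices containing exactly two edges of $\Delta^-$ and exactly one non-edge; since the only non-edges of $\Delta^-$ are the three pairs $\{a,b\},\{b,c\},\{c,a\}$, each induced $P_2$ contains exactly one of these non-edges together with a third vertex adjacent to both endpoints of that non-edge. The third vertex cannot be a central vertex (otherwise a second non-edge would be included), so it must lie in the periphery, giving $n-3$ choices per non-edge and hence $|\OM{\Delta^-}{P_2}|=3(n-3)$. Plugging back in gives
\[
  \sum_{I\in\OM{\wh G}{P_2}} w(I)\;\le\; 3(1+\epsilon)(n-3)\;\le\;3(1+\epsilon)n,
\]
as desired. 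There is no real obstacle here beyond the careful enumeration of induced $P_2$'s in $\Delta^-$; the key conceptual step is the singleton-cut double counting, which already appears in the paper's discussion of recovering global motif counts from a motif sparsifier.
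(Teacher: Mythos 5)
Your proof is correct and takes essentially the same approach as the paper: both sum the sparsifier guarantee over all $n$ singleton cuts and divide by $3$ since each induced $P_2$ is counted once per vertex. You simply organize the arithmetic by computing $|\OM{\Delta^-}{P_2}|=3(n-3)$ up front, whereas the paper tallies the per-vertex singleton cut sizes in $\Delta^-$ directly, but the underlying double-counting argument is identical.
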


\begin{proof}
   We can estimate the weight of motifs in $\wh G$ by applying~\cref{eq:graphlet-main} to each singleton-cut in turn. This gives as that each vertex in the central triangle ($a$, $b$, and $c$) has at most $2(n-3)\cdot(1+\epsilon)$ motifs containing it. Similarly, the vertices in the periphery ($V\setminus\{a, b, c\}$) each have at most $3(1+\epsilon)$ motifs containing each. Since each motif contains exactly $3$ vertices, this is a total of at most
   $$\frac{3\cdot(2(n-3)\cdot(1+\epsilon)) + (n-3)\cdot(3(1+\epsilon))}3=(1+\epsilon)\cdot (2(n-3) + (n-3))\le3(1+\epsilon)n$$
   weight among all motifs.
\end{proof}

We categorize the motifs based on the number of central vertices they contain: $\mathcal M_i$ contains motifs with exactly $i$ central vertices and exactly $3-i$ vertices from the periphery. Hence, we have the partition
$$\overline{\mathcal M}(P_2,\wh G)=\mathcal M_0\cup\mathcal M_1\cup\mathcal M_2\cup\mathcal M_3.$$

We prove that all but a diminishingly small fraction of the motifs reside in $\mathcal M_2$.
\begin{claim}\label{claim:m-2-graphlets}
The total motif-weight of $\mathcal M_0\cup\mathcal M_1\cup\mathcal M_3$ is at most $21\epsilon n$, that is (with slight abuse of notation)
   $$w(\mathcal M_0)+w(\mathcal M_1)+ w(\mathcal M_3)\le27\epsilon n.$$
\end{claim}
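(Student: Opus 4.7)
My plan is to apply the cut-preservation hypothesis~\eqref{eq:graphlet-main} to a small, carefully chosen family of cuts in $\Delta^-$ and exploit the fact that \emph{every} induced $P_2$ instance in $\Delta^-$ lies in $\mathcal{M}_2$. The resulting linear inequalities in the four unknowns $w(\mathcal{M}_0),w(\mathcal{M}_1),w(\mathcal{M}_2),w(\mathcal{M}_3)$ can then be combined to isolate $w(\mathcal{M}_0)+w(\mathcal{M}_1)+w(\mathcal{M}_3)$.

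The family of cuts I would use consists of: (i) the ``central'' cut $S = \{a,b,c\}$, which is crossed in $\Delta^-$ by all $3(n-3)$ motifs; (ii) each singleton cut $\{v\}$ with $v \in \{a,b,c\}$, crossed in $\Delta^-$ by exactly $2(n-3)$ motifs (those of the form $\{v,w,x\}$ with $w\in\{a,b,c\}\setminus\{v\}$ and $x$ peripheral); and (iii) each peripheral singleton $\{x\}$, crossed in $\Delta^-$ by exactly $3$ motifs. On the $\wh G$ side, each of these cut values decomposes as a nonnegative integer combination of $w(\mathcal{M}_0),\dots,w(\mathcal{M}_3)$, where the coefficient of $w(\mathcal{M}_S)$ in a cut $T$ counts (by weak-connectedness of $P_2$) the number of vertices an $\mathcal{M}_S$-motif places on each side; a routine case analysis in $S\subseteq\{a,b,c\}$ pins down the coefficients.

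Carrying this out, summing (ii) over $v\in\{a,b,c\}$ and (iii) over all peripheral $x$, the three applications of~\eqref{eq:graphlet-main} yield
\[
(1-\epsilon)\,6(n-3)\le w(\mathcal{M}_1)+2w(\mathcal{M}_2)+3w(\mathcal{M}_3)\le(1+\epsilon)\,6(n-3),
\]
\[
(1-\epsilon)\,3(n-3)\le 3w(\mathcal{M}_0)+2w(\mathcal{M}_1)+w(\mathcal{M}_2)\le(1+\epsilon)\,3(n-3),
\]
\[
(1-\epsilon)\,3(n-3)\le w(\mathcal{M}_1)+w(\mathcal{M}_2)\le(1+\epsilon)\,3(n-3).
\]
Subtracting the lower bound of the third from the upper bound of the second gives $3w(\mathcal{M}_0)+w(\mathcal{M}_1)\le 6\epsilon(n-3)$, which controls $w(\mathcal{M}_0)$ and $w(\mathcal{M}_1)$. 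Subtracting twice the lower bound of the third from the upper bound of the first gives $3w(\mathcal{M}_3)-w(\mathcal{M}_1)\le 12\epsilon(n-3)$, and plugging in the previous bound on $w(\mathcal{M}_1)$ yields $w(\mathcal{M}_3)\le 6\epsilon(n-3)$. Adding the three bounds gives $w(\mathcal{M}_0)+w(\mathcal{M}_1)+w(\mathcal{M}_3)\le 14\epsilon(n-3)$, well within the claimed $27\epsilon n$.

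There is no serious obstacle here: all the heavy lifting is done by~\eqref{eq:graphlet-main} together with the observation that $\Delta^-$ has no induced $P_2$'s outside $\mathcal{M}_2$. The only step requiring care is the combinatorial bookkeeping of which $\mathcal{M}_S$ crosses which cut, which just amounts to checking for each of the eight subsets $S\subseteq\{a,b,c\}$ whether an $\mathcal{M}_S$-motif has vertices on both sides of the chosen cut.
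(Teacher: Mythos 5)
Your proof is correct and uses the same family of cuts as the paper's proof (the central cut $\{a,b,c\}$ together with all $n$ singleton cuts, the latter appearing in the paper implicitly via \cref{claim:total-graphlets}), combined by linear manipulation. Your finer bookkeeping---separating the central singleton cuts from the peripheral ones and tracking both directions of each inequality---is a cleaner organization of the same argument and in fact yields a sharper constant ($14\epsilon(n-3)$ versus the paper's $21\epsilon n$), but the underlying idea is essentially identical.
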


\begin{proof}
First, consider~\cref{eq:graphlet-main} with the cut $S=\{a, b, c\}$. In $\Delta^-$ this cuts all motifs, therefore
$$3(n -3) = \OVal_{P_2, \Delta^-}(S, V\setminus S)\le(1-\epsilon)^{-1}\OVal_{P_2,\wh G}(S, V\setminus S)=(1-\epsilon)^{-1}\cdot w\left(\mathcal M_1 + \mathcal M_2\right),$$
since motifs in $\mathcal M_0$ and $\mathcal M_3$ don't cross this cut in $\wh G$. By~\cref{claim:total-graphlets}, this implies that the total weight of $\mathcal M_0$ and $\mathcal M_3$ is at most $4\epsilon n$. (Recall that $\epsilon \ge100/n$.)

Next, consider again~\cref{eq:graphlet-main} for each singleton cut containing the vertices $a$, $b$, and $c$ in turn. Similarly to the proof of~\cref{claim:total-graphlets}, this gives us that each central vertex has at least $2(n-3)\cdot(1-\epsilon)$ motifs containing it. To account for these, we must have
\begin{align*}
    3\cdot2(n-3)\cdot(1-\epsilon)&\le3w(\mathcal M_3)+2w(\mathcal M_2)+w(\mathcal M_1)\\
    &\le 2w(\overline{\mathcal M}(P_2,\wh G)) + w(\mathcal M_3) - w(\mathcal M_1)\\
    &\le6n(1+\epsilon) + 4\epsilon n- w(\mathcal M_1),
\end{align*}
by~\cref{claim:total-graphlets}. Therefore, $w(\mathcal M_1)\le6n(1+\epsilon) + 4\epsilon n - 6(n - 3)(1-\epsilon)\le 17n\epsilon$, which concludes the proof of the claim.
\end{proof}

Given that most motifs are in $\mathcal M_2$ we focus on these, and further partition them into $\mathcal M_{ab}$, $\mathcal M_{bc}$, and $\mathcal M_{ca}$ which respectively contain exactly $(a,b)$, $(b,c)$, and $(c,a)$ from the central triangle. The remaining motifs make up $\mathcal M_-=\mathcal M_0\cup\mathcal M_1\cup\mathcal M_3$, and their total weight is diminishingly small. We will prove that the total weight of the motifs in each of the main categories ($\mathcal M_{ab}$, $\mathcal M_{bc}$, and $\mathcal M_{ca}$) are roughly the same, that is roughly $n$. (The claim is phrased in terms of pairs of categories, as this will be the most useful form later on).

\begin{claim}\label{claim:graphlets-balance}$\mathcal M_{ab}$, $\mathcal M_{bc}$, and $\mathcal M_{ca}$ satisfy the following inequalities:
\begin{align*}
    &w(\mathcal M_{ab}) + w(\mathcal M_{bc}) \le 2n + 2\epsilon n,\\
    &w(\mathcal M_{bc}) + w(\mathcal M_{ca}) \le 2n + 2\epsilon n,\\
    &w(\mathcal M_{ca}) + w(\mathcal M_{ab}) \le 2n + 2\epsilon n.
\end{align*}
\end{claim}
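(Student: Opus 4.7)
The plan is to derive each of the three inequalities by applying the sparsifier approximation guarantee~\cref{eq:graphlet-main} to a single singleton cut at a central vertex. For the first inequality, I would use the cut $S = \{b\}$; for the second, $S=\{c\}$; and for the third, $S=\{a\}$. The reason this works is essentially bookkeeping: in $\wh G$, every motif in $\mathcal M_{ab}\cup\mathcal M_{bc}$ contains $b$ by construction, so every such motif crosses the cut $\{b\}$. Dually, the cut $\{b\}$ in $\Delta^-$ is crossed by a small, easily countable number of induced $P_2$'s.

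The concrete step for $S=\{b\}$ is to enumerate the induced $P_2$'s of $\Delta^-$ containing $b$. The only candidates are triples $\{b,x,y\}$, and I would case-split on where $x,y$ lie: if both are peripheral, the triple is a triangle (three edges, not a $P_2$); if $\{x,y\}=\{a,c\}$, there are no edges at all; and if exactly one of $x,y$ is $a$ or $c$ and the other is a periphery vertex $p$, the triple has exactly two edges and so is an induced $P_2$. This gives exactly $2(n-3)$ induced $P_2$'s containing $b$, i.e.\ $\OVal_{P_2,\Delta^-}(\{b\},V\setminus\{b\})=2(n-3)$. Combining with non-negativity of the weights on $\mathcal M_-$ and~\cref{eq:graphlet-main} gives
\begin{equation*}
  w(\mathcal M_{ab})+w(\mathcal M_{bc})\;\le\;\OVal_{P_2,\wh G}(\{b\},V\setminus\{b\})\;\le\;(1+\epsilon)\cdot 2(n-3)\;\le\;2n+2\epsilon n,
\end{equation*}
which is the first claimed inequality.

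The remaining two inequalities are symmetric: the exact same case analysis (with the roles of $a,b,c$ cyclically permuted) yields $2(n-3)$ induced $P_2$'s containing each of $a$ and $c$ in $\Delta^-$, and the motifs in $\mathcal M_{ca}\cup\mathcal M_{ab}$ all cross $\{a\}$, while those in $\mathcal M_{bc}\cup\mathcal M_{ca}$ all cross $\{c\}$. I do not expect any real obstacle: the heart of the argument is just the observation that the three pairwise unions in the claim correspond exactly to the sets of $\mathcal M_2$-motifs crossing each of the three singleton central cuts, together with the elementary count of induced $P_2$'s in $\Delta^-$ through a single central vertex.
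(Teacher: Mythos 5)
Your proposal is correct and takes essentially the same approach as the paper: both apply the sparsifier guarantee to a singleton cut at a central vertex, observe that exactly $2(n-3)$ induced $P_2$'s of $\Delta^-$ cross it, and use the fact that all motifs in the two relevant $\mathcal M_2$-subfamilies cross that cut in $\wh G$ (with any $\mathcal M_-$ contribution only helping by non-negativity). The only cosmetic difference is that you pair cut $\{b\}$ with the $\mathcal M_{ab}+\mathcal M_{bc}$ bound while the paper illustrates with $\{a\}$ and $\mathcal M_{ab}+\mathcal M_{ca}$, and you spell out the $2(n-3)$ count by explicit case analysis rather than asserting it.
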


\begin{proof}
Again, it suffices to look at~\cref{eq:graphlet-main} where $S$ is the singleton cut of a central vertex, say $a$. Such a cut in $\Delta^-$ contains exactly two thirds of the motifs, that is $2(n-3)$; in $\wh G$, this cut crosses all of $\mathcal M_{ab}$, all of $\mathcal M_{ca}$, none of $\mathcal M_{bc}$, and some subset of $\mathcal M_-$. Hence, $2(n-3)\cdot(1+\epsilon) \ge w(\mathcal M_{ab}) + w(\mathcal M_{ca})$. The other two claims hold by an identical argument.
\end{proof}

We now consider the behavior of peripheral vertices, that is vertices other than $a$, $b$, or $c$. We know that each peripheral vertex is contained in approximately $3$ motifs (by weight). In the original graph $\Delta^-$, each peripheral vertex contributed to each of the the categories $\mathcal M_{ab}$, $\mathcal M_{bc}$, and $\mathcal M_{ca}$ in equal measure. We show that the situation is approximately the same in $\wh G$. We say that a peripheral vertex $x$ contributes strongly to $\mathcal M_{ab}$ if a motif is supported on ${a, b, x}$ in $\wh G$, and it has motif-weight at least $1/2$. We define strong contribution analogously for $\mathcal M_{bc}$ and $\mathcal M_{ca}$.

\begin{claim}\label{claim:graphlet-strong-constribution}
    At least half of the peripheral vertices strongly contribute to \textbf{each of} $\mathcal M_{ab}$, $\mathcal M_{bc}$, and $\mathcal M_{ca}$.
\end{claim}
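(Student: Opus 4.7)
The plan is to handle each of $\mathcal{M}_{ab}$, $\mathcal{M}_{bc}$, and $\mathcal{M}_{ca}$ symmetrically, so I will describe the argument only for $\mathcal{M}_{bc}$. Let $X \subseteq V \setminus \{a,b,c\}$ denote the set of peripheral vertices failing to strongly contribute to $\mathcal{M}_{bc}$; that is, vertices $x$ for which the motif weight of the triple $\{b,c,x\}$ in $\wh G$ is strictly less than $1/2$, where this weight is taken to be $0$ if no induced $P_2$ is supported on $\{b,c,x\}$. My target is to prove $|X| \le (n-3)/2$.

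The key step is to apply~\cref{eq:graphlet-main} to the cut $S = \{b,c\} \cup X$ and compare the two sides. Whether a triple $\{u,v,w\}$ crosses the cut is a purely topological question, so in both $\Delta^-$ and $\wh G$ the contributions to the cut value decompose cleanly along the partition $\mathcal{M}_{ab} \cup \mathcal{M}_{bc} \cup \mathcal{M}_{ca} \cup \mathcal{M}_-$. In $\Delta^-$, every motif in $\mathcal{M}_{ab} \cup \mathcal{M}_{ca}$ crosses (since $a$ lies outside $S$ and one of $b,c$ lies inside), while a motif in $\mathcal{M}_{bc}$ crosses if and only if its peripheral vertex lies outside $X$. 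This gives
\[
\OVal_{P_2,\Delta^-}(S, V\setminus S) = 3(n-3) - |X|.
\]
In $\wh G$, the same topological reasoning implies that the $\mathcal{M}_2$-weight crossing the cut is at least $w(\mathcal{M}_{ab}) + w(\mathcal{M}_{ca}) + w(\mathcal{M}_{bc}) - |X|/2 = w(\mathcal{M}_2) - |X|/2$, using the fact that each excluded $y \in X$ contributes less than $1/2$ by definition; adding the non-negative $\mathcal{M}_-$ contribution only helps.

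Feeding this lower bound together with the sparsifier upper bound $(1+\epsilon)(3(n-3) - |X|)$ into~\cref{eq:graphlet-main} and rearranging produces an inequality of the form $|X| \cdot (\tfrac12 + \epsilon) \le 3(n-3) - w(\mathcal{M}_2) + 3\epsilon(n-3)$. To close the argument I need a lower bound $w(\mathcal{M}_2) \ge 3n - O(\epsilon n)$, which I will obtain by applying~\cref{eq:graphlet-main} to each of the three singleton cuts $\{a\}$, $\{b\}$, $\{c\}$: each such cut has value $2(n-3)$ in $\Delta^-$ and picks up in $\wh G$ exactly two of $w(\mathcal{M}_{ab})$, $w(\mathcal{M}_{bc})$, $w(\mathcal{M}_{ca})$ plus at most $w(\mathcal{M}_-)$ from $\mathcal{M}_-$. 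Summing the three resulting inequalities and invoking~\cref{claim:m-2-graphlets} to bound the $\mathcal{M}_-$ leak yields the desired lower bound on $w(\mathcal{M}_2)$.

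Substituting back gives $|X| = O(\epsilon n)$, which is comfortably below $(n-3)/2$ for $\epsilon \le 1/500$ and $n$ large enough, completing the claim. The main obstacle is not any conceptual step but rather careful bookkeeping of the $O(\epsilon n)$ slack accumulated across the several applications of the sparsifier condition; the crucial observation is that the coefficient of $|X|$ on the left of the key inequality is the absolute constant $1/2$, coming directly from the strong contribution threshold, while every error term on the right scales with $\epsilon n$, so a sufficiently small constant $\epsilon$ suffices.
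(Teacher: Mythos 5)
Your proof is correct, and it takes a genuinely different route than the paper's. The paper argues by contradiction: assume more than half the periphery fails, pigeon-hole to find a set $T$ with $|T|\ge(n-3)/6$ failing for a single pair (WLOG $\mathcal M_{ab}$), then analyze the cut $S=T\cup\{c\}$ — a cut containing exactly \emph{one} central vertex (the one outside the bad pair). There the \emph{lower} sparsifier bound at $S$ is combined with the already-proved \emph{upper} bound $w(\mathcal M_{bc})+w(\mathcal M_{ca})\le 2n+2\epsilon n$ from \cref{claim:graphlets-balance} and the bound on $w(\mathcal M_-)$ from \cref{claim:m-2-graphlets} to force a contradiction. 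You instead argue directly category by category, choose the cut $S=\{b,c\}\cup X$ containing \emph{both} vertices of the bad pair, and use the \emph{upper} sparsifier bound at $S$ together with a \emph{lower} bound $w(\mathcal M_2)\ge 3(n-3)-O(\epsilon n)$, which you correctly extract from the three singleton cuts plus \cref{claim:m-2-graphlets} (this lower bound is not one of the paper's claims, but your derivation is sound). This gives $|X|=O(\epsilon n)$ per pair, and the claim follows by a union bound over the three pairs. Your approach avoids the pigeon-hole/contradiction framing and produces a quantitatively stronger per-pair bound; the paper's approach directly reuses \cref{claim:graphlets-balance} rather than deriving a new lower bound, so neither is strictly simpler — they are genuinely distinct cuts doing analogous work.

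One small exposition nit: you state the per-pair target as $|X|\le(n-3)/2$. That threshold, achieved for each of the three pairs separately, would \emph{not} imply the claim by a union bound (the three failure sets could then cover the whole periphery). You should state the target as $|X|\le(n-3)/6$ (or simply $|X|=O(\epsilon n)$), and explicitly spell out the union bound at the end; the bound $|X|=O(\epsilon n)$ you actually obtain is strong enough, so the argument is sound once this is made explicit.
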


\begin{proof}
    Suppose for contradiction that this is not the case, and there are at least $(n - 3)/2$ vertices which \emph{do not} contribute strongly to at least one of the categories. By the pigeon-hole principle, at least $(n-3)/6$ vertices do not contribute strongly to a specific one of these categories - without loss of generality, we may assume that this is $\mathcal M_{ab}$. That is, there is a set $T$ of peripheral vertices where $|T|\ge(n-3)/6$ and no $x\in T$ contributes strongly to $\mathcal M_{ab}$.
    
    We now consider~\cref{eq:graphlet-main} for the cut $S=T\cup\{c\}$. Consider this cut in $\Delta^-$: It crosses all motifs containing $c$, but of the motifs containing $a$ and $b$, it crosses only $|T|$ of them. Hence it has a total size of $2(n-3) + |T|$. Now, consider this cut in $\wh G$: It crosses all motifs in $\mathcal M_{bc}$ and $\mathcal M_{ca}$, as well as some subset of the motifs in $\mathcal M_-$. By definition of $T$, motifs in $\mathcal M_{ab}$ contribute only at most $|T|/2$ to this cut. Therefore
$$
    (1-\epsilon)\cdot\big(2(n-3)+|T|\big)\le w(\mathcal M_{bc}) + w(\mathcal M_{ca}) + w(\mathcal M_-) + |T|/2.
$$
Applying~\cref{claim:graphlets-balance} and~\cref{claim:m-2-graphlets} we get that
$$(1-\epsilon)\cdot(2(n-3) + |T|)\le(2n + 2\epsilon n) + 21\epsilon n + |T|/2.$$
Hence, $(1/2-\epsilon)\cdot|T|\le26\epsilon n$, which contradicts our assumption that $|T|\ge(n-3)/6$ since $\epsilon <1/500$.
\end{proof}

Thus we have at least $(n-3)/2$ peripheral vertices which strongly contribute to all three of $\mathcal M_{ab}$, $\mathcal M_{bc}$, and $\mathcal M_{ca}$. We show that any such vertex must be strongly connected to at least one of the central vertices (that is connected by an edge of weight at least $1/\sqrt{2}$).

\begin{claim}\label{claim:graphlet-strong-connection}
Suppose $x\in V\setminus\{a, b, c\}$ strongly contributes to each of $\mathcal M_{ab}$, $\mathcal M_{bc}$, and $\mathcal M_{ca}$. Then at least one of $\{a,x\}$, $\{b, x\}$, or $\{c,x\}$ exists in $\wh E$ with weight at least $1/\sqrt{2}$.
\end{claim}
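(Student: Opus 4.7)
The plan is to argue by contradiction. Suppose for contradiction that all three quantities $w(\{a,x\}), w(\{b,x\}), w(\{c,x\})$ are strictly less than $1/\sqrt{2}$, where the weight of an edge absent from $\wh E$ is treated as $0$. I will derive a contradiction from the three strong-contribution hypotheses together with the structural constraints imposed by the fact that the motif is an \emph{induced} $P_2$.

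First I would unpack what strong contribution to $\mathcal M_{ab}$ entails. Since the motif is an induced $P_2$, the induced subgraph of $\wh G$ on $\{a,b,x\}$ must contain \emph{exactly} two of the three possible edges $\{a,b\}, \{a,x\}, \{b,x\}$ (each with positive weight), and the motif weight equals the product of those two edge weights. If the two present edges were $\{a,x\}$ and $\{b,x\}$ (equivalently, if $\{a,b\}$ were absent), their product would be strictly less than $\tfrac{1}{\sqrt{2}} \cdot \tfrac{1}{\sqrt{2}} = \tfrac{1}{2}$, contradicting the definition of strong contribution. Hence $\{a,b\}$ must be one of the two present edges, which forces \emph{exactly one} of $w(\{a,x\})$ and $w(\{b,x\})$ to equal $0$ (the other being strictly positive). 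Repeating this reasoning for $\mathcal M_{bc}$ and $\mathcal M_{ca}$ yields the analogous conclusion for each of the pairs $\{w(\{b,x\}), w(\{c,x\})\}$ and $\{w(\{c,x\}), w(\{a,x\})\}$.

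The final step is a simple parity argument. Let $A, B, C \in \{0,1\}$ be the indicators of the events $w(\{a,x\})=0$, $w(\{b,x\})=0$, and $w(\{c,x\})=0$ respectively. The deductions of the previous paragraph translate to $A + B = 1$, $B + C = 1$, and $C + A = 1$. Summing these three equalities gives $A + B + C = 3/2$, which is impossible since $A, B, C$ are integers. This contradiction proves that at least one of $w(\{a,x\}), w(\{b,x\}), w(\{c,x\})$ must be at least $1/\sqrt{2}$.

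The only real obstacle here is correctly characterizing an induced $P_2$ on a triple of vertices: exactly one of the three possible edges must be absent, and the motif weight equals the product of the two remaining edge weights. Once this structural fact is noted, the threshold $1/\sqrt{2}$ is seen to be the exact value for which two ``peripheral'' edges together cannot reach the strong-contribution threshold of $1/2$, forcing the corresponding central edge to participate in every strongly contributing motif, and the parity obstruction is then immediate.
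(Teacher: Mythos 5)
Your proof is correct, and it takes a genuinely different route from the paper's. The paper argues directly: since $\{a,b,x\}$, $\{b,c,x\}$, $\{c,a,x\}$ each support an induced $P_2$, $x$ must be adjacent to at least one vertex of each pair, hence (by pigeonhole) to at least two of $a,b,c$, say $a$ and $b$; then on $\{a,b,x\}$ the induced $P_2$ must be $a\text{--}x\text{--}b$ (the edge $\{a,b\}$ is forced out since both $x$-edges are present), so $w(\{a,x\})\cdot w(\{b,x\})\ge 1/2$ and one factor is $\ge 1/\sqrt{2}$. You instead argue by contradiction: assuming all three $x$-edges have weight $<1/\sqrt{2}$, each triple must use the central edge (the two $x$-legs multiply to $<1/2$), so for each pair exactly one of the two $x$-legs is present, and the indicator system $A+B=B+C=C+A=1$ has no solution over $\{0,1\}$. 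Both arguments hinge on the same structural fact — an induced $P_2$ on a triple has exactly two edges, so two present $x$-legs kill the central edge, while one absent $x$-leg requires the central edge — but you push the case analysis into a clean parity obstruction rather than a WLOG pigeonhole, which is perhaps slightly more symmetric but otherwise equivalent in length and difficulty. One small stylistic remark: your contradiction hypothesis is used only pair-at-a-time, so you could equally phrase this as a direct argument (if no $x$-edge reaches $1/\sqrt{2}$, the parity constraints are inconsistent), but the logic is sound as written.
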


\begin{proof}
By assumption, each of $\{a, b, x\}$, $\{b, c, x\}$ and $\{c, a, x\}$ is the support of an induced $2$-path motif. Therefore, in $\wh G$, $x$ must be connected to $a$ or $b$, as well as $b$ or $c$, as well as $c$ or $a$. Overall, $x$ is connected to at least two of the central vertices -- without loss of generality we may assume that these are $a$ and $b$. We know that $\{a, b, x\}$ is the support of a motif in $\wh G$ -- we now know that this motif must be $a-x-b$, that is $\{a, b\}\not\in\wh E$. We further know that the weight of the $a-x-b$ motif, that is $w(\{a,x\})\cdot w(\{b,x\})\ge1/2$. Hence, at least one of these weights is at least $1/\sqrt{2}$, as claimed.
\end{proof}

Finally, we finish the proof of~\cref{thm:graphlet-main}, by showing that there are a large number of \emph{not-necessarily-induced} $2$-paths in $\wh G$, each of weight at least $1/2$. By~\cref{claim:graphlet-strong-connection} and~\cref{claim:graphlet-strong-constribution} at least $(n-3)/2$ peripheral vertices are strongly connected to a central vertex. By the pigeon-hole principle, at least $(n-3)/6$ peripheral vertices are strongly connected to one specific central vertex; we may assume without loss of generality that this is $a$.

Let the set of peripheral vertices strongly connected to $a$ be $A\subseteq V\setminus\{a, b, c\}$ (where we know that $|A|\ge(n-3)/6$). For any pair of distinct vertices $x, y\in A$, $x-a-y$ constitutes a $2$-path in $\wh G$ of weight at least $1/2$. A $2$-path like this is not necessarily induced, however, for it not to be induced, $\{x,y\}$ must be in $\wh E$.

Suppose for contradiction that $|\wh E|\le n^2/200$. Then, of all the $2$-paths in $A\times \{a\}\times A$, at least
\[\binom{|A|}2-\frac{n^2}{200}\ge\frac{n-3}6 \cdot\left(\frac{n-3}6-1\right) \cdot \frac{1}2-\frac{n^2}{200}\ge \frac{n^2}{100}\]
of them are actually induced, and therefore count as motifs. These motifs contribute to $\mathcal M_1$, and therefore the total weight of $\mathcal M_1$ is at least $1/2\cdot n^2/100$ contradicting~\cref{claim:m-2-graphlets}.

This shows that $|\wh E|$ is at least $n^2/200$, concluding the proof.
\end{proof}

\section*{Acknowledgments}
Mikhail Makarov and Jakab Tardos are supported  by ERC Starting Grant 759471. Michael Kapralov is supported in part by ERC Starting Grant 759471. Sandeep Silwal is supported by an NSF Graduate Research Fellowship under Grant No. 1745302, NSF TRIPODS program (award DMS-2022448), and Simons Investigator Award.

\bibliographystyle{alpha}
\bibliography{main}

\newcommand{\etalchar}[1]{$^{#1}$}
\begin{thebibliography}{YMDD{\etalchar{+}}14}

\bibitem[ACK19]{assadi2019}
Sepehr Assadi, Yu~Chen, and Sanjeev Khanna.
\newblock Polynomial pass lower bounds for graph streaming algorithms.
\newblock {\em Proceedings of the 51st Annual ACM SIGACT Symposium on Theory of
  Computing}, 2019.

\bibitem[AD21]{assadi2021}
Sepehr Assadi and Aditi Dudeja.
\newblock A simple semi-streaming algorithm for global minimum cuts.
\newblock In {\em SOSA}, 2021.

\bibitem[ADH{\etalchar{+}}08]{alon2008}
Noga Alon, Phuong Dao, Iman Hajirasouliha, Fereydoun Hormozdiari, and
  S{\"u}leyman~Cenk Sahinalp.
\newblock Biomolecular network motif counting and discovery by color coding.
\newblock {\em Bioinformatics}, 24:i241 -- i249, 2008.

\bibitem[AG09]{ahn2009}
Kook~Jin Ahn and Sudipto Guha.
\newblock Graph sparsification in the semi-streaming model.
\newblock In {\em International Colloquium on Automata, Languages, and
  Programming}, pages 328--338. Springer, 2009.

\bibitem[AHT20]{arrigo2020}
Francesca Arrigo, Desmond~J. Higham, and Francesco Tudisco.
\newblock A framework for second-order eigenvector centralities and clustering
  coefficients.
\newblock {\em Proceedings of the Royal Society A}, 476, 2020.

\bibitem[AKL{\etalchar{+}}21]{abboud2021gomory}
Amir Abboud, Robert Krauthgamer, Jason Li, Debmalya Panigrahi, Thatchaphol
  Saranurak, and Ohad Trabelsi.
\newblock Gomory-hu tree in subcubic time.
\newblock {\em arXiv preprint arXiv:2111.04958}, 2021.

\bibitem[AKT21]{abboud2021subcubic}
Amir Abboud, Robert Krauthgamer, and Ohad Trabelsi.
\newblock Subcubic algorithms for gomory--hu tree in unweighted graphs.
\newblock In {\em Proceedings of the 53rd Annual ACM SIGACT Symposium on Theory
  of Computing}, pages 1725--1737, 2021.

\bibitem[ANR{\etalchar{+}}16]{ahmed2016}
Nesreen Ahmed, Jennifer Neville, Ryan~A. Rossi, Nick~G. Duffield, and
  Theodore~L. Willke.
\newblock Graphlet decomposition: framework, algorithms, and applications.
\newblock {\em Knowledge and Information Systems}, 50:689--722, 2016.

\bibitem[AW21]{alman2021refined}
Josh Alman and Virginia~Vassilevska Williams.
\newblock A refined laser method and faster matrix multiplication.
\newblock In {\em Proceedings of the 2021 ACM-SIAM Symposium on Discrete
  Algorithms (SODA)}, pages 522--539. SIAM, 2021.

\bibitem[AYZ95]{alon1995color}
Noga Alon, Raphael Yuster, and Uri Zwick.
\newblock Color-coding.
\newblock {\em Journal of the ACM (JACM)}, 42(4):844--856, 1995.

\bibitem[Azu67]{azuma1967weighted}
Kazuoki Azuma.
\newblock Weighted sums of certain dependent random variables.
\newblock {\em Tohoku Mathematical Journal, Second Series}, 19(3):357--367,
  1967.

\bibitem[BAS{\etalchar{+}}18]{benson2018}
Austin~R. Benson, Rediet Abebe, Michael~T. Schaub, Ali Jadbabaie, and Jon~M.
  Kleinberg.
\newblock Simplicial closure and higher-order link prediction.
\newblock {\em Proceedings of the National Academy of Sciences}, 115:E11221 --
  E11230, 2018.

\bibitem[Ben19]{benson2019_hypergraph}
Austin~R Benson.
\newblock Three hypergraph eigenvector centralities.
\newblock {\em SIAM Journal on Mathematics of Data Science}, 1(2):293--312,
  2019.

\bibitem[BGL16]{benson2016}
Austin~R. Benson, David~F. Gleich, and Jure Leskovec.
\newblock Higher-order organization of complex networks.
\newblock {\em Science}, 353(6295):163--166, 2016.

\bibitem[BK96]{benczr1996}
Andr{\'a}s~A. Bencz{\'u}r and David~R. Karger.
\newblock Approximating s-t minimum cuts in {\~o}(n2) time.
\newblock In {\em STOC '96}, 1996.

\bibitem[BK15]{benczur_randomized_2015}
András~A. Benczúr and David~R. Karger.
\newblock Randomized {Approximation} {Schemes} for {Cuts} and {Flows} in
  {Capacitated} {Graphs}.
\newblock {\em SIAM Journal on Computing}, 44(2):290--319, January 2015.

\bibitem[BOV13]{braverman2013}
Vladimir Braverman, Rafail Ostrovsky, and Dan Vilenchik.
\newblock How hard is counting triangles in the streaming model?
\newblock In Fedor~V. Fomin, R{\={u}}si{\c{n}}{\v{s}} Freivalds, Marta
  Kwiatkowska, and David Peleg, editors, {\em Automata, Languages, and
  Programming}, pages 244--254, Berlin, Heidelberg, 2013. Springer Berlin
  Heidelberg.

\bibitem[BR21]{bressan2021_2}
Marco Bressan and Mark Roth.
\newblock Exact and approximate pattern counting in degenerate graphs: New
  algorithms, hardness results, and complexity dichotomies.
\newblock {\em 2021 IEEE 62nd Annual Symposium on Foundations of Computer
  Science}, 2021.

\bibitem[Bre21]{bressan2021}
Marco Bressan.
\newblock Efficient and near-optimal algorithms for sampling connected
  subgraphs.
\newblock {\em Proceedings of the 53rd Annual ACM SIGACT Symposium on Theory of
  Computing}, 2021.

\bibitem[CKN20]{DBLP:journals/corr/abs-2009-04992}
Yu~Chen, Sanjeev Khanna, and Ansh Nagda.
\newblock Near-linear size hypergraph cut sparsifiers.
\newblock In {\em 2020 IEEE 61st Annual Symposium on Foundations of Computer
  Science (FOCS)}, pages 61--72. IEEE, 2020.

\bibitem[CN85]{chiba1985arboricity}
Norishige Chiba and Takao Nishizeki.
\newblock Arboricity and subgraph listing algorithms.
\newblock {\em SIAM Journal on computing}, 14(1):210--223, 1985.

\bibitem[CX18]{chekuri2018minimum}
Chandra Chekuri and Chao Xu.
\newblock Minimum cuts and sparsification in hypergraphs.
\newblock {\em SIAM Journal on Computing}, 47(6):2118--2156, 2018.

\bibitem[EKF20]{eswaran2020}
D.~Eswaran, Srijan Kumar, and C.~Faloutsos.
\newblock Higher-order label homogeneity and spreading in graphs.
\newblock {\em Proceedings of The Web Conference 2020}, 2020.

\bibitem[FHHP19]{fung2019general}
Wai-Shing Fung, Ramesh Hariharan, Nicholas~JA Harvey, and Debmalya Panigrahi.
\newblock A general framework for graph sparsification.
\newblock {\em SIAM Journal on Computing}, 48(4):1196--1223, 2019.

\bibitem[Kar72]{karp1972reducibility}
Richard~M Karp.
\newblock Reducibility among combinatorial problems.
\newblock In {\em Complexity of computer computations}, pages 85--103.
  Springer, 1972.

\bibitem[Kar99]{karger1999random}
David~R Karger.
\newblock Random sampling in cut, flow, and network design problems.
\newblock {\em Mathematics of Operations Research}, 24(2):383--413, 1999.

\bibitem[KK15a]{kogan2015sketching}
Dmitry Kogan and Robert Krauthgamer.
\newblock Sketching cuts in graphs and hypergraphs.
\newblock In {\em Proceedings of the 2015 Conference on Innovations in
  Theoretical Computer Science}, pages 367--376, 2015.

\bibitem[KK15b]{DBLP:journals/corr/KoganK14}
Dmitry Kogan and Robert Krauthgamer.
\newblock Sketching cuts in graphs and hypergraphs.
\newblock In {\em Proceedings of the 2015 Conference on Innovations in
  Theoretical Computer Science}, pages 367--376, 2015.

\bibitem[KKTY21]{kapralov2021}
Michael Kapralov, Robert Krauthgamer, Jakab Tardos, and Yuichi Yoshida.
\newblock Towards tight bounds for spectral sparsification of hypergraphs.
\newblock In {\em Proceedings of the 53rd Annual ACM SIGACT Symposium on Theory
  of Computing}, STOC 2021, page 598–611, New York, NY, USA, 2021.
  Association for Computing Machinery.

\bibitem[KLM{\etalchar{+}}14]{kapralov2014}
Mikhail Kapralov, Yin~Tat Lee, Cameron Musco, Christopher Musco, and Aaron
  Sidford.
\newblock Single pass spectral sparsification in dynamic streams.
\newblock {\em 2014 IEEE 55th Annual Symposium on Foundations of Computer
  Science}, pages 561--570, 2014.

\bibitem[LCM19]{lipan2019}
Pan Li, Eli Chien, and Olgica Milenkovic.
\newblock Optimizing generalized pagerank methods for seed-expansion community
  detection.
\newblock In {\em NeurIPS}, 2019.

\bibitem[LDL{\etalchar{+}}22]{liu2021}
Songtao Liu, Hanze Dong, Lanqing Li, Tingyang Xu, Yu~Rong, Peilin Zhao, Junzhou
  Huang, and Dinghao Wu.
\newblock Local augmentation for graph neural networks, 2022.

\bibitem[LDPM17]{li2017}
Pan Li, Hoang Dau, Gregory~J. Puleo, and Olgica Milenkovic.
\newblock Motif clustering and overlapping clustering for social network
  analysis.
\newblock {\em IEEE INFOCOM 2017 - IEEE Conference on Computer Communications},
  pages 1--9, 2017.

\bibitem[LM17]{lipan2017}
Pan Li and Olgica Milenkovic.
\newblock Inhomogeneous hypergraph clustering with applications.
\newblock In I.~Guyon, U.~V. Luxburg, S.~Bengio, H.~Wallach, R.~Fergus,
  S.~Vishwanathan, and R.~Garnett, editors, {\em Advances in Neural Information
  Processing Systems}, volume~30. Curran Associates, Inc., 2017.

\bibitem[MA03]{mangan2003}
Shmoolik Mangan and Uri Alon.
\newblock Structure and function of the feed-forward loop network motif.
\newblock {\em Proceedings of the National Academy of Sciences of the United
  States of America}, 100:11980 -- 11985, 2003.

\bibitem[MN20]{mukhopadhyay2020}
Sagnik Mukhopadhyay and Danupon Nanongkai.
\newblock Weighted min-cut: sequential, cut-query, and streaming algorithms.
\newblock {\em Proceedings of the 52nd Annual ACM SIGACT Symposium on Theory of
  Computing}, 2020.

\bibitem[MSOI{\etalchar{+}}02]{milo2002}
Ron Milo, Shai~S. Shen-Orr, Shalev Itzkovitz, Nadav Kashtan, Dmitri~B.
  Chklovskii, and Uri Alon.
\newblock Network motifs: simple building blocks of complex networks.
\newblock {\em Science}, 298 5594:824--7, 2002.

\bibitem[NKJ{\etalchar{+}}20]{nassar2020}
Huda Nassar, Caitlin Kennedy, Shweta Jain, Austin~R. Benson, and David Gleich.
\newblock Using cliques with higher-order spectral embeddings improves graph
  visualizations.
\newblock In {\em Proceedings of The Web Conference 2020}, WWW '20, page
  2927–2933, New York, NY, USA, 2020. Association for Computing Machinery.

\bibitem[NR13]{newman2013}
Ilan Newman and Yuri Rabinovich.
\newblock On multiplicative lambda-approximations and some geometric
  applications.
\newblock {\em SIAM J. Comput.}, 42:855--883, 2013.

\bibitem[PBL17]{paranjape2017}
Ashwin Paranjape, Austin~R. Benson, and Jure Leskovec.
\newblock Motifs in temporal networks.
\newblock {\em Proceedings of the Tenth ACM International Conference on Web
  Search and Data Mining}, 2017.

\bibitem[RAK18]{rossi2018}
Ryan~A. Rossi, Nesreen Ahmed, and Eunyee Koh.
\newblock Higher-order network representation learning.
\newblock {\em Companion Proceedings of the The Web Conference 2018}, 2018.

\bibitem[RPS{\etalchar{+}}21]{ribeiro2021survey}
Pedro Ribeiro, Pedro Paredes, Miguel~EP Silva, David Aparicio, and Fernando
  Silva.
\newblock A survey on subgraph counting: concepts, algorithms, and applications
  to network motifs and graphlets.
\newblock {\em ACM Computing Surveys (CSUR)}, 54(2):1--36, 2021.

\bibitem[RRK{\etalchar{+}}20]{rossi2020}
Ryan~A. Rossi, Anup Rao, Sungchul Kim, Eunyee Koh, and Nesreen Ahmed.
\newblock From closing triangles to closing higher-order motifs.
\newblock In {\em Companion Proceedings of the Web Conference 2020}, WWW '20,
  page 42–43, New York, NY, USA, 2020. Association for Computing Machinery.

\bibitem[RSW18]{rubinstein2018}
Aviad Rubinstein, Tselil Schramm, and S~Matthew Weinberg.
\newblock Computing exact minimum cuts without knowing the graph.
\newblock In {\em 9th Innovations in Theoretical Computer Science, ITCS 2018},
  page~39. Schloss Dagstuhl-Leibniz-Zentrum fur Informatik GmbH, Dagstuhl
  Publishing, 2018.

\bibitem[SPR11]{satuluri2011}
Venu Satuluri, Srinivasan Parthasarathy, and Yiye Ruan.
\newblock Local graph sparsification for scalable clustering.
\newblock In {\em Proceedings of the 2011 ACM SIGMOD International Conference
  on Management of Data}, SIGMOD '11, page 721–732, New York, NY, USA, 2011.
  Association for Computing Machinery.

\bibitem[SSSG20]{seshadhri2020}
C.~Seshadhri, Aneesh Sharma, Andrew Stolman, and Ashish Goel.
\newblock The impossibility of low-rank representations for triangle-rich
  complex networks.
\newblock {\em Proceedings of the National Academy of Sciences},
  117(11):5631--5637, 2020.

\bibitem[ST11]{DBLP:journals/siamcomp/SpielmanT11}
Daniel~A. Spielman and Shang{-}Hua Teng.
\newblock Spectral sparsification of graphs.
\newblock {\em {SIAM} J. Comput.}, 40(4):981--1025, 2011.

\bibitem[ST19]{seshadhri_tutorial}
Comandur Seshadhri and Srikanta Tirthapura.
\newblock Scalable subgraph counting: The methods behind the madness.
\newblock In {\em Companion Proceedings of The 2019 World Wide Web Conference},
  WWW '19, page 1317–1318, New York, NY, USA, 2019. Association for Computing
  Machinery.

\bibitem[ST21]{sotiropoulos2021}
Konstantinos Sotiropoulos and Charalampos~E. Tsourakakis.
\newblock Triangle-aware spectral sparsifiers and community detection.
\newblock In {\em Proceedings of the 27th ACM SIGKDD Conference on Knowledge
  Discovery and Data Mining}, KDD '21, page 1501–1509, New York, NY, USA,
  2021. Association for Computing Machinery.

\bibitem[SY19]{soma2019}
Tasuku Soma and Yuichi Yoshida.
\newblock Spectral sparsification of hypergraphs.
\newblock In {\em Proceedings of the Thirtieth Annual ACM-SIAM Symposium on
  Discrete Algorithms}, pages 2570--2581. SIAM, 2019.

\bibitem[TBP21]{tudisco2021}
Francesco Tudisco, Austin~R. Benson, and Konstantin Prokopchik.
\newblock Nonlinear higher-order label spreading.
\newblock {\em Proceedings of the Web Conference 2021}, 2021.

\bibitem[TKM11]{tsourakakis2011}
Charalampos~E. Tsourakakis, Mihail~N. Kolountzakis, and Gary~L. Miller.
\newblock Triangle sparsifiers.
\newblock {\em J. Graph Algorithms Appl.}, 15:703--726, 2011.

\bibitem[TPM17]{tsourakakis2017}
Charalampos~E. Tsourakakis, Jakub Pachocki, and Michael Mitzenmacher.
\newblock Scalable motif-aware graph clustering.
\newblock In {\em Proceedings of the 26th International Conference on World
  Wide Web}, WWW '17, page 1451–1460, Republic and Canton of Geneva, CHE,
  2017. International World Wide Web Conferences Steering Committee.

\bibitem[WBQH11]{wong2011}
Elisabeth Wong, Brittany Baur, Saad Quader, and Chun-Hsi Huang.
\newblock {Biological network motif detection: principles and practice}.
\newblock {\em Briefings in Bioinformatics}, 13(2):202--215, 06 2011.

\bibitem[WF07]{wasserman2007}
Stanley Wasserman and Katherine Faust.
\newblock Social network analysis - methods and applications.
\newblock In {\em Structural analysis in the social sciences}, 2007.

\bibitem[WW13]{williams2013finding}
Virginia~Vassilevska Williams and Ryan Williams.
\newblock Finding, minimizing, and counting weighted subgraphs.
\newblock {\em SIAM Journal on Computing}, 42(3):831--854, 2013.

\bibitem[XHLJ19]{xu2019}
Keyulu Xu, Weihua Hu, Jure Leskovec, and Stefanie Jegelka.
\newblock How powerful are graph neural networks?
\newblock In {\em International Conference on Learning Representations}, 2019.

\bibitem[YBLG17]{hao2017}
Hao Yin, Austin~R. Benson, Jure Leskovec, and David~F. Gleich.
\newblock Local higher-order graph clustering.
\newblock In {\em Proceedings of the 23rd ACM SIGKDD International Conference
  on Knowledge Discovery and Data Mining}, KDD '17, page 555–564, New York,
  NY, USA, 2017. Association for Computing Machinery.

\bibitem[YMDD{\etalchar{+}}14]{yaverolu2014}
{\"O}mer~Nebil Yaveroğlu, No{\"e}l Malod-Dognin, Darren~R. Davis, Zoran
  Levnajic, Vuk Janjic, Rasa Karapandza, Aleksandar Stojmirovi{\'c}, and Natasa
  Przulj.
\newblock Revealing the hidden language of complex networks.
\newblock {\em Scientific Reports}, 4, 2014.

\bibitem[ZCW{\etalchar{+}}18]{zhang2018}
Ziwei Zhang, Peng Cui, Xiao Wang, Jian Pei, Xuanrong Yao, and Wenwu Zhu.
\newblock Arbitrary-order proximity preserved network embedding.
\newblock In {\em Proceedings of the 24th ACM SIGKDD International Conference
  on Knowledge Discovery and Data Mining}, KDD '18, page 2778–2786, New York,
  NY, USA, 2018. Association for Computing Machinery.

\bibitem[ZLN{\etalchar{+}}21]{zhao2021}
Tong Zhao, Yozen Liu, Leonardo Neves, Oliver~J. Woodford, Meng Jiang, and Neil
  Shah.
\newblock Data augmentation for graph neural networks.
\newblock In {\em AAAI}, 2021.

\end{thebibliography}

\appendix

\section{Auxillary Lemmas}
\label{appendix}

\begin{lemma}
\label{lem:exp_bounds}
Let $x, a, b \in \mathbb{R}$, $0 < a/x < 1$, $b > 0$, $ab < 1$, and $x \ge 1$. Then
\[
    \left(1 + \frac{a}{x} \right)^{bx} \leq e^{ab} \leq 1 + 2ab,
\]
\[
    \left(1 - \frac{a}{x} \right)^{bx} \geq  1 - ab.
\]
\end{lemma}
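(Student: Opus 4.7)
The plan is to prove the three inequalities separately, each using a standard analytic tool.

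For the first inequality $(1+a/x)^{bx}\le e^{ab}$, I would apply the elementary bound $1+y\le e^y$ (which holds for every real $y$) with $y=a/x$, and then raise both sides to the positive power $bx$. This gives $(1+a/x)^{bx}\le e^{bx\cdot a/x}=e^{ab}$ directly, with no additional hypothesis needed beyond $bx>0$.

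For the second inequality $e^{ab}\le 1+2ab$, I would set $y=ab\in(0,1)$ and show that $g(y):=1+2y-e^y\ge 0$ on $[0,1]$. The boundary values are $g(0)=0$ and $g(1)=3-e>0$. Since $g''(y)=-e^y<0$, the function $g$ is concave on $[0,1]$, so it lies above the chord joining its endpoints; in particular $g(y)\ge 0$ throughout $[0,1]$, which yields the claim.

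For the third inequality $(1-a/x)^{bx}\ge 1-ab$, the natural tool is Bernoulli's inequality in the form $(1+z)^r\ge 1+rz$, valid for $z\ge -1$ and $r\ge 1$. Setting $z=-a/x\in(-1,0)$ and $r=bx$ then yields exactly $(1-a/x)^{bx}\ge 1-ab$.

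The main obstacle is precisely this last step: Bernoulli's inequality requires $r=bx\ge 1$, but the stated hypotheses ($x\ge 1$, $b>0$, $ab<1$, $0<a/x<1$) do not quite force this. For instance, taking $a=b=1/2$ and $x=1$ gives $(1/2)^{1/2}\approx 0.707$ versus $1-ab=0.75$, so the bound does not hold for arbitrary $b\in(0,1/x)$. In every invocation of the lemma in the paper, however, the exponent $bx$ is an iteration count (namely $l=\lceil 2c_1 r^*_{\max}\log n\rceil$ in \cref{lem:single_motif_spars}) which is manifestly $\ge 1$, so the Bernoulli step applies without difficulty. The cleanest presentation is therefore to either add the assumption $bx\ge 1$ to the hypotheses or to verify it at each use site; with that in hand, Bernoulli's inequality closes the proof in one line.
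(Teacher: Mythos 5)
You are right that the third inequality fails under the stated hypotheses: your counterexample $a=b=1/2$, $x=1$ gives $(1-a/x)^{bx}=2^{-1/2}\approx0.707<3/4=1-ab$, even though all four stated conditions hold. The paper's own proof contains exactly this bug: it invokes Bernoulli's inequality in the form ``$(1+t)^r\ge 1+tr$ for any $t\ge-1$ and $r\ge 0$,'' but the inequality in that direction requires $r\ge 1$ (or $r\le 0$); for $r\in(0,1)$ it reverses. Your proposed repair of adding $bx\ge 1$ to the hypotheses is the right one, and you are also correct that the only place the lemma is used (\cref{lem:single_motif_spars}) has $bx$ equal to the iteration count $l=\lceil 2c_1 r^*_{\max}\log n\rceil\ge 1$, so the downstream results are unaffected once the hypothesis is added.

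On the first two inequalities your argument matches the paper's closely: $(1+a/x)^{bx}\le e^{ab}$ via $1+y\le e^y$ raised to the positive power $bx$ is exactly the paper's step, and for $e^{ab}\le 1+2ab$ the paper appeals to ``series expansion of $e^t$'' to get $e^t\le1+2t$ on $[0,1]$, whereas you argue by concavity of $g(y)=1+2y-e^y$ on $[0,1]$ together with $g(0)=0$ and $g(1)=3-e>0$. Both are sound and essentially interchangeable; yours avoids reasoning about the tail of the exponential series.
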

\begin{proof}
The first set of inequalities follows from the fact that $1+t \le e^t$ for any $t \in \R$ and $e^t \le 1+2t$ for $t \in [0,1]$, both of which follow from series expansion of $e^t$. The second inequality follows from the fact that $(1+t)^r \ge 1+tr$ for any $t \ge -1$ and $r \ge 0$.
\end{proof}

\end{document}